\theoremstyle{acmdefinition}
\newtheorem{remark}[theorem]{Remark}}
\numberwithin{equation}{section}
\DeclareMathOperator{\tr}{tr}
\DeclarePairedDelimiter\abs{\lvert}{\rvert}
\DeclarePairedDelimiter\ceil{\lceil}{\rceil}
\DeclarePairedDelimiter\parens{\lparen}{\rparen}
\DeclarePairedDelimiter\braces{\lbrace}{\rbrace}
\DeclarePairedDelimiter\bracks{\lbrack}{\rbrack}
\DeclarePairedDelimiter\Bracks{\lBrack}{\rBrack}
\newcommand{\ketbra}[2]{\ket{#1} \!\! \bra{#2}}
\newcommand{\proj}[1]{\ket{#1} \!\! \bra{#1}}
\newcommand{\ot}{\otimes}
\newcommand{\eps}{\varepsilon}
\renewcommand{\vec}[1]{\boldsymbol{#1}}
\newcommand{\q}{{\vec{q}}}
\newcommand{\cH}{\mathcal H}
\newcommand{\cE}{\mathcal E}
\newcommand{\cP}{\mathcal P}
\newcommand{\cM}{\mathcal M}
\newcommand{\cI}{\mathcal I}
\newcommand{\cS}{\mathcal S}
\newcommand{\cO}{\mathcal O}
\newcommand{\cB}{\mathcal B}
\newcommand{\R}{\mathbb R}
\newcommand{\C}{\mathbb C}
\newcommand{\N}{\mathbb N}
\newcommand{\irange}[2]{\{#1, \ldots, #2\}}
\newcommand{\cPHle}[1][\cH]{\cP_{\leq 1}(#1)}
\newcommand{\dH}[1][\cH]{\mathcal{D}(#1)}
\newcommand{\dHle}[1][\cH]{\mathcal{D}_{\leq 1}(#1)}
\newcommand{\qVars}{\vec{\mathsf{qVars}}}
\newcommand{\hoare}[3]{\braces*{#1} #2 \braces*{#3}}
\newcommand{\parcorr}{\models_\text{par}}
\newcommand{\totcorr}{\models_\text{tot}}
\newcommand{\satisfies}[2]{\mathbb{E}_{#1}\parens*{#2}}
\newcommand{\imp}{\Rightarrow}
\newcommand{\QFT}{\mathsf{QFT}}
\newcommand{\vecr}{\vec{r}}
\newcommand{\tilx}{\tilde{x}}
\newcommand{\tily}{\tilde{y}}
\newcommand{\ctprog}{\ensuremath{A_\text{toss-until-zero}}}
\newcommand{\Spec}{S}
\newcommand{\SWAP}{\mathsf{SWAP}}
\newcommand{\CRz}{\mathsf{CRz}}
\newcommand{\Rz}{\mathsf{Rz}}
\colorlet{qpbgcol}{gray!15}
\newtcbox{\qpouterbox}{on line,
  breakable,
  colback=qpbgcol,
  colframe=qpbgcol,
  size=fbox,
  arc=2pt,
  boxsep=1pt,
  left=1pt,right=1pt,top=0pt,bottom=0pt,
}
\newtcbox{\qpinnerbox}{on line,
  enforce breakable,
  colback=white,
  colframe=white,
  size=fbox,
  arc=1pt,
  boxsep=1pt,
  left=0pt,right=0pt,top=0pt,bottom=0pt
}
\newcommand{\qwhilesyntax}{\ensuremath{
S~::=~
    &\;\; \qpskip~
    |\; \qpinit{\vec q} ~
    |\; \qpunitary{\vec{q}}{U} ~
    |\; S_1 ; S_2~
    |\; \qprepeat{N}{S} ~\\
    &|\; \qpselect{\{\omega_1 \colon M_{\omega_1}, \omega_2 \colon M_{\omega_2}, \ldots\}}{\vec{q}}{
       \omega_1 \colon S_{\omega_1}, \ \omega_2 \colon S_{\omega_2}, \ \ldots}~\\
    &|\; \qpwhile{B}{\vec{q}}{S} ~
}}
\newcommand{\qpenv}[1]{\qpouterbox{$#1$}}
\newcommand{\qpabsprog}[1]{\qpinnerbox{$#1$}}
\newcommand{\qpskip}{\textbf{skip}}
\newcommand{\qpinit}[1]{#1 := \ket{\vec 0}}
\newcommand{\qpinitS}[1]{#1 := \ket{0}}
\newcommand{\qpunitary}[2]{#1 := #2(#1)}
\newcommand{\qprepeat}[2]{\textbf{repeat } #1 \textbf{ do } #2 \textbf{ end}}
\newcommand{\qpselect}[3]{\textbf{case meas } #2 \textbf{ with } #1 \textbf{ of } #3 \textbf{ end}}
\newcommand{\qpwhile}[4][]{\textbf{while}#1 \textbf{ meas } #3 \textbf{ with } #2 \textbf{ do } #4 \textbf{ end}}
\newcommand{\qphole}[3][]{\braces{#2}\square_{#1} \braces{#3}}
\newcommand{\qpifte}[4]{\textbf{if meas } #2 \textbf{ with } #1 \textbf{ then } #3 \textbf{ else } #4 \textbf{ end}}
\newcommand{\qpif}[3]{\textbf{if meas } #2 \textbf{ with } #1 \textbf{ then } #3 \textbf{ end}}
\newcommand{\qpifteB}[3]{\textbf{if meas } #1 \textbf{ then } #2 \textbf{ else } #3 \textbf{ end}}
\newcommand{\qpwhileB}[2]{\textbf{while meas } #1 \textbf{ do } #2 \textbf{ end}}
\newcommand{\qpselectS}[2]{\textbf{case meas } #1 \textbf{ of } #2 \textbf{ end}}
\newcommand{\refine}{\hookrightarrow}
\newcommand{\parref}{\refine_\mathrm{par}}
\newcommand{\totref}{\refine_\mathrm{tot}}
\newcommand{\ruleitem}[1]{\item[{(\crtcrossreflabel{#1}[rule:#1])}]} 
\newcommand{\ruleitemNOREF}[1]{\item[(#1)]}
\renewcommand{\ruleitemNOREF}[1]{\item[{(\crtcrossreflabel{#1}[rule:#1])}]} 
\newcommand{\ruleref}[1]{\hyperref[rule:#1]{(#1)}}
\newcommand{\ruletag}[1]{\tag{\ref{rule:#1}}} 
\newenvironment{proofcase}{}{} 
\newcommand{\mymedskip}{} 
\begin{document}

\title{QbC: Quantum Correctness by Construction}
\newcommand{\theshortauthors}{Peduri, Schaefer, and Walter}

\author{Anurudh Peduri}
\email{anurudh.peduri@rub.de}
\orcid{0000-0002-6523-7098}
\affiliation{%
  \department{Chair for Quantum Information, Faculty of Computer Science}
  \institution{Ruhr~University~Bochum}
  \city{Bochum}
  \country{Germany}
}

\author{Ina Schaefer}
\orcid{0000-0002-7153-761X}
\email{ina.schaefer@kit.edu}
\affiliation{%
  \department{Chair of Testing, Validation and Analysis of Software-Intensive Systems (TVA), Institute for Information Security and Dependability (KASTEL)}
  \institution{Karlsruhe Institute of Technology}
  \city{Karlsruhe}
  \country{Germany}
}

\author{Michael Walter}
\orcid{0000-0002-3073-1408}
\email{michael.walter@rub.de}
\affiliation{%
  \department{Chair for Quantum Information, Faculty of Computer Science}
  \institution{Ruhr~University~Bochum}
  \city{Bochum}
  \country{Germany}
}

\newcommand{\theabstract}{%
Thanks to the rapid progress and growing complexity of quantum algorithms, correctness of quantum programs has become a major concern.
Pioneering research over the past years has proposed various approaches to formally verify quantum programs using proof systems such as quantum Hoare logic.
All these prior approaches are post-hoc: one first implements a program and only then verifies its correctness.
Here we propose \emph{Quantum Correctness by Construction (QbC)}: an approach to constructing quantum programs from their specification in a way that ensures correctness.
We use pre- and postconditions to specify program properties,
and propose sound and complete refinement rules for constructing programs in a quantum while language from their specification.
We validate QbC by constructing quantum programs for idiomatic problems and patterns.
We find that the approach naturally suggests how to derive program details, highlighting key design choices along the way.
As such, we believe that QbC can play a role in supporting the design and taxonomization of quantum algorithms and software.}
\newcommand{\thekeywords}{quantum Hoare logic, correctness by construction, quantum while language}

\begin{abstract}
\theabstract
\end{abstract}
\begin{CCSXML}
<ccs2012>
   <concept>
       <concept_id>10003752.10003790.10002990</concept_id>
       <concept_desc>Theory of computation~Logic and verification</concept_desc>
       <concept_significance>500</concept_significance>
       </concept>
   <concept>
       <concept_id>10003752.10003790.10011741</concept_id>
       <concept_desc>Theory of computation~Hoare logic</concept_desc>
       <concept_significance>500</concept_significance>
       </concept>
   <concept>
       <concept_id>10003752.10003753.10003758</concept_id>
       <concept_desc>Theory of computation~Quantum computation theory</concept_desc>
       <concept_significance>500</concept_significance>
       </concept>
 </ccs2012>
\end{CCSXML}

\ccsdesc[500]{Theory of computation~Logic and verification}
\ccsdesc[500]{Theory of computation~Hoare logic}
\ccsdesc[500]{Theory of computation~Quantum computation theory}

\keywords{\thekeywords}

\maketitle

\section{Introduction}
\label{sec:intro}
The field of quantum computing has seen tremendous progress.
There are a variety of quantum algorithms for a broad range of computational problems, including for combinatorial search and optimization~\cite{grover1996,brassard2002quantum,ambainis2004quantum,apers2022no}, factoring and other algebraic problems~\cite{shor1994}, and linear algebra~\cite{hhl2009,gilyenquantum2019,van2021quantum,Chakraborty2023quantumregularized} --
supported by algorithmic frameworks such as quantum walks~\cite{qwalks2003} and the quantum singular value transform~\cite{gilyenquantum2019,granduni2021},
as well as by novel quantum data structures~\cite{qram,tower2022}.
\Citet{montanaro_quantum_2016} provides an overview of various quantum algorithms.
To support these developments and the construction of larger quantum programs, \citet{selinger_2004} first proposed a design for a quantum programming language.
Since then, there have been numerous quantum programming languages at various levels of abstraction (see, e.g.,~\cite{qml2005,quipper2013,Steiger2018projectq,qsharp2018,silq2022,qunity2023} and references therein).

Correctness of algorithms has always been a major concern in computing,
with intensive work on program analysis, testing and verification of programs and software in past decades.
For classical computing, Hoare~\cite{hoare1969hoare} introduced a formal system,
in which for a program~$S$, one specifies its properties by using a precondition~$P$ and a postcondition~$Q$, resulting in a \emph{Hoare triple} denoted as~$\hoare{P}{S}{Q}$.
A Hoare triple is said to be correct if running the program~$S$ starting in any state satisfying $P$ results in a state that satisfies $Q$.
Hoare logic has been extended to probabilistic programs~\cite{Morgan1998pGCLFR,Morgan1998pGCLFR}, where the properties are probabilistic, and correctness is defined in terms of their expectations.
A survey of the successes of Hoare logic can be found in~\cite{apt_fifty_2019}.
Another approach to verifying programs is using Incorrectness Logic~\cite{incorrectnesslogic2019}, which attempts to find bugs in programs by finding counterexamples.
Quantum computing poses unique challenges for formal verification that are not encountered in classical computing due to the nature of its computational model.
Pioneering research over the past years has uncovered how to adapt the above-mentioned approaches to the quantum setting,
e.g., quantum Hoare logic~\cite{sanders2000quantum,chadha2006hoare,kakutani2009hoare,ying2012quantum} and quantum incorrectness logic~\cite{peng2022_quantum_incorrectness_logic}.

Still, all prior approaches to verifying quantum programs are \emph{post-hoc}:
they take the completed program as a starting point and establish whether the program meets the specification.
If post-hoc verification fails, there is often no indication of what needs to be fixed in the program.
Especially in the quantum setting, predicates specifying program properties are represented by large matrices, which makes it difficult to reason about, locate, and fix issues with quantum programs with a post-hoc approach.
In contrast, \emph{Correctness-by-Construction (CbC)}~\cite{10.5555/550359,morgan1988specification,kourie_correctness-by-construction_2012} is a programming methodology to incrementally build correct programs based on a specification.
For classical computing, CbC provides a method where one starts with a concise specification,
and then uses a small set of refinement rules to incrementally construct the program in such a way that at the end of the construction process, the program provably satisfies the specification, i.e., it is correct by construction.
This approach can help build well-structured and concise programs and draw insights from the corresponding specification, supporting algorithm designers in developing intuition and allowing them to focus on central design aspects of the constructed algorithms.
\Citet{runge2020comparecorc} provide a detailed comparison between CbC and post-hoc verification techniques.
To the best of our knowledge, the Correctness-by-Construction approach has not been applied to construct and verify quantum programs before our work.

\subsection{Our Contributions}
In this work, we propose \emph{Quantum Correctness by Construction (QbC)}, an approach to constructing quantum programs from their specification in a way that ensures correctness.
To this end, we consider a simple quantum programming language, the \emph{quantum while language}, and extend it with a new construct called a \emph{hole}, which represents a yet-to-be-constructed program.
Holes take the form $\qpenv{\qphole{P}{Q}}$, where~$P$ and~$Q$ represent the pre- and postconditions that should be satisfied by the program, as in quantum Hoare logic~\cite{ying2012quantum}.
For example, the specification for the paradigmatic problem of \emph{searching} a ``database''~$f\colon\{0,1\}^n\to\{0,1\}$ with success probability~$p$, as famously solved by Grover's algorithm~\cite{grover1996}, can be succinctly expressed as
\[ \qpenv{\qphole{pI}{\sum_{x ~\text{s.t.}~ f(x) = 1} \proj{x}}}. \]
This states that measuring the program's output results in a solution to the search problem, i.e., an~$x$ such that~$f(x) = 1$, with probability at least~$p$.
Indeed, the precondition $pI$ accepts any state with probability~$p$, and the postcondition, which then must hold with at least this probability, accepts only states that on measuring give an~$x$ such that~$f(x) = 1$.
We discuss this example
in~\cref{sec:examples}.

For the above language, we then provide \emph{refinement rules}, which allow filling in holes in quantum programs in such a way that correctness is preserved.
There are two widely used notions of correctness for Hoare triples, \emph{partial} and \emph{total} correctness, which differ in how they treat non-termination.
We provide refinement rules for both notions.
These rules take the following form:
\[ \qpenv{\qphole{P}{Q}} \refine S \quad\quad\quad \text{if certain conditions $C_1, C_2, \ldots$ hold} \]
To apply such a rule, one first checks that the conditions~$C_1, C_2, \dots$ are satisfied and then replaces the left-hand side hole with the right-hand side program~$S$ (which may itself contain other holes).
For example, we can always apply the \emph{sequence rule}, $\qpenv{\qphole{P}{Q}} \refine \qpenv{\qphole{P}{R} ; \qphole{R}{Q}}$;
intuitively, this rule states that in order to construct a program that takes precondition~$P$ to postcondition~$Q$, it suffices to construct a program that takes the precondition to some intermediate condition~$R$, and another one that takes this intermediate condition to the postcondition.
Starting from an initial specification~$\qpenv{\qphole{P}{Q}}$, one iteratively applies refinement rules until one arrives at a program $S_\text{final}$ that contains no more holes.
We prove a \emph{soundness theorem} that states that any program obtained by this process of refinement is guaranteed to satisfy the initial specification, that is, the quantum Hoare triple $\hoare{P}{S_\text{final}}{Q}$ is valid and the program is correct by construction.
Here, we use the notion of quantum Hoare triple introduced by \citet{ying2012quantum}.
We also prove \emph{completeness}: any program that satisfies a specification can be obtained by process of refinement.
We note that the completeness above is \emph{relative} to the theory of complex numbers,
as defined in Ref.~\cite{ying2012quantum}, which is also assumed in all prior works.
Because of the unintuitive nature of quantum logic, these results are more challenging to establish than in the classical case.
For example, quantum predicates cannot be interpreted as (deterministic or probabilistic) functions of some program state;
observing or ``measuring'' predicates is not a passive operation, but will in general change the program's state;
and there is in general no canonical choice of predicates in refinement rules.
All these can be traced back to the noncommutative nature of quantum information.

Finally, we validate our approach by constructing quantum programs for \emph{quantum teleportation} and \emph{quantum search}~(\cref{sec:examples}).
In each case, we start from their intuitive specification and use one key algorithmic idea or refinement step at a time.
We find that the refinement rules not only guide the construction of the desired programs, but that the QbC approach also reveals design choices that can be made along the way.
For example, when constructing a program for the quantum search problem, we show how one can naturally arrive at both a naive algorithm that proceeds by random sampling, as well as Grover's celebrated search algorithm that offers a quantum speedup.
To summarize, in this work, we:
\begin{enumerate}
\item Introduce a Quantum Correctness by Construction (QbC) approach for quantum programs.
In QbC, quantum programs are constructed starting from specifications by successively applying refinement rules.
\item In doing so, provide a formalization based on quantum while programs with \emph{holes}, which specify pre- and postconditions of subprograms that still need to be constructed.

\item Prove our refinement systems sound and complete: Any program constructed from an initial specification by using the refinement rules must satisfy the specification, and for any program satisfying a specification, there exists a sequence of refinements to obtain it from the specification.

\item Validate QbC by constructing quantum programs for idiomatic problems, starting from their specification.
Our findings suggest that QbC can play a role in supporting the design and taxonomization of quantum algorithms and software.
\end{enumerate}

\subsection{Related Work}\label{new:sec:related-work}

\paragraph{Correctness-by-Construction for Classical Programs}
The CbC approach to programming was initially introduced by \citet{10.5555/550359,morgan1988specification,back1998book} and \citet{gries_science_1981}
and later extended by \citet{kourie_correctness-by-construction_2012}, see also the relation algebra of programming~\cite{birddemoor1997algebra}.
There has since been extensive work on tools and theory for CbC~\cite{runge2019corc,bordis2022corc,runge2020latticeinfocbc,runge2023flexcbc,bordis2023corceco,knuppel_scaling_2020,traitcbc2022}.
Other similar approaches include the B method~\cite{1996bbook,2003bmethod,2010modellingEventB}, syntax-guided synthesis~\cite{solar2009sketching},
and type systems~\cite{Chlipala2013cpdt}.

\paragraph{Probabilistic Verification}
Various classical program logics have been extended to probabilistic programs:
Hoare logic~\cite{probhoare1999},
weakest precondition reasoning~\cite{10.1145/2933575.2935317,Morgan1998pGCLFR},
separation logic~\cite{10.1145/3290347},
relational logic~\cite{barthe2009PRHL,barthe2025PRHL},
and dynamic logic for forward reasoning~\cite{pardo2022pgcl}.
We refer to \citet{kaminski2016weakest} and references therein for recent work on proving termination and bounding expected runtimes. 
Syntax-guided synthesis techniques have also been extended to probabilistic programs~\cite{andriushchenko2021paynt}.

\paragraph{Quantum Verification}
The majority of work in quantum verification has focused on post-hoc approaches.
The first approach on Hoare logic for quantum programs was introduced by \cite{kakutani2009hoare,chadha2006hoare},
and was extended  to support reasoning about unbounded quantum loops with (relative) completeness~\cite{ying2012quantum}
and to include classical variables~\cite{feng2021quantum}.
\Citet{coqq2023} provided a rigorous implementation of quantum Hoare logic in Coq with a range of applications.
\Citet{aQHL2019} focused on projections as predicates and proposed a notion of robust Hoare triples, which allow pre- and postconditions to be approximately satisfied.
\Citet{rand2019verification} surveys recent advances in Hoare-style verification logics for quantum programs.
\Citet{zuliani07formalgrover} illustrates a refinement based approach, but gives no complete system of refinement rules.
\Citet{neri2021compiling} contributes an extension of laws of classical program algebra to quantum programming.
Other related quantum verification efforts include incorrectness logic~\cite{peng2022_quantum_incorrectness_logic},
circuit verification~\cite{qwire2017,Rand_2019,lehmann2022vyzxvisionverifying},
quantum relational Hoare logic~\cite{Unruh_2019,li2021_qrhl_exp},
and equational reasoning in Dirac notation~\cite{xu2025Dirac}.

\paragraph{Recent Developments}
Shortly after our work had appeared as a preprint, \citet{feng2023refinement} reported on independent work on a refinement system that is similar to ours but differs in two key respects.
First, their work considers only projective predicates, whereas our formalism allows for arbitrary predicates.
Therefore our formalism captures a larger class of interesting properties, in particular success probabilities of algorithms, which are important in most quantum algorithms (see \cref{sec:examples} and \cref{subsec:examples:boosting}).
Second, while their refinement system only ensures partial correctness, we also provide a refinement system that ensures total correctness.
The latter gives stronger guarantees and in particular enables reasoning about termination, which is impossible otherwise.
\Citet{feng2023refinement} also provide a Python-based proof-of-concept implementation of their calculus.
A recent work~\cite{seng2024quantum} has similarly proposed a web-based proof-of-concept implementation of QbC.

\paragraph{Organization of the Paper.}
In \cref{sec:prelims}, we review the basic quantum formalism that is used in the paper.
We introduce the \emph{quantum while language}, a simple quantum programming language with control-flow and loops, as well as \emph{quantum Hoare logic}, which defines notions of correctness of quantum programs in terms of pre- and postconditions.
In \cref{sec:qbc}, we introduce \emph{Quantum Correctness by Construction (QbC)}.
We first define an extension of the quantum while language, called \emph{programs with holes}, which allows specifying subprograms that still need to be constructed.
Then we define refinement rules that can be used to construct quantum programs from given specifications, and we prove soundness and completeness of these rules.
In \cref{sec:examples}, we use QbC to naturally construct several quantum programs starting from their specification.
We conclude in \cref{sec:conclusion}.
\Cref{app:soundness,app:completeness,app:examples:boosting} contain technical proofs for results announced in the main text.
\Cref{app:coin-toss-till-zero} continues the discussion of a running example in the text.

\section{Preliminaries}
\label{sec:prelims}
In this section, after setting our notation and conventions (\cref{subsec:notation}), we give a brief introduction to the formalism of quantum computing~(\cref{subsec:quantum}).
Then we describe the syntax and semantics of a simple quantum programming language~(\cref{subsec:qwhile}) and recall quantum Hoare logic~(\cref{subsec:quantum-hoare-logic}).

\subsection{Notation and Conventions}
\label{subsec:notation}
We take $\N$ to be the set of natural numbers including zero.
\pagebreak[3]
In this work, a Hilbert space~$\mathcal H$ is a finite-dimensional complex vector space with inner product.
Throughout the paper we use \emph{Dirac notation}:
we write~$\ket\psi \in \mathcal H$ for vectors, $\bra\phi$ for covectors, and~$\braket{\phi|\psi}$ for the inner product.
Here, $\psi$ is an arbitrary label.
In general, $M^\dagger$ denotes the adjoint of a linear operator~$M$.
The identity operator on a Hilbert space~$\cH$ is denoted by~$I_\cH$ and can be written as $I_\cH = \sum_{x\in\Sigma} \proj x$ for any (orthonormal) basis $\{ \ket x \}_{x \in \Sigma}$ of~$\cH$, where $\Sigma$ is an index set.
We write~$I$ when the Hilbert space is clear from the context.
The \emph{trace} of an operator~$M$ can be computed as~$\tr M = \sum_{x \in \Sigma} \braket{x|M|x}$ for any basis as above.
For example, a \emph{quantum bit} or \emph{qubit} corresponds to the $2$-dimensional Hilbert space~$\mathcal H =\C^2$,
with standard basis $\{\ket0, \ket1\}$ labeled by~$\Sigma=\{0,1\}$.
The identity operator is $I = \proj0 + \proj1$.
An example of an operator on~$\C^2$ is the Pauli $X$ matrix, defined as $X = \ketbra01 + \ketbra10$.
It satisfies~$X^\dagger = X$ and $\tr X = 0$.
We require two more concepts from linear algebra.
An operator~$M$ on~$\cH$ is \emph{Hermitian} if $\braket{\psi|M|\psi} \in \R$ for all vectors~$\ket\psi\in\cH$, and \emph{positive semi\-definite~(PSD)} if~$\braket{\psi|M|\psi} \ge 0$ for all vectors~$\ket\psi \in \cH$.
Equivalently,~$M$ can be diagonalized by a unitary matrix and has real resp.\ nonnegative eigenvalues.
Given two operators~$A$ and~$B$ on~$\cH$, we write $A \preceq B$ if and only if~$B - A$ is PSD; this defines the \emph{L\"owner order}.
For example, we can write $M \geq 0$ to state that $M$ is PSD.


\subsection{Quantum Computing}
\label{subsec:quantum}
We now recall the basic formalism of quantum computing.
We refer to the excellent textbooks~\citet{nielsen2010quantum,wilde2013quantum,YM08} for more comprehensive introductions.

\paragraph{Variables}
A \emph{quantum variable}~$q$ is modeled by a Hilbert space~$\cH_q = \C^{\Sigma_q}$ for some finite index set~$\Sigma_q$.
This means that $\cH_q$ is a vector space equipped with an inner product and an orthonormal \emph{standard basis} (or \emph{computational basis})~$\{\ket x\}_{x \in \Sigma_q}$, labeled by the elements~$x\in \Sigma_q$.
When~$\Sigma_q = \{0,1\}$, then~$\cH_q = \C^2$ and~$q$ is called a \emph{quantum bit} or \emph{qubit}, with standard basis~$\{\ket0, \ket1\}$, as above.
If~$\vec{q}$ is a collection of quantum variables, then the corresponding Hilbert space is $\cH_{\vec q} = \bigotimes_{q \in \vec q} \cH_q \cong \C^{\Sigma_{\vec q}}$, where the Cartesian product~$\Sigma_{\vec q} = \prod_{q \in \vec q} \Sigma_q$ labels the standard (product) basis of the quantum variables~$\vec q$.
We assume that there is a finite set of quantum variables, denoted by $\qVars$.
Then the overall Hilbert space is
\begin{align*}
  \cH = \cH_{\qVars} = \bigotimes_{q\in\qVars} \cH_q \cong \bigotimes_{q\in\qVars} \C^{\Sigma_q} = \C^\Sigma,
\end{align*}
where $\Sigma = \Sigma_{\qVars} = \prod_{q \in \qVars} \Sigma_q$ labels the standard basis of the all quantum variables.
It is well understood how to extend the above to infinite-dimensional Hilbert spaces and an infinite number of quantum variables, but we will not need this here.

\paragraph{States}
The \emph{state} of all the quantum variables is described by a positive semidefinite (PSD) operator on~$\cH$ with trace equal to $1$, often called a \emph{density operator}.
We denote the set of all such operators by~$\dH$.
A state is called \emph{pure} if it is given by a rank-one projection, i.e., if~$\rho=\proj\Psi$ for some unit vector~$\ket\Psi\in\cH$.
For simplicity one often also refers to $\ket\Psi$ as the (pure) state.
States that are not pure are called \emph{mixed}.
More generally, we consider \emph{partial states}, also called \emph{subnormalized states}, which are PSD operators of trace at most one, denoted by~$\dHle$.
Partial states are akin to sub-probability measures in probabilistic computing.
They are useful for reasoning about programs that may terminate with probability less than one.
For example, the computational basis states of a qubit are~$\proj0$ and~$\proj1$, which are both pure states, and the partial state $\rho = 0.8 \proj0$ 
describes the output of a program terminating in state~$\proj0$ with probability~$0.8$ (and not terminating otherwise).

\paragraph{Operations}
There are two basic kinds of operations.
The first is to apply a \emph{unitary}.
An operator~$U$ is called a \emph{unitary} if~$U U^\dagger = U^\dagger U = I$.
If we apply a unitary~$U$ on~$\cH$ to a (partial) state~$\rho$, the result is $U \rho U^\dagger$, which is again a (partial) state.
  For example, the Hadamard matrix $H = \frac1{\sqrt2} \begin{psmallmatrix} 1 & 1 \\ 1 & -1 \end{psmallmatrix}$
  is a one-qubit unitary,
  and on applying it to the input state $\proj0$, we get $\proj+$, where $\ket+ = \frac1{\sqrt2}(\ket0 + \ket1)$.

The second operation is to measure the quantum state.
A \emph{measurement} (also called a \emph{positive operator-valued measure} or \emph{POVM}) is given by a family of positive semidefinite operators~$\vec M = \{M_\omega\}_{\omega\in\Omega}$,
labeled by some finite index set~$\Omega$, such that~$\sum_{\omega\in\Omega} M_\omega = I$.
If one measures a (partial) state~$\rho$ then the probability of seeing outcome~$\omega\in\Omega$ is~$q_\omega = \tr(\rho M_\omega)$,
in which case the state changes to~$\rho_\omega = \sqrt{M_\omega} \rho \sqrt{M_\omega} / q_\omega$.
Note that any PSD operator~$M$ has a unique PSD \emph{square root}, denoted by~$\sqrt M$.
If $M$ is a projection, meaning $M^2 = M$, then $\sqrt M = M$.
We abbreviate $\cM_\omega(\rho) = \sqrt{M_\omega} \rho \sqrt{M_\omega}$.
Note that this is a partial state, with trace equal to the probability~$q_\omega$ of outcome~$\omega$.

When $\Omega = \{0,1\}$, there are only two possible outcomes, and this is called a \emph{binary measurement}.
Any binary measurement~ $\{B_0, B_1\}$ can be obtained by picking a positive semidefinite operator~$B$ with~$B \preceq I$ and setting~$B_0 = I - B$ and~$B_1 = B$.
As above, we abbreviate~$\cB_0(\rho) = \sqrt{I-B}\rho\sqrt{I-B}$ and~$\cB_1(\rho) = \sqrt{B} \rho \sqrt{B}$.
For example, for a qubit, $B = \proj1$ defines the standard basis measurement, with~$B_0=\sqrt{B_0}=\proj0$ and~$B_1=\sqrt{B_1}=\proj1$.
For succinctness, we will often refer to~$B$ rather than~$\{B_0,B_1\}$ as a binary measurement.
For example, if we apply the binary standard basis measurement to a qubit in the~$\proj{+}$ state, then $\cB_0(\proj{+}) = \frac12 \proj0$ and $\cB_1(\proj+) = \frac12 \proj1$.
Thus each outcome~$\omega \in \{0,1\}$ occurs with probability half and the state after the measurement is~$\proj\omega$.

We can also apply any of the above operations to a subset~$\vec{q} \subseteq \qVars$ of the quantum variables.
To this end, let us, for an arbitrary operator~$A$ on~$\cH_{\vec q}$, define the operator~$A_{\vec q} = A \ot I$ on~$\cH$, where the tensor product is with respect to the decomposition~$\cH = \cH_{\vec q} \ot \cH_{\vec q^c}$ and $\vec{q}^c = \qVars \setminus \vec{q}$ denotes the remaining quantum variables.
In prior work, this has been called \emph{cylindrical extension}~\cite{coqq2023}.
Then, if~$U$ is a unitary on~$\cH_{\vec q}$, we can apply it to the quantum variables~$\vec{q}$ by taking~$U_{\vec q}$ in the above prescription.
Similarly, if~$\{M_\omega\}_{\omega\in\Omega}$ is a measurement on $\cH_{\vec q}$, we take $\{ M_{\omega,\vec q} \}$ in the above prescription; we also abbreviate~$\cM_{\omega,\vec q}(\rho) = \sqrt{M_{\omega,\vec q}} \rho \sqrt{M_{\omega,\vec q}}$.

\subsection{Quantum While Language}
\label{subsec:qwhile}

In this paper, we consider a \emph{quantum while language}~\cite{ying2012quantum,coqq2023,feng2021quantum}.
We first introduce its syntax and then discuss its denotational semantics.
The language supports initializing quantum variables, applying unitary operations, and classical control flow based on quantum measurement outcomes.
%
We first describe the language's syntax.
\begin{definition}[Syntax]
\label{def:qwhile-syntax}
Quantum while programs are given by the following grammar:
\begin{align*} \qwhilesyntax \end{align*}
where $S, S_1, S_2, S_{\omega_1}, S_{\omega_2}, \ldots$ denote programs in the language;
$\vec{q}$ denotes a set of quantum variables;
$U$ is a unitary operator on~$\cH_{\vec{q}}$;
$N$ is a natural number;
$\{M_{\omega}\}_{\omega\in\Omega}$ is a measurement on~$\cH_{\vec{q}}$ with outcomes in some set $\Omega = \{\omega_1, \omega_2, \ldots\}$;
$B$ is an operator defining a binary measurement~$\{I-B, B\}$.
\end{definition}
The first five instructions are self-explanatory:
$\qpenv{\textbf{skip}}$ is a no-op;
$\qpenv{\qpinit{\vec q}}$ initializes the set of variables $\vec q$;
$\qpenv{\qpunitary{\vec q}{U}}$ applies a unitary $U$ to the set of variables $\vec q$;
$\qpenv{\qpabsprog{S_1}~; \qpabsprog{S_2}}$ sequentially composes the two programs, running ${S_1}$ followed by ${S_2}$;
and $\qpenv{\qprepeat{N}{\qpabsprog{S}}}$ runs the program $S$ $N$ times.
The \textbf{case} statement is used for control flow: it measures the set of variables $\vec q$ with the measurement $\vec M$, and on seeing outcome $\omega$ runs the program ${S_\omega}$.
The \textbf{while} statement measures the set of variables $\vec q$ with a binary measurement $B$ and, if it succeeds (i.e. has outcome $1$), runs the loop body~${S}$ and repeats.
Thus, both \textbf{case} and \textbf{while} are classical control flow based on the outcome of a quantum measurement.

\begin{remark}\label{remark:unitary-subset}
  We comment on an aspect that is well-understood but usually left implicit.
  In the instruction $\qpenv{\qpunitary{\q}{U}}$, the unitary~$U$ will often be restricted to an arbitrary fixed subset of allowed ``gates'' (typical choices are few-qubit unitaries or the Clifford+T gate set).
  The same applies to measurement operators $M_\omega$ (typical choices are standard basis measurements or arbitrary 1-qubit measurements).
  The results of our paper, just like those of prior works, do not depend on this choice (\cref{thm:soundness-partial,thm:soundness-total,thm:completeness-partial,thm:completeness-total}), and all our examples use efficiently implementable unitaries.
\end{remark}

We also introduce some syntactic sugar for convenience.
First, we define an \textbf{if} statement as a shorthand for a \textbf{case} statement with a binary measurement:
\begin{gather*}
\qpenv{\qpifte{B}{\vec q}{\qpabsprog{S_1}}{\qpabsprog{S_0}}}
\equiv
\\
\qpenv{\qpselect{\{1\colon B, 0\colon I-B\}}{\vec q}{1 \colon \qpabsprog{S_1},\ 0 \colon \qpabsprog{S_0}}}
\end{gather*}
We allow leaving out the \textbf{else} branch, in which case we take $S_0 = \qpenv{\qpskip}$.
Lastly, we allow leaving out the measurement in \textbf{if}, \textbf{case}, and \textbf{while}, in which case we default to standard basis measurements:
\begin{align*}
\qpenv{\qpifteB{q}{\qpabsprog{S_1}}{\qpabsprog{S_0}}}
  &\;\equiv\;
  \qpenv{\qpifte{\proj1}{q}{\qpabsprog{S_1}}{\qpabsprog{S_0}}}\\
\qpenv{\qpselectS{\vec q}{\dots}}
  &\;\equiv\;
  \qpenv{\qpselect{\{x\colon\proj{x}\}_{x \in \Sigma_{\vec q}}}{\vec q}{\dots}} \\
\qpenv{\qpwhileB{q}{\qpabsprog{S}}}
  &\;\equiv\;
  \qpenv{\qpwhile{\proj1}{q}{\qpabsprog{S}}}
\end{align*}

\begin{example}[Quantum coin toss until zero]\label{new:ex:coin-toss:program}
As a gentle introduction to the quantum while language, we consider the following program, which exercises several language features:
\begin{equation*}
  \ctprog =
  \qpenv{
    \qpinitS{q} ;
    \qpunitary{q}{H} ;
    \qpwhileB{q}{
      \qpunitary{q}{H}
    }
  }
\end{equation*}
It first initializes a qubit in the $\ket0$ state, and then repeatedly applies the Hadamard gate and measures in the standard basis, until the outcome~``0'' is seen.
We revisit this example later from the perspectives of denotational semantics and quantum Hoare logic.
\end{example}

\paragraph{Semantics}
Any program~$S$ defines a function that maps the state of the quantum variables before execution to the corresponding (partial) state after program execution.
This is called the \emph{denotational semantics} of $S$ and is defined formally in the following.

\begin{definition}[Denotational semantics]\label{def:semantics}
For any program $S$, its \emph{denotational semantics} is the map
\begin{equation*}
    \Bracks{S} \colon \dHle \rightarrow \dHle,
\end{equation*}
which is defined recursively in the following way:
\begin{enumerate}
  \item $\Bracks{\qpenv{\qpskip}} (\rho) = \rho$
  \item $\Bracks{\qpenv{\qpinit{\vec q}}} (\rho)
  = \sum_{\vec x \in \Sigma_{\vec q}} \ketbra{\vec 0}{\vec x}_{\vec q} \rho \ketbra{\vec x}{\vec 0}_{\vec q}$
  \item $\Bracks{\qpenv{\qpunitary{\vec{q}}{U}}}(\rho)
  = U_{\vec q} \rho U_{\vec q}^\dagger$
  \item $\Bracks{\qpenv{\qpabsprog{S_1} ; \qpabsprog{S_2}}} (\rho)
  = \Bracks{S_2}\parens*{\Bracks{S_1}(\rho)}$
  \item $\Bracks{\qpenv{\qprepeat{N}{\qpabsprog{S}}}} (\rho)
  = \Bracks{S}^N(\rho)$
  \item $\Bracks{\qpenv{\qpselect{\vec M}{\vec{q}}{\ldots, \ \omega \colon \qpabsprog{S_{\omega}},\ \ldots}}} (\rho) = \sum_{\omega\in\Omega} \Bracks{S_\omega}(\cM_{\omega,\vec q}(\rho))$
  \item $\Bracks{\qpenv{\qpwhile{B}{\vec{q}}{\qpabsprog{S}}}}(\rho)$
  $= \sum_{k=0}^{\infty} \parens*{\cB_{0,\vec q} \circ \parens*{ \Bracks{S} \circ \cB_{1,\vec q} }^k}(\rho)$
\end{enumerate}
\end{definition}
We explain the semantics above using the definitions from \cref{subsec:quantum}.
Statements~(1)-(5) are self-explanatory.
The \textbf{case} statement~(6) measures a set of variables $\vec q$ with the measurement $\vec M$,
and on seeing outcome~$\omega$ executes program $S_\omega$.
The \textbf{while} statement~(7) runs a loop with body $S$, guarded by the condition that measuring $\vec q$ with the binary measurement $B$ gives outcome $1$.
The semantics for \textbf{while} is well-defined because the partial states~$\sigma_n = \sum_{k=0}^n \parens*{\cB_{0,\vec q} \circ \parens{ \Bracks{S} \circ \cB_{1,\vec q} }^k}(\rho)$ form an increasing sequence ($\sigma_n \preceq \sigma_{n+1}$ for all~$n$) that is bounded from above.
Note that~$\sigma_n$ represents the partial state corresponding to termination within at most~$n$ iterations of the loop.
Loops~$S = \qpenv{\qpwhile{B}{\vec{q}}{\qpabsprog{C}}}$ satisfy the following natural recurrence:
\begin{align}\label{eq:while loop recurrence}
  \Bracks{S} = \cB_{0,\vec q} + \Bracks{S} \circ \Bracks{C} \circ \cB_{1,\vec q}
\end{align}
The denotational semantics of a program, $\Bracks{S}$, is a so-called \emph{superoperator} because it is a linear function mapping operators on one Hilbert space to operators on another.
To be meaningful, this superoperator should map partial quantum states to partial quantum states, even when applied to a subset of the quantum variables.
Formally, this means that $\Bracks{S}$ should be \emph{completely positive and trace non-increasing}.
Recall that a superoperator $\cE$ is called
\emph{completely positive} if for every additional Hilbert space~$\cH'$ the superoperator~$\cE \ot \cI_{\cH'}$ maps PSD operators to PSD operators, with~$\cI_{\cH'}$ denoting the identity superoperator on~$\cH'$, and it is called \emph{trace non-increasing} if~$\tr \cE(M) \leq \tr M$ for every~$M$ (it is called \emph{trace preserving} if equality holds for all~$M$).
Conversely, these two conditions guarantee that a superoperator can be physically realized.

One can note that~$\Bracks{S}$ is completely positive and trace non-increasing for any program~$S$.
Moreover, for any state~$\rho\in\dH$, the quantity~$\tr(\Bracks{S}(\rho)) \in [0,1]$ can be interpreted as the \emph{probability of termination} of the program~$S$ when started in the initial state~$\rho$.
In particular, the program~$S$ \emph{terminates almost surely} (that is, with probability one) when started in state~$\rho$ if, and only if, $\tr(\Bracks{S}(\rho))=1$.
Thus the program terminates on any input state if~$\Bracks{S}$ is trace preserving (we also say that~$S$ is trace preserving).
We note that~(1)-(3) are always trace preserving, (4)-(6) are trace preserving if all the subprograms ($S_1$, $S_2$, $S$, $S_\omega$ for $\omega\in\Omega$) are trace preserving, and~(7) is trace preserving if the subprogram~$S$ is trace preserving and the loop terminates with probability one.

\begin{example}[Semantics of quantum coin toss until zero]\label{new:ex:coin-toss:semantics}
We can use \cref{def:semantics} to compute the semantics for the quantum coin oss until zero program in \cref{new:ex:coin-toss:program}.
For any state~$\rho \in \dH$,
\begin{align*}
  \Bracks{{\ctprog}}(\rho)
  &=\Bracks{\qpenv{
      \qpinitS{q} ;
      \qpunitary{q}{H} ;
      \qpwhileB{q}{\qpunitary{q}{H}}
  }}(\rho) \\
  &= \Bracks{\qpenv{\qpwhileB{q}{\qpunitary{q}{H}}}} \parens[\big]{\Bracks{\qpenv{\qpunitary{q}{H}}} (\Bracks{\qpenv{\qpinitS{q}}}(\rho))} \\
  &= \Bracks{\qpenv{\qpwhileB{q}{\qpunitary{q}{H}}}} \parens[\big]{\Bracks{\qpenv{\qpunitary{q}{H}}} (\proj0)} \\
  &= \Bracks{\qpenv{\qpwhileB{q}{\qpunitary{q}{H}}}} \parens*{ \proj{+} }.
\end{align*}
using~(4), (2), and (3).
By~(7), the semantics of the loop is for a general state~$\sigma \in \dHle$ given by
\begin{align*}
  \Bracks{\qpenv{\qpwhileB{q}{\qpunitary{q}{H}}}}(\sigma)
  &= \sum_{k = 0}^{\infty} \parens*{\cB_0 \circ (\Bracks{\qpenv{\qpunitary{q}{H}}} \circ \cB_1)^k}(\sigma),
\end{align*}
where $\cB_j(\rho) = \proj{j}\!\rho\!\proj{j} = \braket{j|\rho|j} \proj j$.
Now, for any state~$\sigma \in \dHle$, we have
\[ (\Bracks{\qpenv{\qpunitary{q}{H}}} \circ \cB_1)(\sigma) = \braket{1|\sigma|1} \proj-, \]
and therefore for any~$k \ge 1$,
$(\Bracks{\qpenv{\qpunitary{q}{H}}} \circ \cB_1)^k(\proj+) = \frac1{2^k} \proj-$.
Thus we find that
\begin{align*}
  \Bracks{\qpenv{\qpwhileB{q}{\qpunitary{q}{H}}}} \parens*{ \proj{+} }
  &= \cB_0(\proj+) + \sum_{k = 1}^\infty \cB_0\parens*{\frac1{2^k}\proj-} \\
  &= \frac12\proj0 + \sum_{k = 1}^\infty \frac1{2^{k+1}}\proj0
  = \proj0,
\end{align*}
and hence the semantics of the coin toss until zero program is, for any initial state $\rho \in \dH$, given by
\begin{equation}
  \Bracks{\ctprog}(\rho) = \proj0 \label{new:eq:coin-toss:semantics}
\end{equation}
We see that no matter what state we start in, the program always terminates in the pure state~$\ket0$.
\end{example}
%
\subsection{Quantum Hoare Logic}
\label{subsec:quantum-hoare-logic}
Hoare logic is a formal system to state and prove correctness of programs.
For a program~$S$, one specifies a precondition~$P$ and postcondition~$Q$ to form a Hoare triple $\hoare{P}{S}{Q}$.
Such a Hoare triple is said to hold if, for any state that satisfies $P$, running the program on it results in a state that satisfies~$Q$.
Here~$P$ and~$Q$ are predicates over the state of the program variables.
In this section, we present the formalism for \emph{quantum Hoare logic} from~\cite{ying2012quantum}.

\paragraph{Predicates}
Predicates are properties of the state of the system that can hold to some degree.
Recall that any PSD operator $P \preceq I$ defines a binary measurement~$\{B_0,B_1\}$ by setting~$B_0 = I - P$ and $B_1 = P$.
We may think of~$P$ as defining a predicate:
As in probabilistic Hoare logic~\cite{Morgan1998pGCLFR}, instead of assigning a definite truth value to a given predicate and state, we rather assign an \emph{expectation} or degree to which the predicate holds in the given state -- namely the probability of getting outcome~$1$ if one were to apply the binary measurement defined by $P$.
For any state~$\rho$, this probability is given by~$\tr(P \rho)$, as explained earlier.
We thus arrive at the following definition.

\begin{definition}[Predicates and expectation]\label{def:pred exp}
A \emph{predicate} is a positive semidefinite operator~$P$ such that~$P \preceq I$, and the set of all such predicates is denoted $\cPHle$.
The \emph{expectation} of the predicate~$P$ in a (partial) state $\rho \in \dHle$ is defined as
\begin{align*}
  \satisfies{\rho}{P} = \tr(P\rho) \in [0,1].
\end{align*}
We say that \emph{$P$ implies $Q$} for two predicates $P, Q$ iff $\satisfies{\rho}{P} \le \satisfies{\rho}{Q}$ for all states~$\rho\in\dH$.
This is equivalent to $P \preceq Q$ in the L\"owner order, but is in the context of predicates often denoted $P \imp Q$.
\end{definition}

Just like in classical Hoare logic one can also transform predicates with respect to a program~$S$.
To this end, we use the adjoint~$\Bracks{S}^\dagger$ of the denotational semantics superoperator~$\Bracks{S}$ (see \cref{def:semantics} and the discussion below it).
For any superoperator~$\cE$, the adjoint~$\cE^\dagger$ 
satisfies the defining property that $\tr (A~\cE(B)) = \tr(\cE^\dagger(A)~B)$ for all operators~$A$,~$B$.
We note that~$\cE$ is completely positive iff this is the case for its adjoint; it is trace preserving iff its adjoint is \emph{unital}, that is,~$\cE^\dagger(I) = I$, and trace non-increasing if the adjoint is \emph{sub-unital}, that is, $\cE^\dagger(I) \preceq I$.
While~$\Bracks{S}$ transforms states, its adjoint~$\Bracks{S}^\dagger$ naturally acts on predicates, and we have the following useful duality:
\begin{align*}
  \satisfies{\Bracks{S}(\rho)}{P} = \satisfies{\rho}{\Bracks{S}^\dagger(P)}.
\end{align*}
for any program~$S$ and for any (partial) state $\rho\in\dHle$ and predicate~$P\in\cPHle$.

\paragraph{Quantum Hoare Triples}
A quantum Hoare triple is denoted by
\[ \hoare{P}{S}{Q} \]
and consists of a program~$S$, precondition~$P$, and postcondition~$Q$, where $P, Q$ are predicates as defined above.
Similar to probabilistic Hoare logic, we have notions of correctness of a Hoare triple.

We start with total correctness.
It states that the postcondition holds to a degree no less than the precondition:

\begin{definition}[Total correctness]
\label{def:total-correctness}
For a program~$S$ and predicates~$P$,~$Q$,
the Hoare triple $\hoare{P}{S}{Q}$ is said to be \emph{totally correct} if for all partial states~$\rho \in \dHle$,
\begin{equation*}
    \satisfies{\rho}{P} \le \satisfies{\Bracks{S}(\rho)}{Q}.
\end{equation*}
We denote total correctness by $\totcorr \hoare{P}{S}{Q}$.
Mathematically, this is equivalent to
~$P \imp \Bracks{S}^\dagger(Q)$, or~$P \preceq \Bracks{S}^\dagger(Q)$.
\end{definition}

Next, we define partial correctness.
Here the degree to which the postcondition holds only matters insofar as the program terminates.

\begin{definition}[Partial correctness]
\label{def:partial-correctness}
For a program~$S$ and predicates~$P$,~$Q$,
the Hoare triple $\hoare{P}{S}{Q}$ is said to be \emph{partially correct} if for all partial states~$\rho \in \dHle$,
\begin{equation}\label{eq:def par cor}
    \satisfies{\rho}{P}
    \le \satisfies{\Bracks{S}(\rho)}{Q}
    + [\tr(\rho) - \tr(\Bracks{S}(\rho))].
\end{equation}
This condition can be equivalently stated as
\begin{equation}\label{eq:def par cor alt}
    \satisfies{\rho}{I-P} \geq \satisfies{\Bracks{S}(\rho)}{I-Q}.
\end{equation}
We denote partial correctness by $\parcorr \hoare{P}{S}{Q}$.
Mathematically, this is equivalent to
~$\Bracks{S}^\dagger(I-Q) \imp I - P$, or~$I-P \succeq \Bracks{S}^\dagger(I-Q)$.
\end{definition}

As explained below \cref{def:semantics}, the term $[\tr(\rho) - \tr(\Bracks{S}(\rho))]$ in \cref{eq:def par cor} is the \emph{probability of non-termination} of the program~$S$ when started in an initial state~$\rho$.
It is always non-negative (as~$\Bracks{S}$ is trace non-increasing).
Intuitively, \cref{eq:def par cor} states that the degree to which the postcondition holds is at least the degree to which the precondition holds, \emph{minus the probability of non-termination}.
The equivalent \cref{eq:def par cor alt} says that the probability that the program terminates and the postcondition does \emph{not} hold is at most the probability that the precondition does \emph{not} hold.
Total implies partial correctness.

For later use, we observe that~$\totcorr \hoare {I}{S}{Q}$ means that, for any initial state, the program terminates almost surely in a state satisfying the postcondition, while~$\parcorr \hoare {I}{S}{Q}$ means that the postcondition holds whenever the program terminates.
In particular, $\totcorr \hoare {I}{S}{I}$ states that~$S$ terminates almost surely on any initial state, while $\parcorr \hoare {I}{S}{I}$ holds trivially for any program.
\begin{example}[Hoare logic specification for quantum coin toss until zero]\label{new:ex:coin-toss:spec}
We now discuss a natural quantum Hoare triple for our running example (\cref{new:ex:coin-toss:program}).
One way to specify the behavior of the program~$\ctprog$ is by the Hoare triple
\begin{equation*}
  \hoare{I}{\ctprog}{\proj0}.
\end{equation*}
As discussed, this states that, for any input state, the program terminates in the final state~$\ket0$.
We can verify explicitly that this Hoare triple program is totally correct.
Indeed, we saw in \cref{new:eq:coin-toss:semantics} of \cref{new:ex:coin-toss:semantics} that $\Bracks{\ctprog}(\rho) = \proj0$ for every state~$\rho\in\dH$, and hence
\begin{align*}
  \satisfies{\Bracks{\ctprog}(\rho)}{\proj0}
= \satisfies{\proj0}{\proj0}
= 1
= \tr\rho
= \satisfies{\rho}{I}
\end{align*}
for any state~$\rho\in\dH$.
This confirms that the triple is totally correct.
There are also much simpler programs that meet this same specification, e.g.~$\qpenv{\qpinit{q}}$.
See also the discussion in \cref{app:coin-toss-till-zero}.
\end{example}
%
\paragraph{Projections as Predicates}
When the precondition in a Hoare triple is a projection, $P^2 = P$, we only need to verify correctness for pure states~$\rho=\proj\psi$ that exactly satisfy the precondition, meaning~$\satisfies \rho P = 1$ or equivalently~$P \ket\psi = \ket\psi$~\cite[Theorem 3.2]{aQHL2019}.
In particular, we can specify the behavior of the program when run with some initial pure state~$\ket\Psi$
by using the precondition~$P = \proj\Psi$:
A Hoare triple $\hoare{\proj\Psi}{S}{Q}$ is totally correct iff~$\satisfies{\Bracks{S}(\proj\Psi)}{Q} = 1$,
and partially correct iff~$\satisfies{\Bracks{S}(\proj\Psi)}{Q} = \satisfies{\Bracks{S}(\proj\Psi)}{I}$.
Thus:
\begin{lemma}\label{cor:projection-predicates}
Let~$P$ be a projection and~$Q$ an arbitrary predicate.
Then,
$P \imp Q$ holds
if, and only if, $\braket{\psi|Q|\psi} = 1$ for every unit vector~$\ket \psi$ such that~$P\ket\psi = \ket\psi$.
\end{lemma}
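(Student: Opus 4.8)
The statement to prove is Lemma~\ref{cor:projection-predicates}: for a projection $P$ and arbitrary predicate $Q$, we have $P \imp Q$ if and only if $\braket{\psi|Q|\psi} = 1$ for every unit vector $\ket\psi$ with $P\ket\psi = \ket\psi$. Recall that by \cref{def:pred exp}, $P \imp Q$ means $P \preceq Q$ in the Löwner order, i.e. $\braket{\psi|P|\psi} \le \braket{\psi|Q|\psi}$ for all vectors $\ket\psi$. The plan is to prove both directions directly, exploiting that $P$ is a projection so that its quadratic form $\braket{\psi|P|\psi}$ equals $\norm{P\ket\psi}^2$ and takes the extreme value $1$ exactly on unit vectors in its range.

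For the forward direction, I would assume $P \preceq Q$ and take any unit vector $\ket\psi$ with $P\ket\psi = \ket\psi$. Then $\braket{\psi|P|\psi} = \braket{\psi|\psi} = 1$, so from $P \preceq Q$ we get $\braket{\psi|Q|\psi} \ge \braket{\psi|P|\psi} = 1$. Since $Q \in \cPHle$ means $Q \preceq I$, we also have $\braket{\psi|Q|\psi} \le \braket{\psi|\psi} = 1$, and the two inequalities force $\braket{\psi|Q|\psi} = 1$, as desired. This direction is essentially immediate.

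For the converse, I would assume $\braket{\psi|Q|\psi} = 1$ for every unit vector in the range of $P$, and aim to show $\braket{\phi|Q|\phi} \ge \braket{\phi|P|\phi}$ for an \emph{arbitrary} vector $\ket\phi \in \cH$. The natural approach is to decompose $\ket\phi = \ket{\phi_0} + \ket{\phi_1}$ along the orthogonal splitting $\cH = \ker P \op \operatorname{ran} P$, so that $P\ket\phi = \ket{\phi_1}$ and $\braket{\phi|P|\phi} = \braket{\phi_1|\phi_1} = \norm{\ket{\phi_1}}^2$. The hypothesis, applied to the normalized vector $\ket{\phi_1}/\norm{\ket{\phi_1}}$, gives $\braket{\phi_1|Q|\phi_1} = \norm{\ket{\phi_1}}^2 = \braket{\phi|P|\phi}$. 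It then remains to compare $\braket{\phi|Q|\phi}$ with $\braket{\phi_1|Q|\phi_1}$, which expands into the cross terms $\braket{\phi_0|Q|\phi_0} + 2\operatorname{Re}\braket{\phi_0|Q|\phi_1} + \braket{\phi_1|Q|\phi_1}$.

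The main obstacle is controlling these cross terms and showing the total is at least $\braket{\phi_1|Q|\phi_1}$. The key observation is that $\braket{\phi_1|Q|\phi_1} = 1 \cdot \norm{\ket{\phi_1}}^2$ means $\ket{\phi_1}/\norm{\ket{\phi_1}}$ achieves the maximal eigenvalue $1$ of $Q$, hence lies in the eigenspace $\braces{\ket\chi : Q\ket\chi = \ket\chi}$; concretely, since $0 \preceq Q \preceq I$ implies $0 \preceq I - Q$ and $\braket{\phi_1|(I-Q)|\phi_1} = 0$, the PSD operator $I - Q$ satisfies $(I-Q)\ket{\phi_1} = 0$, i.e. $Q\ket{\phi_1} = \ket{\phi_1}$. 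This collapses the cross term to $\braket{\phi_0|Q|\phi_1} = \braket{\phi_0|\phi_1} = 0$ by orthogonality, so $\braket{\phi|Q|\phi} = \braket{\phi_0|Q|\phi_0} + \braket{\phi_1|\phi_1} \ge \braket{\phi_1|\phi_1} = \braket{\phi|P|\phi}$, where the inequality uses $Q \succeq 0$. This establishes $P \preceq Q$ and completes the proof. The one genuinely nontrivial step is recognizing that attaining the value $1$ forces $\ket{\phi_1}$ into the unit eigenspace of $Q$, which I would justify via the vanishing of $\braket{\phi_1|(I-Q)|\phi_1}$ for the PSD operator $I - Q$.
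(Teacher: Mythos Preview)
Your proof is correct. The forward direction is immediate, and for the converse your key step---that $\braket{\phi_1|(I-Q)|\phi_1}=0$ with $I-Q\succeq 0$ forces $Q\ket{\phi_1}=\ket{\phi_1}$, hence the cross terms vanish---is exactly the right observation. One tiny omission: you should note that the case $\ket{\phi_1}=0$ is trivial (then $\braket{\phi|P|\phi}=0\le\braket{\phi|Q|\phi}$ since $Q\succeq 0$), so that you are not dividing by zero when normalizing.

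As for comparison: the paper does not actually give its own proof of this lemma. It is stated after a ``Thus:'' and the surrounding discussion defers to \cite[Theorem~3.2]{aQHL2019}. Your argument is therefore a self-contained elementary proof where the paper relies on a citation; the eigenspace argument you give (range of $P$ is contained in the $1$-eigenspace of $Q$) is the standard and natural route.
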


More generally, the precondition~$P = \proj\psi_{\vec q} \ot I_{\vec{q}^c}$ can be used to specify the behavior on input states where the quantum variables~$\vec q$ are in some pure state~$\ket\psi \in \cH_{\vec q}$.
The situation simplifies further if the postcondition is also given by a pure state, say~$Q=\proj\Phi$.
Indeed, the Hoare triple~$\hoare {\proj\Psi} S {\proj\Phi}$ is totally correct iff~$\Bracks{S}(\proj\Psi) = \proj\Phi$,
and partially correct iff~$\Bracks{S}(\proj\Psi) = p \proj\Phi$ for some arbitrary probability of termination~$p\in[0,1]$.
In other words, total (or partial) correctness of the above Hoare triple means that running the program~$S$ on state~$\ket\Psi$ results in state~$\ket\Phi$ (if the program terminates).
We use the above observations later when specifying teleportation and search in \cref{sec:examples}.
We caution that we \emph{cannot} specify program behavior on mixed initial states~$\rho$ by taking~$P=\rho$.
Since any mixed state can be purified, this does not impose a real restriction.

\subsection{Multiple Specifications}\label{subsec:multispec}
We often want to specify that a single program satisfies several Hoare triples at once.
We give three motivating examples:
\begin{enumerate}
\item
To prove a Hoare triple~$\hoare P {\qpenv{\qprepeat N S}} Q$ correct, it suffices to prove that the loop body satisfies~$\hoare {R_j} S {R_{j+1}}$ for all~$j\in\{0,\dots,N-1\}$, for predicates~$R_0,\dots,R_N$ with~$P \imp R_0$ and~$R_N \imp Q$.
\item
To assert that a program~$S$ terminates on any input, we can always add the Hoare triple~$\hoare I S I$ on top of any other Hoare triple that we also want to hold (as discussed earlier).
\item
A program~$S$ that creates a qubit that, when measured, gives $x \in \{0,1\}$ with 50\% probability each, can be specified by two Hoare triples~$\hoare {I/2} S {\proj x}$ for $x\in\{0,1\}$ (see \cref{subsec:example:fair-coin}).
\end{enumerate}
At other times, we may also want to allow the program~$S$ itself to depend on some parameter.
For example, the search algorithms of \cref{subsec:examples:qsearch} will necessarily have to depend explicitly on the database that is being queried.
We will thus consider Hoare triples
\begin{align*}
  \hoare {P_\lambda} {S_\lambda} {Q_\lambda},
\end{align*}
where the pre- and postconditions as well as the program are parameterized by~$\lambda$ in some index set~$\Lambda$.
Such a Hoare triple is (totally or partially) correct if it is correct for every definite value of~$\lambda$.
If the program~$S_\lambda$ does not depend on~$\lambda$, we have a single program that satisfies multiple specifications, while if it depends on (part of)~$\lambda$, then we have a family of programs.
To stick with the literature, we will think of~$\lambda$ as a \emph{formal (or meta) parameter} that is implicitly quantified over universally, but one could instead also adjust the definition of predicates to be functions~$\Lambda \to \cPHle$ (and extend the notions of expectation, implication, and so forth in a straightforward way).

The following notation will be useful: if~$\lambda$ ranges over a finite set of options~$\{\lambda_1, \ldots, \lambda_k\}$, then we will also write
\[
\hoare{P_{\lambda_1}, \dots, P_{\lambda_k}}{S_{\lambda_1}, \dots, S_{\lambda_k}}{Q_{\lambda_1}, \dots, Q_{\lambda_k}}
\]
instead of $\hoare {P_\lambda} {S_\lambda} {Q_\lambda}$.
If the precondition, the postcondition, or the program do not depend on~$\lambda$, we will write the corresponding term only once.
For instance, the third motivating example above could alternatively be written as $\hoare{I/2}{S}{\proj0, \proj1}$.

\section{Correctness by Construction for Quantum Programs}
\label{sec:qbc}
In the Correctness-by-Construction (CbC) approach, one starts with a specification and successively refines it to construct a program that is guaranteed to satisfy the initial specification.
Prior work on CbC defined refinement rules in terms of Hoare triples:
one could replace an ``abstract'' program~$S$ in a Hoare triple~$\hoare{P}{S}{Q}$ (where~$P, Q$ are some predicates) by some concrete program~$S'$ provided certain side conditions were satisfied, which often involved the validity of other Hoare triples for subprograms that have to be constructed beforehand.

Here, we extend the quantum while language by a new construct~$\qphole{P}{Q}$, called a \emph{hole}, which represents a yet-to-be-constructed subprogram that carries a precondition~$P$ and a postcondition~$Q$ (\cref{subsec:qwhile-with-holes}).
This notion of programs with holes is similar in spirit to abstract execution~\cite{absexec2019}, where one is interested in executing and analyzing programs containing ``abstract statements'',
and also the concept of typed holes in programming~\cite{gradualtyping,10.1145/3290327,wiki2014typedholes}.
For both partial and total correctness, we then proceed to provide \emph{refinement rules} (\cref{sec:refinement-relation}) that can be used to construct correct programs by replacing holes with concrete programs.
Next, we prove that our refinement rules are sound: any program~$S$ constructed from a specification~$\qpenv{\qphole{P}{Q}}$ must satisfy that specification, meaning that the Hoare triple~$\hoare{P}{S}{Q}$ is correct (\cref{sec:soundness}).
Finally, we show that our refinement rules are complete: any program~$S$ satisfying a Hoare triple~$\hoare{P}{S}{Q}$ can be constructed from the specification~$\qpenv{\qphole{P}{Q}}$ (\cref{sec:completeness}).

\subsection{Quantum While Language with Holes}
\label{subsec:qwhile-with-holes}
To support QbC specifications, we first extend our quantum while language in \cref{def:qwhile-syntax} with a new construct: \emph{holes}.
A hole is a yet-to-be-constructed program tagged with a precondition and a postcondition, such that the corresponding Hoare triple should be satisfied once the hole is filled by a program.
We define the syntax of the extended language below.

\begin{definition}[Syntax]
\label{def:programs-with-holes}
Programs in the \emph{quantum while language with holes} are given by the following grammar:
\begin{align*}
  \qwhilesyntax \;|\; \qphole{P}{Q}.
\end{align*}
The new construct~$\qpenv{\qphole{P}{Q}}$ is called a \emph{hole} with precondition~$P$ and a postcondition~$Q$, which are arbitrary predicates.
Apart from this, the above grammar is identical to the quantum while language (\cref{def:qwhile-syntax}).
A program that may contain holes is called an \emph{abstract program}, and one that does not contain any holes is called a \emph{concrete program}.
In other words, concrete programs are simply programs in the quantum while language.
\end{definition}

In the following we will also be interested in holes that satisfy multiple pre- and postconditions, which can be formalized just as discussed for Hoare triples (\cref{subsec:multispec}).
We will denote these as~$\qpenv{\qphole{P_\lambda}{Q_\lambda}}$, where~$\lambda$ is some formal parameter, or use short-hand notation such as~$\qpenv{\qphole{P,P'}{Q,Q'}}$.
See, e.g., \ruleref{HP.split} and \ruleref{H.repeat} below, and \cref{rem:multispec rules} for further discussion.

\subsection{Refinement Rules}
\label{sec:refinement-relation}
Refinement is the process of replacing holes in abstract programs with other (abstract or concrete) programs.
To this end, we define \emph{refinement relations} on abstract programs, and we use these iteratively to construct concrete programs from specifications given by a single hole~$\qpenv{\qphole{P}{Q}}$.
We first define refinement rules that ensure partial correctness:

\begin{definition}[Refinement for partial correctness]\label{def:parref}
We define a relation~$\parref$, called \emph{refinement for partial correctness} on programs with holes~(\cref{def:programs-with-holes}) as follows:
For any two predicates~$P$~and~$Q$,
\begin{enumerate}[align=right,leftmargin=1.5cm]
  \ruleitemNOREF{H.skip}
    $\qpenv{\qphole{P}{Q}} \parref \qpenv{\qpskip}$,
    if $P \imp Q$.
  \ruleitemNOREF{H.init}
    $\qpenv{\qphole{P}{Q}} \parref \qpenv{\qpinit{\vec q}}$,
    if $P \imp \sum_{\vec x\in\Sigma_{\vec q}} \ketbra{\vec x}{\vec 0}_{\vec q} Q \ketbra{\vec 0}{\vec x}_{\vec q}$.
  \ruleitemNOREF{H.unit}
    $\qpenv{\qphole{P}{Q}} \parref \qpenv{\qpunitary{\vec{q}}{U}}$,
    if $P \imp U_{\vec q}^\dagger Q U_{\vec q}$.
%
  \ruleitemNOREF{H.seq}
    $\qpenv{\qphole{P}{Q}} \parref \qpenv{\qphole{P}{R} ; \qphole{R}{Q}}$
    for any predicate~$R$.
  \ruleitem{HP.split}
    $\qpenv{\qphole{P}{Q}} \parref \qpenv{\qphole{P_\gamma}{Q_\gamma}}$
    for any two families of predicates~$P_\gamma,Q_\gamma$ for~$\gamma$ in some index set $\Gamma$, such that~$P \imp \sum_\gamma p_\gamma P_\gamma$ and~$\sum_\gamma p_\gamma Q_\gamma \imp Q$ for a probability distribution~$p_\gamma$.
  \ruleitemNOREF{H.repeat}
    $\qpenv{\qphole{P}{Q}} \parref \qpenv{\qprepeat{N}{\qphole{R_j}{R_{j+1}}}}$,
    where $j \in \irange{0}{N - 1}$,
    for any predicates~$R_0, R_1, \ldots, R_N$
    such that $P \imp R_0$ and $R_N \imp Q$.
  \ruleitemNOREF{H.case}
  $\qpenv{\qphole{P}{Q}} \parref \qpenv{
    \qpselect {\vec M} {\vec q} {\{\omega \colon \qphole{P_\omega}{Q}\}_{\omega\in\Omega}}
  }$

  for any family of predicates~$P_\omega$ for $\omega\in\Omega$ such that $P \imp \sum_{\omega \in \Omega} \cM_{\omega}(P_\omega)$.
  \ruleitem{HP.while}
    $\qpenv{\qphole{P}{Q}} \parref \qpenv{
      \qpwhile{B}{\vec q}{\qphole{R}{\cB_{0,\q}(Q) + \cB_{1,\q}(R)}}
    }$,

    for any predicate $R$ such that $P \imp \cB_{0,\q}(Q) + \cB_{1,\q}(R)$.
\end{enumerate}
We also have rules for composite statements:
\begin{enumerate}[align=right,leftmargin=1.5cm,resume] 
\ruleitemNOREF{C.seqL}
  $\qpenv{\qpabsprog{S_1'};\qpabsprog{S_2}}
    \parref
    \qpenv{\qpabsprog{S_1};\qpabsprog{S_2}}$,
  if $S_1' \parref S_1$.
\ruleitemNOREF{C.seqR}
  $\qpenv{\qpabsprog{S_1};\qpabsprog{S_2'}}
    \parref
    \qpenv{\qpabsprog{S_1};\qpabsprog{S_2}}$,
  if ${S_2'} \parref {S_2}$.
\ruleitemNOREF{C.repeat}
  $\qpenv{\qprepeat{N}{\qpabsprog{S'}}}
    \parref
  \qpenv{\qprepeat{N}{\qpabsprog{S}}}$,
  if $S' \parref S$.
\ruleitemNOREF{C.case}
  $\qpenv{\qpselect{\vec M}{\vec q}{\omega \colon \qpabsprog{S'_\omega}~, \ \dots}} \parref$\\
  $\qpenv{\qpselect{\vec M}{\vec q}{\omega \colon \qpabsprog{S_\omega}~, \ \dots}}$, \\
  if ${S'_\omega} \parref {S_\omega}$ for one $\omega\in\Omega$ (and the rest unchanged).
\ruleitemNOREF{C.while}
  $\qpenv{\qpwhile{B}{\vec q}{\qpabsprog{S'}}} \parref \qpenv{\qpwhile{B}{\vec q}{\qpabsprog{S}}}$,\\
  if ${S'}~\parref~{S}$.
\end{enumerate}
For any two programs $S, S'$, we say \emph{$S'$ refines in one step to $S$ ensuring partial correctness} if~$S' \parref S$.
More generally, for any~$k\geq0$ we define~$S' \parref^k S$ if $S'$ refines to $S$ in $k$ such steps.
We say \emph{$S'$ refines to $S$ ensuring partial correctness} if~$S$ can be obtained from $S'$ by applying any number of refinement steps and denote this by~$S' \parref^* S$.
Clearly, $\parref^* = \bigcup_{k = 0}^{\infty} \parref^k$ is the reflexive and transitive closure of the relation~$\parref$.
\end{definition}

In \cref{def:parref}, the rules labeled (H.$*$) and (HP.$*$) are used to refine a single hole to another program (H stands for hole).
The rules labeled (C.$*$) are used to refine holes in composite programs (C stands for composite).
See \cref{subsec:example:fair-coin,app:coin-toss-till-zero} for pedagogical expositions on applying refinements to construct quantum coin-tossing programs.
The first three rules refine to concrete program statements:
\ruleref{H.skip} refines to a $\qpskip$ statement,
\ruleref{H.init} refines to an initialization $\qpinit{\q}$,
and
\ruleref{H.unit} refines to a unitary application $\qpunitary{\q}{U}$.
To refine a hole to a sequence of two holes, we can use the \ruleref{H.seq} rule with any arbitrary intermediate condition $R$.
To motivate \ruleref{HP.split}, observe that if a program satisfies two Hoare triples $\hoare P S Q$ and $\hoare{P'} S {Q'}$, then it also satisfies any combination $\hoare{p P + (1-p) Q} S {p Q + (1-p)Q'}$ for~$p\in[0,1]$.
So to ensure the latter it suffices to construct a program that satisfies the former.
Note that in stating this rule we use the syntax for multiple specifications discussed below \cref{def:programs-with-holes}.
In \ruleref{HP.split} we also allow for weakening preconditions and strengthening postconditions; we isolate this in \ruleref{H.sw} below for convenience.
To refine a hole to a \textbf{repeat} statement using \ruleref{H.repeat}, we must find a family of predicates $R_0, R_1, \ldots, R_N$,
such that $R_n$ holds after running the loop body $0 \le n \le N$ times, with $R_0$ implied by the precondition and $R_N$ implying the postcondition.
The body of the \textbf{repeat} statement is $\qpenv{\qphole{R_j}{R_{j + 1}}}$ where $j \in \irange{0}{N-1}$ is a formal parameter, meaning that the yet-to-be-constructed program must satisfy these specification for all such~$j$.
To refine a hole to a \textbf{case} statement using \ruleref{H.case}, we must find a family of predicates~$P_\omega$ for each measurement outcome~$\omega$ such that if~$P$ holds before the measurement then~$P_\omega$ holds after measurement upon seeing outcome~$\omega$.
Finally, to refine a hole to a \textbf{while} statement using \ruleref{HP.while}, we need to find an ``invariant'' $R$ for the loop body.
If the measurement succeeds after the execution of the loop body, then~$R$ must hold; otherwise~$Q$ must hold.
Finally, the composite rules allow refining any hole in a composite program using any of the rules above.

\begin{remark}[Multiple specifications and formal parameters]\label{rem:multispec rules}
As mentioned, some refinement rules can introduce holes with multiple specifications.
For example, \ruleref{H.repeat} introduces a hole with pre- and postcondition labeled by an index $j \in \irange{0}{N-1}$, and \ruleref{HP.split} introduces a new parameter~$\gamma$ in some arbitrary index set~$\Gamma$.
We can model this formally by implicitly extending the index set~$\Lambda$ of \cref{subsec:multispec} to include this new parameter.
Later refinements may depend on this parameter.
For example, if we refine a hole produced by \ruleref{H.repeat} with the \ruleref{H.seq} rule, then the intermediate predicates~$R$ may also depend on~$j$.
See \cref{sec:examples} for many examples.
\end{remark}

We can also apply these rules to the syntactic sugar introduced in \cref{subsec:qwhile} and deduce some other rules for convenience.
For example, since an \textbf{if} statement is shorthand for a \textbf{case} statement, we also have
\begin{enumerate}[align=left]
\ruleitem{H.ifElse}
  $\qpenv{\qphole{P}{Q}} \parref \qpenv{
    \qpifte {B} {\vec q} {\qphole{R_1}{Q} } { \qphole{R_0}{Q} }
  }$, \\
  for any predicates~$R_0,R_1$ s.th.\ $P \imp \cB_{0,\q}(R_0)+\cB_{1,\q}(R_1)$.
\end{enumerate}
We can also introducing an \textbf{if} statement without an \textbf{else} branch:
\begin{enumerate}[align=left]
\ruleitem{H.if}
  $\qpenv{\qphole{P}{Q}} \parref^* \qpenv{\qpif {B} {\vec q} { \qphole{R}{Q} } }$, \\
  for any predicate~$R$ such that $P \imp \cB_{0,\q}(Q) + \cB_{1,\q}(R)$.
\end{enumerate}
This follows from \ruleref{H.ifElse} by further refining the second hole using \ruleref{H.skip}.

Finally, we have the following rule that shows that we may always construct a program that has a weaker precondition and a stronger postcondition:
\begin{enumerate}[align=left]
  \ruleitem{H.sw}
    $\qpenv{\qphole{P}{Q}} \parref \qpenv{\qphole{P'}{Q'}}$
    for any predicates $P', Q'$ such that $P \imp P'$ and $Q' \imp Q$.
\end{enumerate}
This is a special case of \ruleref{HP.split} where we take the two families to consist of a single predicate~$P'$ and $Q'$, respectively.

\begin{remark}[Challenges of quantum verification]
The quantum setting poses some interesting new challenges.
For example, in a \ruleref{H.case} rule for classical programs, one can always choose the predicates as~$P_\omega = P \land (q = \omega)$.
In the quantum setting, there is generally no canonical choice of the predicates~$P_\omega$.
We leave the problem of finding suitable heuristics
to future work.
\end{remark}

We also define refinement rules that ensure total correctness:

\begin{definition}[Refinement for total correctness]
\label{def:totref}
We define a relation~$\totref$, called \emph{refinement for total correctness} on programs with holes, by using all the rules from \cref{def:parref} that are labeled (H.$*$) or (C.$*$), by replacing each $\parref$ with $\totref$, but replacing \ruleref{HP.while} and \ruleref{HP.split} by the following rules respectively:
\begin{enumerate}[align=right,leftmargin=1.5cm]
\ruleitemNOREF{HT.while}
  $\qpenv{\qphole{P}{Q}}
  \totref
  \qpenv{
      \qpwhile{B}{\q}{
      \qphole{R_{n+1}}{\cB_{0,\q}(Q) + \cB_{1,\q}(R_n)}
      }
  }
  $, \\
  for any binary measurement~$B$ and sequence of predicates~$\{R_n\}_{n\in\N}$ that is weakly increasing in the sense that~$R_n \imp R_{n+1}$ for all~$n\in\N$, such that~$R_0 = 0$ and the limit $R := \lim_{n\to\infty} R_n$ satisfies $P \imp \cB_{0,\q}(Q) + \cB_{1,\q}(R)$.
\ruleitemNOREF{HT.split}
    $\qpenv{\qphole{P}{Q}} \totref \qpenv{\qphole{P_\gamma}{Q_\gamma}}$
    for any two families of predicates~$P_\gamma,Q_\gamma$ for~$\gamma$ in some index set $\Gamma$, such that~$P \imp \sum_\gamma p_\gamma P_\gamma$ and~$\sum_\gamma p_\gamma Q_\gamma \imp Q$ for some~$p_\gamma\geq0$.
\end{enumerate}
For any two programs $S, S'$, we say that \emph{$S'$ refines in one step to $S$ ensuring total correctness} if~$S' \totref S$.
More generally, for any~$k\geq0$, we define~$S' \totref^k S$ if $S'$ refines to $S$ in $k$ such steps.
We say that~\emph{$S'$ refines to $S$ ensuring total correctness} if~$S$ can be obtained from $S'$ by applying any number of refinement steps and denote this by~$S' \totref^* S$.
Similarly as above, $\totref^* = \bigcup_{k = 0}^{\infty} \totref^k$ is the reflexive and transitive closure of the relation~$\totref$.
\end{definition}

The new \ruleref{HT.while} rule can be used to construct \textbf{while} loops that are totally correct.
To understand it intuitively, note that the subprogram obtained by unrolling the loop body $n$ times (ignoring the initial measurement) satisfies the specification~$\qpenv{\qphole{R_n}{\cB_0(Q)}}$,
This can be interpreted as follows: if we start inside the loop and~$R_n$ holds (with some probability), then the loop terminates within~$n$ iterations in a state that satisfies~$Q$ (with at least that probability).
We remark that the limit~$R$ (which is easily seen to always exist) precisely satisfies the requirements on the ``loop invariant'' of the \ruleref{HP.while} rule.
See \cref{sec:examples} for examples and \cref{new:sec:related-work} for related work on proving termination of probabilistic programs using loop invariant predicates.

The \ruleref{HT.split} rule is more general than \ruleref{HP.split} as it allows arbitrary non-negative~$p_\omega$ that need not add up to one.


\subsection{Soundness of Refinement}
\label{sec:soundness}
We now show that the sets of rules given above are \emph{sound}, meaning that each ensures the correctness of constructed programs, in the following sense:
if one starts with a specification, that is, a single hole~$\qpenv{\qphole{P}{Q}}$,
and repeatedly refines to construct a concrete program~$S$ (i.e., a program without holes),
then the constructed program satisfies the initial specification, meaning that Hoare triple $\hoare{P}{S}{Q}$ is correct.
This holds for both partial and total correctness:


\begin{restatable}[Soundness of refinement for partial correctness]{theorem}{thmsoundnesspartial}
\label{thm:soundness-partial}
For any two predicates~$P, Q$ and any concrete program~$S$,
if $\qpenv{\qphole{P}{Q}} \parref^* {S}$, then $\parcorr \hoare{P}{S}{Q}$.
\end{restatable}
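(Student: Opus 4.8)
The plan is to prove a stronger statement about \emph{abstract} programs (those that may still contain holes) and then specialize to concrete $S$. For an abstract program $A$ and a specification $(P,Q)$, I would call $A$ \emph{$(P,Q)$-valid} if every way of filling each hole $\qphole{P_i}{Q_i}$ of $A$ by a concrete program $T_i$ with $\parcorr\hoare{P_i}{T_i}{Q_i}$ yields a concrete program $A[T]$ that is partially correct, i.e.\ $\parcorr\hoare{P}{A[T]}{Q}$. Two observations anchor the argument: a lone hole $\qphole{P}{Q}$ is trivially $(P,Q)$-valid (the filling is the only move), and a concrete program $S$ is $(P,Q)$-valid if and only if $\parcorr\hoare{P}{S}{Q}$ (the empty filling does nothing). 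It therefore suffices to show that validity is \emph{preserved} along $\parref$: if $A$ is $(P,Q)$-valid and $A \parref A'$, then $A'$ is $(P,Q)$-valid. Induction on the number of refinement steps then carries $(P,Q)$-validity from the initial hole $\qphole{P}{Q}$ to the final concrete program $S$, which gives the theorem.

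\textbf{Reducing a single step.} To analyze one step I would first record the routine structural fact that every $A \parref A'$ amounts to replacing exactly one hole $\qphole{P_h}{Q_h}$ of $A$ by the right-hand side $R$ of one of the single-hole rules \ruleref{H.skip}--\ruleref{HP.while}, leaving the rest of $A$ untouched; this follows by a short induction that peels off the composite rules \ruleref{C.seqL}--\ruleref{C.while}, each of which merely performs such a replacement inside a fixed one-constructor context. Granting this, preservation of validity reduces to \emph{local soundness of the single-hole rules}: for each such rule, if the holes occurring in $R$ are filled by partially correct programs, then the filled $R$ is partially correct with respect to $(P_h,Q_h)$. Indeed, given a correct filling of $A'$, local soundness makes the filled copy of $R$ a correct realization of the hole $h$; substituting it back produces a correct filling of $A$ whose result equals $A'$'s filled result \emph{as the very same concrete program}, so $(P,Q)$-validity of $A$ transfers verbatim to $A'$. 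I emphasize that this step needs no separate compositionality lemma, precisely because the two sides are literally identical concrete programs.

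\textbf{Local soundness, and the main obstacle.} What remains is to verify local soundness rule by rule, which is where the real work lies. Most cases are direct translations of familiar Hoare-logic facts, read off from \cref{def:partial-correctness} in its dual form $\Bracks{S}^\dagger(I-Q) \preceq I-P$ together with \cref{def:semantics}: \ruleref{H.skip} from $\Bracks{\qpskip}=\mathrm{id}$, \ruleref{H.unit} and \ruleref{H.init} by computing the relevant adjoint superoperators, \ruleref{H.seq} by composing the two $(I-\cdot)$ inequalities, \ruleref{H.repeat} by iterating \ruleref{H.seq}, \ruleref{HP.split} by taking convex combinations, and \ruleref{H.case} using $\sum_{\omega}\cM_{\omega,\q}^\dagger(I)=I$. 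The hard part will be \ruleref{HP.while}, where the infinite sum in the loop semantics must be controlled. Here I would set $J=\cB_{0,\q}(Q)+\cB_{1,\q}(R)$, let $T$ be a body with $\parcorr\hoare{R}{T}{J}$, write $W$ for the resulting loop, and put $u_k=(\Bracks{T}\circ\cB_{1,\q})^k(\rho)$ for the state entering the $k$-th iteration. Using the recurrence \cref{eq:while loop recurrence} and the self-adjointness of $\cB_{0,\q},\cB_{1,\q}$, I would show that $f(\rho):=\satisfies{\rho}{I-J}-\satisfies{\Bracks{W}(\rho)}{I-Q}$ obeys $f(\rho)\ge f(u_1)\ge f(u_2)\ge\cdots$, each inequality coming from partial correctness of $T$. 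Since $\Bracks{W}(u_n)$ is the tail $\sum_{j\ge n}\cB_{0,\q}(u_j)$ of a trace-bounded convergent series, $\tr(\Bracks{W}(u_n))\to 0$, forcing $f(u_n)\ge-\tr(\Bracks{W}(u_n))\to 0$ and hence $f(\rho)\ge 0$; finally $P\imp J$ yields $\parcorr\hoare{P}{W}{Q}$. The delicate point is this monotone-plus-vanishing-tail limiting argument, which stands in for the simpler fixed-point reasoning available in the total-correctness setting.
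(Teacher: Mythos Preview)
Your proposal is correct, and the local-soundness verifications you sketch---including the delicate \ruleref{HP.while} case---go through as written. The approach, however, is organized differently from the paper's.

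The paper does \emph{not} introduce an invariant on abstract programs. Instead it inducts directly on the length~$k$ of the refinement chain $\qpenv{\qphole{P}{Q}} \parref S' \parref^{k-1} S$, performing a case split on the \emph{first} step (necessarily one of the H-rules applied to the lone starting hole). A structural lemma (\cref{lem:chain_split}) then says that once~$S'$ is a sequence, repeat, case, or while, the remaining $k{-}1$ steps act entirely inside its immediate subprograms; the induction hypothesis on those shorter chains yields partial correctness of the subprograms, and the case is closed by the same local argument you give. Your decomposition is in a sense dual: rather than looking at the first step and collapsing the tail via \cref{lem:chain_split}, you define $(P,Q)$-validity, carry it forward one step at a time, and collapse each step to a single H-rule application by unwinding the C-rules. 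Both routes need exactly the same rule-by-rule computations; the difference is bookkeeping. Your version has the advantage that the invariant is explicit and would immediately adapt to proving soundness for arbitrary starting abstract programs rather than just a single hole, while the paper's version avoids the auxiliary definition and is slightly shorter. For \ruleref{HP.while} specifically, the paper derives a telescoping bound on $\tr\!\big((I-Q)\sum_{k=0}^n \rho_k\big)$ and lets $n\to\infty$; your monotone-sequence-with-vanishing-tail argument using $f(u_n)$ is a clean repackaging of the same limit.
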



\begin{restatable}[Soundness of refinement for total correctness]{theorem}{thmsoundnesstotal}
\label{thm:soundness-total}
For any two predicates~$P, Q$ and any concrete program~$S$,
if $\qpenv{\qphole{P}{Q}} \totref^* S$ then $\totcorr \hoare{P}{S}{Q}$.
\end{restatable}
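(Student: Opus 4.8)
The plan is to induct on the number of refinement steps in $\qpenv{\qphole{P}{Q}} \totref^* S$, but since the intermediate programs still carry holes, I first fix the right invariant. Call an abstract program $T$ \emph{good} for a specification $(P,Q)$ if every way of filling the holes of $T$ by concrete programs that respect the holes' own pre- and postconditions yields a concrete program $S'$ with $\totcorr \hoare{P}{S'}{Q}$. The single hole $\qpenv{\qphole{P}{Q}}$ is trivially good for $(P,Q)$, and a concrete program (one with no holes) is good for $(P,Q)$ exactly when $\totcorr \hoare{P}{S'}{Q}$ holds; so it suffices to show that each refinement step preserves goodness. Throughout I would use the characterization $\totcorr \hoare{P}{S'}{Q} \iff P \preceq \Bracks{S'}^\dagger(Q)$ from \cref{def:total-correctness}, together with the fact that $\Bracks{S'}^\dagger$ is linear and L\"owner-monotone, being the adjoint of a completely positive map.

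First I would verify that each base hole-rule $\qphole{P}{Q} \totref T'$ makes $T'$ good for $(P,Q)$. For \ruleref{H.skip}, \ruleref{H.init}, and \ruleref{H.unit} the right-hand side is concrete and the side condition is literally $P \preceq \Bracks{T'}^\dagger(Q)$, using the adjoints read off from \cref{def:semantics}. For \ruleref{H.seq}, \ruleref{HT.split}, \ruleref{H.repeat}, and \ruleref{H.case} one fills the new holes by correct programs and chains the resulting L\"owner inequalities through the adjoint semantics: for \ruleref{H.seq}, $P \preceq \Bracks{C_1}^\dagger(R) \preceq \Bracks{C_1}^\dagger(\Bracks{C_2}^\dagger(Q)) = \Bracks{C_1;C_2}^\dagger(Q)$ by monotonicity; for \ruleref{H.repeat} one iterates $R_j \preceq \Bracks{C}^\dagger(R_{j+1})$ $N$ times; for \ruleref{H.case} one sums $\cM_{\omega,\q}(P_\omega) \preceq \cM_{\omega,\q}(\Bracks{S_\omega}^\dagger(Q))$ over $\omega$ and recognizes the total as $\Bracks{\cdot}^\dagger(Q)$; and for \ruleref{HT.split} one takes the $p_\gamma$-combination of $P_\gamma \preceq \Bracks{C}^\dagger(Q_\gamma)$, where the $p_\gamma$ need not sum to one precisely because total correctness carries no termination slack. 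These are routine.

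The main obstacle is \ruleref{HT.while}. Writing $W$ for the loop and $C$ for a correct filling of its body, I would start from the adjoint of the loop recurrence \cref{eq:while loop recurrence}, namely $V := \Bracks{W}^\dagger(Q) = \cB_{0,\q}(Q) + \cB_{1,\q}\parens*{\Bracks{C}^\dagger(V)}$, and recall that $V$ is the increasing limit of the finite unrollings built in \cref{def:semantics}. Body correctness for all $n$ reads $R_{n+1} \preceq \Bracks{C}^\dagger\parens*{\cB_{0,\q}(Q) + \cB_{1,\q}(R_n)}$. I would then prove $R_n \preceq \Bracks{C}^\dagger(V)$ for all $n$ by induction: the base case is $R_0 = 0$, and the step uses $\cB_{0,\q}(Q) + \cB_{1,\q}(R_n) \preceq V$ followed by monotonicity of $\Bracks{C}^\dagger$. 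Passing to the limit $R = \lim_n R_n$ gives $R \preceq \Bracks{C}^\dagger(V)$, whence $\cB_{0,\q}(Q) + \cB_{1,\q}(R) \preceq V$; combined with the rule's side condition $P \preceq \cB_{0,\q}(Q) + \cB_{1,\q}(R)$ this yields $P \preceq V = \Bracks{W}^\dagger(Q)$, i.e. total correctness. The delicate points are analytic: the limit $R$ exists because $\{R_n\}$ is increasing and bounded by $I$, one must justify interchanging this increasing limit with the L\"owner order and with $\cB_{1,\q}$, and one must confirm that the adjoint of the loop's defining supremum is indeed the fixed point $V$ above.

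Finally, to handle the composite rules and assemble the induction, I would observe that $\totref$ is exactly the congruence closure of the base hole-rules: every single step factors as $\mathcal C[h] \totref \mathcal C[T']$ for some program context $\mathcal C$, some hole $h = \qpenv{\qphole{P_h}{Q_h}}$, and a base rule $h \totref T'$. The key substitution lemma is that goodness transfers through contexts, and I would prove it by a direct filling argument rather than by induction on $\mathcal C$: any concrete filling of $\mathcal C[T']$ coincides with the filling of $\mathcal C[h]$ in which $h$ is filled by the concrete program $\widehat{T'}$ obtained from the same data; since $\widehat{T'}$ is $(P_h,Q_h)$-correct by the hole-rule lemma, this is a spec-respecting filling of $\mathcal C[h]$, so goodness of $\mathcal C[h]$ forces $(P,Q)$-correctness. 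Hence each step preserves goodness for the top-level $(P,Q)$, and since $\qpenv{\qphole{P}{Q}}$ is good and the final $S$ is concrete, $\totcorr \hoare{P}{S}{Q}$ follows. The parameterized (multiple-specification) case of \cref{subsec:multispec} is handled by reading all the statements above for each fixed value of the formal parameter.
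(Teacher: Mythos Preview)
Your proposal is correct and arrives at the same per-rule verifications as the paper, but the global organization is genuinely different. The paper inducts on the length~$k$ of the refinement chain and, at the induction step, uses a separate lemma (\cref{lem:chain_split}) stating that any refinement of a composite program decomposes into independent, strictly shorter refinements of its immediate subprograms; this lets the induction hypothesis be applied to chains starting from the freshly introduced holes. Your approach instead carries a forward invariant (``goodness'') along single refinement steps, and replaces \cref{lem:chain_split} by the observation that a single step is always a base hole-rule under a context, together with a substitution lemma proved by the direct filling argument you sketch. This is slightly more abstract but cleaner: it isolates exactly once the reason composite rules are sound (fillings compose), whereas the paper re-derives that implicitly via the decomposition lemma and per-rule semantic calculations. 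For \ruleref{HT.while} the two arguments are dual: the paper unrolls the loop semantics on states and telescopes the inequalities $\tr(R_{n+1}\sigma) \le \tr(Q\,\cB_{0,\q}(\Bracks{C}(\sigma))) + \tr(R_n\,\cB_{1,\q}(\Bracks{C}(\sigma)))$ down to $R_0 = 0$, while you work on the adjoint side and show $R_n \preceq \Bracks{C}^\dagger(V)$ by induction using the fixed-point identity for $V = \Bracks{W}^\dagger(Q)$ obtained from \cref{eq:while loop recurrence}. Both are valid; yours is arguably more direct once one has the recurrence.
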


\Cref{thm:soundness-partial,thm:soundness-total} are proved by induction over the length of the refinement chain, with the \ruleref{HP.while} and \ruleref{HT.while} rules being most delicate.
The~detailed~proofs~can~be~found~in~\cref{app:soundness}.

\subsection{Completeness of Refinement}
\label{sec:completeness}
We now show that the refinement rules given above are \emph{complete}, meaning that they allow us to construct any correct program, in the following sense:
if one has a concrete program $S$ (i.e., a program without holes) such that the Hoare triple $\hoare{P}{S}{Q}$ is correct,
then the program can be constructed from $\qpenv{\qphole{P}{Q}}$ by applying a finite number of refinements.
Formally, we have the following results:

\begin{restatable}[Completeness of refinement for partial correctness]{theorem}{thmcompletenesspartial}
\label{thm:completeness-partial}
For any two predicates $P, Q$ and any concrete program $S$,
if $\parcorr \hoare{P}{S}{Q}$
then $\qpenv{\qphole{P}{Q}} \parref^* S$.
\end{restatable}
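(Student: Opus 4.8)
The plan is to prove the statement by structural induction on the concrete program $S$, using throughout the adjoint form of partial correctness from \cref{def:partial-correctness}: $\parcorr \hoare{P}{S}{Q}$ holds if and only if $I - P \succeq \Bracks{S}^\dagger(I - Q)$. Because rules such as \ruleref{H.repeat} and \ruleref{H.case} introduce holes carrying formal parameters, I would strengthen the induction hypothesis to the multiple-specification form of \cref{subsec:multispec}: for a fixed concrete program $S$ and any family of predicate pairs $\{(P_\lambda, Q_\lambda)\}_{\lambda}$ with $\parcorr \hoare{P_\lambda}{S}{Q_\lambda}$ for every $\lambda$, one has $\qpenv{\qphole{P_\lambda}{Q_\lambda}} \parref^* S$; the theorem is then the singleton case. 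Since every predicate choice below is made pointwise in $\lambda$, I suppress $\lambda$ for readability. The argument rests repeatedly on the fact that each adjoint semantics $\Bracks{\cdot}^\dagger$ is completely positive and sub-unital, hence maps predicates into $[0,I]$.

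For the atomic statements $\qpskip$, $\qpinit{\vec q}$, and $\qpunitary{\vec q}{U}$, the semantics is trace preserving, so partial correctness reduces to $P \imp \Bracks{S}^\dagger(Q)$; reading off $\Bracks{S}^\dagger$ from \cref{def:semantics} shows that this is verbatim the side condition of \ruleref{H.skip}, \ruleref{H.init}, and \ruleref{H.unit}, so a single refinement step suffices. For the composite statements the strategy is uniform: take each intermediate predicate to be the weakest precondition of the trailing subprogram, apply the matching (H.$*$) rule, invoke the induction hypothesis on the subprograms, and close the resulting holes with the composite rules \ruleref{C.seqL}, \ruleref{C.seqR}, \ruleref{C.repeat}, \ruleref{C.case}, and \ruleref{C.while}. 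For a sequence $S_1; S_2$ I would set $R = I - \Bracks{S_2}^\dagger(I-Q)$; for a \textbf{case} statement with branches $S_\omega$ I would set $P_\omega = I - \Bracks{S_\omega}^\dagger(I-Q)$ and verify $P \imp \sum_\omega \cM_\omega(P_\omega)$ using $\sum_\omega \cM_\omega(I) = I$; and for a \textbf{repeat} statement with body $C$ I would set $R_N = Q$ and $R_j = I - \Bracks{C}^\dagger(I - R_{j+1})$. In each case complete positivity and sub-unitality place these operators in $[0,I]$, so they are genuine predicates, and by construction the sub-triples handed to the induction hypothesis hold with equality in the adjoint form.

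The delicate case is a \textbf{while} loop $S$ with binary measurement $B$ and body $C$, where \ruleref{HP.while} requires an invariant $R$ with $P \imp \cB_{0,\q}(Q) + \cB_{1,\q}(R)$ and $\parcorr \hoare{R}{C}{\cB_{0,\q}(Q) + \cB_{1,\q}(R)}$. The key is to extract $R$ in closed form from the loop recurrence \eqref{eq:while loop recurrence}. Writing $W = \Bracks{S}^\dagger(I - Q)$ and taking adjoints in \eqref{eq:while loop recurrence} (using that $\cB_{0,\q}$ and $\cB_{1,\q}$ are self-adjoint) gives the fixed-point identity $W = \cB_{0,\q}(I - Q) + \cB_{1,\q}(\Bracks{C}^\dagger(W))$. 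I would then take $R = I - \Bracks{C}^\dagger(W)$. The complement of the one-step predicate is $I - \cB_{0,\q}(Q) - \cB_{1,\q}(R) = \cB_{0,\q}(I-Q) + \cB_{1,\q}(\Bracks{C}^\dagger(W)) = W$, so the entry condition $P \imp \cB_{0,\q}(Q) + \cB_{1,\q}(R)$ is precisely $I - P \succeq W$, i.e.\ the assumed partial correctness of the loop; and the body triple, which unfolds to $I - R \succeq \Bracks{C}^\dagger(W)$, holds with equality by the definition of $R$. A final refinement of the body hole to $C$ via the induction hypothesis, closed by \ruleref{C.while}, finishes this case.

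I expect the loop invariant to be the main obstacle. In contrast to the other cases, the naive weakest-precondition guess $R = I - \Bracks{S}^\dagger(I - Q)$ fails the body's fixed-point constraint; one must instead push it through one further application of $\Bracks{C}^\dagger$, taking $R = I - \Bracks{C}^\dagger(\Bracks{S}^\dagger(I-Q))$, and the verification that this simultaneously meets the entry and body conditions rests on the fixed-point identity derived from \eqref{eq:while loop recurrence}. A secondary bookkeeping point is that \ruleref{H.repeat} and the \textbf{case} refinement enlarge the parameter set at each step, so the family-strengthened induction hypothesis must be applied with the enlarged index, and one must check that the refinement chain built for each subprogram is uniform in the newly introduced parameter. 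All side-condition implications invoked above hold as genuine L\"owner inequalities, consistent with the relative completeness setting.
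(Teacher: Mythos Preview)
Your proposal is correct and takes essentially the same approach as the paper: structural induction on $S$ using the adjoint characterization $I-P \succeq \Bracks{S}^\dagger(I-Q)$, with identical weakest-precondition choices in every case, including the loop invariant $R = I - \Bracks{C}^\dagger(\Bracks{S}^\dagger(I-Q))$ verified via the fixed-point identity from \cref{eq:while loop recurrence}. The one addition is your explicit strengthening of the induction hypothesis to the multiple-specification form, which the paper leaves implicit (it simply applies the induction hypothesis to holes carrying formal parameters such as $\qphole{R_j}{R_{j+1}}$ without further comment, relying on the conventions of \cref{subsec:multispec} and the fact that all predicate choices are uniform in the parameter).
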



\begin{restatable}[Completeness of refinement for total correctness]{theorem}{thmcompletenesstotal}
\label{thm:completeness-total}
For any two predicates~$P, Q$ and any concrete program~$S$,
if $\totcorr \hoare{P}{S}{Q}$ then $\qpenv{\qphole{P}{Q}} \totref^* S$.
\end{restatable}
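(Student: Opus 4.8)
\section*{Proof proposal for \texorpdfstring{\cref{thm:completeness-total}}{Theorem (Completeness for total correctness)}}

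The plan is to prove a slightly more general, multi-specification statement by structural induction on the concrete program $S$: for every concrete $S$ and every family of predicate pairs $\{(P_\lambda, Q_\lambda)\}_{\lambda\in\Lambda}$ with $\totcorr\hoare{P_\lambda}{S}{Q_\lambda}$ for all $\lambda$, one has $\qpenv{\qphole{P_\lambda}{Q_\lambda}} \totref^* S$. This extra generality is forced on us: rules like \ruleref{H.repeat}, \ruleref{H.case}, and \ruleref{HT.while} introduce body holes carrying a whole family of pre- and postconditions (indexed by the loop counter, the measurement outcome, or $n\in\N$), so the induction hypothesis must already handle a single concrete subprogram that meets many specifications at once. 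Throughout, the guiding principle is to let the weakest-precondition predicate $\Bracks{\cdot}^\dagger(\cdot)$ dictate the intermediate predicates appearing in each rule; total correctness of the given triple, $P \preceq \Bracks{S}^\dagger(Q)$, then supplies exactly the side condition each rule demands.

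For the base cases $S \in \{\qpenv{\qpskip},\ \qpenv{\qpinit{\vec q}},\ \qpenv{\qpunitary{\vec q}{U}}\}$ I would compute $\Bracks{S}^\dagger(Q)$ from \cref{def:semantics}, obtaining respectively $Q$, $\sum_{\vec x}\ketbra{\vec x}{\vec 0}_{\vec q} Q \ketbra{\vec 0}{\vec x}_{\vec q}$, and $U_{\vec q}^\dagger Q U_{\vec q}$; total correctness says that $P$ lies below each of these, which is precisely the side condition of \ruleref{H.skip}, \ruleref{H.init}, and \ruleref{H.unit}. For the compositional cases I pick the canonical intermediate predicates and defer to the composite rules and the induction hypothesis. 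For $S = \qpenv{\qpabsprog{S_1};\qpabsprog{S_2}}$ take $R = \Bracks{S_2}^\dagger(Q)$, a valid predicate since $\Bracks{S_2}^\dagger$ is completely positive and sub-unital; since $\Bracks{S}^\dagger(Q) = \Bracks{S_1}^\dagger(R)$, both resulting triples hold, so I apply \ruleref{H.seq} and lift the two subrefinements via \ruleref{C.seqL} and \ruleref{C.seqR}. For $S = \qpenv{\qprepeat{N}{\qpabsprog{C}}}$ take $R_j = (\Bracks{C}^\dagger)^{N-j}(Q)$, so that $R_N = Q$, $R_0 = \Bracks{S}^\dagger(Q) \succeq P$, and $\totcorr\hoare{R_j}{C}{R_{j+1}}$ holds with equality; then apply \ruleref{H.repeat} and \ruleref{C.repeat}. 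For $S = \qpenv{\qpselect{\vec M}{\vec q}{\omega \colon \qpabsprog{S_\omega}}}$ take $P_\omega = \Bracks{S_\omega}^\dagger(Q)$ and note that $\Bracks{S}^\dagger(Q) = \sum_\omega \cM_{\omega,\vec q}(P_\omega)$ because each $\cM_\omega$ is self-adjoint, so the side condition $P \imp \sum_\omega \cM_{\omega}(P_\omega)$ of \ruleref{H.case} is exactly total correctness; finish with \ruleref{C.case}. In each case the new holes carry families of predicates, and the induction hypothesis applied to the fixed subprogram discharges them.

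The main obstacle is the \textbf{while} case, $S = \qpenv{\qpwhile{B}{\vec q}{\qpabsprog{C}}}$, which must produce the weakly increasing sequence required by \ruleref{HT.while}. I would define $R_0 = 0$ and $R_{n+1} = \Bracks{C}^\dagger\bigl(\cB_{0,\q}(Q) + \cB_{1,\q}(R_n)\bigr)$, i.e.\ the weakest total-correctness precondition for the body that guarantees termination within $n+1$ further iterations in a state satisfying $Q$. One checks by induction that $\{R_n\}$ is weakly increasing and satisfies $R_n \preceq I$ (using that $Q, R_n \preceq I$ imply $\cB_{0,\q}(Q) + \cB_{1,\q}(R_n) \preceq I$, together with sub-unitality of $\Bracks{C}^\dagger$), so the limit $R = \lim_n R_n$ exists and the body spec $\totcorr\hoare{R_{n+1}}{C}{\cB_{0,\q}(Q)+\cB_{1,\q}(R_n)}$ holds by construction. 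The crux is the side condition $P \imp \cB_{0,\q}(Q) + \cB_{1,\q}(R)$: unrolling the recurrence gives $\cB_{0,\q}(Q) + \cB_{1,\q}(R_n) = \sum_{k=0}^{n}(\cB_{1,\q}\circ\Bracks{C}^\dagger)^k\bigl(\cB_{0,\q}(Q)\bigr)$, and comparing with the series for $\Bracks{S}^\dagger$ obtained by taking adjoints in \cref{eq:while loop recurrence} shows that $\cB_{0,\q}(Q) + \cB_{1,\q}(R) = \Bracks{S}^\dagger(Q)$; total correctness $P \preceq \Bracks{S}^\dagger(Q)$ then yields exactly the required inequality. Applying \ruleref{HT.while}, lifting via \ruleref{C.while}, and closing the $n$-indexed body hole using the multi-specification induction hypothesis on $C$ completes this case and the induction. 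I expect the delicate points to be the monotone-convergence argument for the limit $R$ in finite dimension, the identification of the two series (which is where total correctness, rather than a single loop invariant as in the partial case \ruleref{HP.while}, is genuinely needed), and keeping the multi-specification bookkeeping consistent when subprograms inherit the formal parameter $\lambda$ alongside the loop or outcome indices.
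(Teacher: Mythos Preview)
Your proof is correct and follows essentially the same structural-induction approach as the paper, with the identical choice of intermediate predicates (weakest preconditions) in every case; in particular, your recursive definition $R_{n+1} = \Bracks{C}^\dagger\bigl(\cB_{0,\q}(Q)+\cB_{1,\q}(R_n)\bigr)$ for the \textbf{while} loop coincides with the paper's $R_{n+1} = \Bracks{C}^\dagger(\cS_n^\dagger(Q))$, and the identification $\cB_{0,\q}(Q)+\cB_{1,\q}(R)=\Bracks{S}^\dagger(Q)$ is the same calculation.

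The one point of difference is that you explicitly strengthen the induction hypothesis to the multi-specification form, whereas the paper states and proves the single-specification version and leaves the handling of formal parameters (introduced by \ruleref{H.repeat} and \ruleref{HT.while}) implicit, relying on its convention (\cref{subsec:multispec}, \cref{rem:multispec rules}) that such parameters are universally quantified. Your concern is legitimate: the refinement chain for the body must be uniform across all values of the new index. The paper gets away with the single-spec statement because the chain it constructs is canonical---determined entirely by the structure of the subprogram and the postcondition via weakest preconditions---so applying the single-spec hypothesis pointwise actually yields the same sequence of rule applications for every parameter value, only with different predicates plugged in. Your explicit multi-spec formulation makes this bookkeeping visible rather than tacit, but it is not a different argument.
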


\Cref{thm:completeness-partial,thm:completeness-total} can be proved by induction on the structure of the program, with the \textbf{while} construct requiring a careful analysis.
The detailed proofs can be found in \cref{app:completeness}.

\section{Examples}
\label{sec:examples}
In this section, we demonstrate how to use the QbC approach to construct quantum programs from their specification.
We first discuss a pedagogical example of a fair quantum coin.
We then construct a quantum teleportation protocol (\cref{subsec:examples:teleportation}) and two quantum search algorithms (\cref{subsec:examples:qsearch}).
We then describe new refinement rules to boost the success probability of quantum algorithms (\cref{subsec:examples:boosting}).
This illustrates how QbC can be usefully extended by higher-level algorithmic patterns and construction principles.
Particularly for the more complicated algorithms, we find that the QbC approach allows naturally discovering program detail on the fly, without explicitly using a priori knowledge of the final algorithms.
It also suggests key design decisions that give rise to different quantum programs satisfying the same specification.
Additionally, in \cref{app:coin-toss-till-zero} we discuss how the running example from \cref{sec:prelims} can be constructed from a structured specification.

\subsection{Fair Quantum Coin}\label{subsec:example:fair-coin}

A fair quantum coin is a program that prepares a quantum bit in a state that, when measured, gives rise to either outcome with 50\% probability.
There are infinitely many such states, but two natural ones are the \emph{Hadamard basis} states~$\ket\pm = ( \ket0 \pm \ket1 )/\sqrt2$.
We use a single qubit quantum variable~$q$. 

\paragraph{Specification}
To specify that each outcome occurs with 50\%~probability, we can use the following program with a single hole:
\[ S_\text{fair-coin} = \qpenv{\qphole{I/2}{\proj x}}, \]
where $x \in \{0, 1\}$ is a formal parameter (see \cref{subsec:multispec}).
Indeed, suppose we manage to refine the above into a program~$S'$ without holes which does not explicitly use the parameter~$x$.
Then our soundness result (\cref{thm:soundness-total}) guarantees that the Hoare triple
\[ \hoare{I/2}{S'}{\proj x} \]
is valid for every $x\in\{0,1\}$, meaning that if we run the program (on an arbitrary state) and measure the qubit, we obtain either outcome~$x\in\{0,1\}$ with probability at least, and hence equal to~$\frac12$.

\paragraph{Construction}
The well-known idea is that the Hadamard gate
\begin{align}\label{eq:hadamard}
  H = \frac1{\sqrt2} \begin{pmatrix}1 & 1 \\ 1 & -1\end{pmatrix}
\end{align}
maps the standard basis to the Hadamard basis, which allows us to realize the coin toss.
To confirm this, we consider the following sequence of refinements:
\begin{align*}
  S_\text{fair-coin}
  &\totref \qpenv{\qphole{I/2}{R} ; \qphole{R}{\proj x}} \ruletag{H.seq} \\
  &\totref \qpenv{\qpinitS{q} ; \qphole{R}{\proj x}} \ruletag{H.init} \\
  &\totref \qpenv{\qpinitS{q} ; \qpunitary{q}{H}} \ruletag{H.unit}
\end{align*}
The first refinement is always valid, but how should we pick the predicate~$R$ so that the subsequent refinements can be applied?
To apply \ruleref{H.init}, we need that
$I/2 \imp \sum_{y \in \{0, 1\}} \ketbra{y}{0} R \ketbra{0}{y}$,
meaning that $\braket{0|R|0} \geq \frac12$ for $x\in\{0,1\}$.
To apply \ruleref{H.unit}, we should choose~$R$ such that~$R \imp H \proj x H$.
Since the latter is a pure state, this suggests $R = H \proj x H$ (which is also the weakest precondition for the Hadamard subprogram and postcondition), and for this choice we have that
$\braket{0|R|0} = \abs*{\braket{0|H|x}}^2 = \frac12$.
Thus the above refinements are valid and we have constructed a program that implements the fair coin toss specification correctly, by construction.
\begin{align*}
  S_\text{fair-coin} \totref^* \qpenv{\qpinitS{q} ; \qpunitary{q}{H}},
\end{align*}

\subsection{Quantum Teleportation}
\label{subsec:examples:teleportation}

Imagine two parties, Alice and Bob, who share a maximally entangled state, say qubits~$a,b$ in state
\begin{equation}\label{eq:bell-state}
  \ket{\phi^+} = \frac1{\sqrt2}(\ket{00} + \ket{11}).
\end{equation}
Alice has another qubit~$q$ in an unknown \emph{quantum} state, and wants to transfer its state to Bob's qubit~$b$, by sending only \emph{classical} information but utilizing the maximally entangled state as a resource.
Furthermore, if~$q$ was correlated or entangled with other quantum variables, then after teleportation the same should be true for~$b$.
This is called \emph{quantum teleportation} and is a basic building block for quantum communication (see, e.g., \citet{nielsen2010quantum}).

\paragraph{Specification}
To specify teleportation, consider an arbitrary quantum state~$\rho$ between Alice's and some arbitrary other quantum variable~$r$, which does not participate in the protocol (in quantum information, $r$ is called a reference system).
After teleportation, we want the state of Bob's qubit~$b$ and~$r$ to be in the same state.
Without loss of generality, we can take~$\rho$ to be a maximally entangled state, $\rho = \proj{\phi^+}$ (that is, it suffices to realize ``entanglement swapping'').
Indeed, a basic principle of quantum information theory asserts that if two quantum programs (completely positive maps) have the same behavior when applied to one half of a maximally entangled state, then they must have the same behavior on all states~\cite{nielsen2010quantum,wilde2013quantum}.
Recall from \cref{cor:projection-predicates} that we can assert that quantum variables are in a given pure state by using its projection as the predicate.
This translates to the following initial specification:
\begin{align}\label{eq:orig teleport}
  S_\text{teleport} = \qpenv{\qphole{\proj{\phi^+}_{qr} \ot \proj{\phi^+}_{ab}}{\proj{\phi^+}_{br}}}
\end{align}

Now, an arbitrary program that meets the above specification will not be a teleportation protocol, since in teleportation we want to constrain all quantum operations to Alice and Bob's variables, while only allowing them to communicate classical bits from Alice to Bob.
This disallows, e.g., simply swapping qubits~$q$ and~$b$ by applying a quantum gate.
We can implement this latter constraint by the following sequence of refinements
to obtain a program with two holes, where the first hole will be further refined by a program acting only on Alice's qubits~$q$ and~$a$, and the second hole by a program acting only on Bob's qubit~$b$:
\begin{align*}
  &S_\text{teleport}
  \totref
  \qpenv{
      \qphole{\proj{\phi^+}_{qr} \ot \proj{\phi^+}_{ab}}{P} ;
      \qphole{P}{\proj{\phi^+}_{br}}
  }
  \ruletag{H.seq} \\
  &\totref
  \qpenv{
    \begin{aligned}
      &\qphole{\proj{\phi^+}_{qr} \ot \proj{\phi^+}_{ab}}{P} ; \\
      &\qpselectS{q, a}{(x, y) \colon \qphole{Q(x, y)}{\proj{\phi^+}_{br}}}
    \end{aligned}
  },
  \ruletag{H.case}
\end{align*}
where $x, y \in \{0, 1\}$ for arbitrary predicates $Q(x, y)$.
We can ensure that \ruleref{H.case} is valid by taking
\begin{align}\label{eq:P via Qxy}
  P := \sum_{x, y} \proj{xy}_{qa} Q(x, y) \proj{xy}_{qa}.
\end{align}
We have arrived at the specification $S'_\text{teleport}$ which refines \cref{eq:orig teleport},
\begin{align*}
  S'_\text{teleport} = \qpenv{
    \begin{aligned}
      &\underbrace{ \qphole{\proj{\phi^+}_{qr} \ot \proj{\phi^+}_{ab}}{P} }_{S_\text{Alice}} ; \\
      &\qpselectS{q, a}{(x, y) \colon \underbrace{ \qphole{Q(x, y)}{\proj{\phi^+}_{br}} }_{S_{\text{Bob},x,y}}}
    \end{aligned}
  }
\end{align*}
which consists of three steps:
\begin{enumerate}
  \item Alice first applies some quantum program $S_\text{Alice}$ (which will be constructed to only act on her qubits).
  \item Alice measures her qubits~$q,a$ in the standard basis (without loss of generality) and obtains as outcomes two classical bits~$x,y$, which we imagine she communicates to Bob.
  \item Bob applies another quantum program $S_{\text{Bob},x,y}$ (which will be constructed to only act on his qubit) that is allowed to explicitly depend on the outcomes~$x$ and~$y$.
\end{enumerate}
This precisely captures the structure as well as functionality of a quantum teleportation protocol.

\paragraph{Construction}
Before starting the construction we first simplify the precondition~$Q(x,y)$ of Bob's program.
As the latter is run straight after Alice's measurements, which yielded outcomes~$x,y$, we know that Alice's qubits must be in state~$\ket{x,y}$.
This motivates and in fact implies that we may take
\begin{align}\label{eq:teleport form of Q}
  Q(x,y) = \proj{xy}_{qa} \ot R_{br}(x,y)
\end{align}
for certain predicates~$R_{br}(x,y)$ on qubits~$b,r$ that still need to be determined.

To construct Alice and Bob's programs, we make the straightforward guess that each applies some unitary, which in Bob's case may depend on the measurement outcomes~$x,y$.
Thus we refine
\begin{align*}
  S_\text{Alice} \totref \qpenv{\qpunitary{q, a}{V}} 
  \qquad\text{and}\qquad
  S_{\text{Bob},x,y} \totref \qpenv{\qpunitary{b}{U(x,y)}}, \ruletag{H.unit}
\end{align*}
where~$V$ is a two-qubit unitary and the~$U(x,y)$ are one-qubit unitaries that we still need to construct.
The second refinement is valid assuming $U_b(x,y) Q(x,y) U_b^\dagger(x,y) \imp \proj{\phi^+}_{br}$, which by \cref{eq:teleport form of Q} we can satisfy by picking
\[ R_{br}(x,y) = U^\dagger_b(x,y) \proj{\phi^+}_{br} U_b(x,y). \]
By \cref{eq:P via Qxy}, this in turn implies that
\begin{align}\label{eq:P via Uxy}
  P
= \sum_{x, y} \proj{xy}_{qa} \ot U_b^\dagger(x,y) \proj{\phi^+}_{br} U_b(x,y).
\end{align}
The first refinement is valid if $V_{qa} \parens{ \proj{\phi^+}_{qr} \ot \proj{\phi^+}_{ab} } V_{qa}^\dagger \imp P$.
As the left-hand side is a pure state, \cref{cor:projection-predicates} shows that this condition is equivalent to
\begin{align}
\nonumber
  1
= \braket{ \phi^+_{qr} \ot \phi^+_{ab} | V_{qa}^\dagger P V_{qa} | \phi^+_{qr} \ot \phi^+_{ab}}
&= \sum_{x,y} \abs{ \braket{ x_q \ot y_a \ot \phi^+_{br} |  U_b(x,y) V_{qa} | \phi^+_{qr} \ot \phi^+_{ab} } }^2 \\
\nonumber
&= \sum_{x,y} \abs{ \braket{ x_q \ot y_a \ot \phi^+_{br} |  U_b(x,y) V^T_{rb} | \phi^+_{qr} \ot \phi^+_{ab} } }^2 \\
\label{eq:tele final}
&= \frac14 \sum_{x,y} \abs{ \braket{ \phi^+_{br} | U_b(x,y) V^T_{rb} | x_r \ot y_b } }^2,
\end{align}
where we first used \cref{eq:P via Uxy} and then the identity~$V_{qa} \ket{\phi^+_{qr} \ot \phi^+_{ab}} = V^T_{rb} \ket{\phi^+_{qr} \ot \phi^+_{ab}}$,
known as the ``transpose trick'', which allows moving an arbitrary operator acting on qubits~$q,a$ to the other side of the maximally entangled states, that is, to act on qubits~$r,b$, if we replace the operator by its transpose (in the computational basis).
This well-known identity is easily verified by direct calculation.
The final step follows by observing that~$\bra{x_q} \ket{\phi^+_{qr}} = \ket{x_r}/\sqrt2$ and similarly~$\bra{y_a} \ket{\phi^+_{ab}} = \ket{x_b}/\sqrt2$.
Since all of the summands in \cref{eq:tele final} are at most one, they must all be equal to one.
In other words,
\begin{align*}
  V^T_{rb} \ket{xy}_{rb} \quad\text{and}\quad U_b^\dagger(x,y) \ket{\phi^+_{br}}
\end{align*}
must be the same states for all~$x,y\in\{0,1\}$ (up to irrelevant overall phases).
Note that the left-hand side states make up an orthonormal basis, while the right-hand states are all maximally entangled (since they obtained by applying a unitary to one of the qubits of~$\ket{\phi^+}$).
It follows that we should pick~$V^T_{rb}$ to be a unitary that maps the standard basis to a basis of maximally entangled states.
And this is also sufficient since any two maximally entangled states differ by a unitary on either of the qubits.
As is well known, the Bell basis consists of maximally entangled states and it can be prepared by the unitary~$V^T_{rb} = \text{CNOT}_{rb} H_b$.
Thus we take
\[ V = (H \ot I) \text{CNOT}. \]
Finally, we note that the Bell states can be obtained from the standard maximally entangled state as~$V^T_{rb} \ket{yx}_{rb} = Z_b^y X_b^x \ket{\phi^+}_{rb}$, where~$X$ and~$Z$ denote the Pauli~$X$ and~$Z$ matrices.
Thus Bob's unitaries should be
\[ U(x,y) = X^x Z^y. \]
Altogether, we have constructed the following program, which is nothing but the standard protocol for quantum teleportation:
\[
S'_\text{teleport} \totref^*
\qpenv{
\begin{aligned}
&\qpunitary{q, a}{(H \ot I) \text{CNOT}} ; \\
&\qpselectS{(q, a)}{
00 \colon \qpunitary{b}{I};
01 \colon \qpunitary{b}{Z};
10 \colon \qpunitary{b}{X};
11 \colon \qpunitary{b}{XZ}
}
\end{aligned}
}
\]
As it was obtained by refinement, it satisfies the specification by construction.

\subsection{Quantum Search}
\label{subsec:examples:qsearch}

We consider the following search problem~\cite{nielsen2010quantum,grover1996}.
Given query access to a Boolean function or ``database''~$f\colon\{0,1\}^n \to \{0,1\}$,
we wish to find a bitstring~$x \in \{0, 1\}^n$ such that $f(x) = 1$.
Such an~$x$ is often called a ``solution'' or ``marked element''.
In the quantum setting, we are given query access to~$f$ via the following \emph{standard quantum oracle} unitary,
$\cO_f \ket{x}_{\q}\ket{y}_{a} = \ket{x}_{\q}\ket{y \oplus f(x)}_a$,
or by the following \emph{phase oracle} unitary
\begin{align}\label{eq:phase oracle}
  P_f\ket{x}_\q = (-1)^{f(x)} \ket{x}_\q,
\end{align}
which can be obtained from the former in a straightforward fashion.
Here, $\q$ is a quantum variable consisting of~$n$ qubits.
Let us define $N = 2^n$ as the size of the search space,
and $T = \abs{\{x : f(x) = 1\}}$ as the number of solutions.
In the example constructions below,
we will assume knowledge of this number of solutions.
In the following, we will first present a specification of the search problem, and then construct \emph{two} different programs that satisfy the specification by construction: a simple algorithm based on random sampling and Grover's celebrated quantum search algorithm~\cite{grover1996}.

\paragraph{Specification}
The search problem can be specified as follows:
\[
S_\text{search}(p) = \qpenv{\qphole{pI}{\sum_{x : f(x)=1} \proj{x}_{\vec q}}},
\]
It states that if we measure the state after program execution, we obtain a solution~$x$ with probability at least~$p$.
Thus, $p$ is the probability of success of the search algorithm.
In our constructions below, we treat~$p$ as a parameter that will naturally be selected during the refinement process.
Note that whatever the value of~$p$, such an algorithm (if it terminates) can always be amplified or ``boosted'' to any desired success probability by repeating it sufficiently often until it finds a solution.
We discuss this in \cref{subsec:examples:boosting} below and propose refinement rules to automate this reasoning.

\paragraph{Construction I: Random Sampling}
One can solve the search problem by sampling~$x\in\{0,1\}^n$ uniformly at random.
Clearly, this succeeds with probability $T/N$.
One way to achieve this by a quantum program is by preparing the uniform superposition~$\ket U = \frac1{\sqrt N}\sum_{x \in \{0,1\}^n} \ket x$, since measuring this state in the standard basis will yield a uniformly random~$x\in\{0,1\}^n$.
We can confirm that this construction works with the aforementioned success probability, by refining the specification into a program that prepares the state~$\ket U$.
This can be done by first initializing all qubits in the zero state and then applying Hadamard gates~$H$ (\cref{eq:hadamard}) to all of the qubits:
\begin{align*}
  S_\text{search}(p)
  &\totref \qpenv{
    \qphole{pI}{R} ;
    \qphole{R}{\sum_{x : f(x)=1} \proj{x}_{\vec q}}
  } \ruletag{H.seq} \\
  &\totref \qpenv{
    \qpinit{\q} ;
    \qphole{R}{\sum_{x : f(x)=1} \proj{x}_{\vec q}}
  } \ruletag{H.init} \\
  &\totref \qpenv{
    \qpinit{\q} ;
    \qpunitary{\q}{H^{\ot n}}
  } \ruletag{H.unit}
\end{align*}
To apply \ruleref{H.init}, we need that~$pI \imp \sum_x \ketbra{x}{0}_\q R \ketbra{0}{x}_\q$, that is, $p \leq \braket{0 | R | 0}$.
To apply \ruleref{H.unit}, we can choose
  \[ R = H^{\ot n} \parens*{\sum_{x : f(x)=1} \proj{x}_{\vec q}} H^{\ot n}. \]
Thus the above refinements are valid if
\begin{align*}
  p
\leq \braket{0 | R | 0}
= \braket{U | \sum_{x : f(x)=1} \proj{x}_{\vec q} | U}
= \frac T N.
\end{align*}
Thus we should pick~$p:=T/N$ to maximize the success probability.
Altogether, we have constructed a search algorithm that succeeds with probability~$p=T/N$.
As mentioned earlier and will be discussed in detail in \cref{subsec:examples:boosting}, by repeating the above~$O(N/T)$ times until we find a solution, we can obtain a program that solves the search problem with any desired constant probability of success (say, $p=2/3$) at a cost of~$O(N/T)$ queries.

\paragraph{Construction II: Grover Search}
\citet{grover1996} proposed a quantum algorithm for the search problem which gives a quadratic speedup over random sampling.
In the following, we will re-construct this algorithm by making natural choices using one key idea at a time.

\paragraph*{Step~1}
The first observation is that while the uniform superposition~$\ket U$ over all bitstrings is easy to prepare, what we are really after is the uniform superposition of all \emph{solutions}, that is, the ``good'' state $\ket G = \frac1{\sqrt T} \sum_{x ~\text{s.t.}~ f(x) = 1} \ket x$.
Indeed, measuring~$\ket G$ will yield a solution with probability one.
We can formalize this idea by strengthening the postcondition of the specification:
\[
S_\text{search}(p) \totref \qpenv{\qphole{pI}{\proj G}} =: {S_\text{good}(p)}, \ruletag{H.sw}
\]
We may apply \ruleref{H.sw} since the condition $\proj G \imp \sum_{x ~\text{s.t.}~ f(x)=1} \proj x$ is satisfied.
Indeed, the good state~$\ket G$ is clearly contained in the span of the basis states~$\ket x$ corresponding to solutions~$x$.

\paragraph*{Step~2}
The uniform state~$\ket U$ (which is easy to prepare, but not very useful) and the good state~$\ket G$ (which solves the problem, but is a priori unclear how to prepare) span a two-dimensional subspace of the exponentially large Hilbert space.
The key idea (which has no classical counterpart) then is to try to \emph{rotate} the state~$\ket U$ onto~$\ket G$ in this two-dimensional subspace.

To realize this idea, we first define an orthonormal basis of the two-dimensional subspace by picking~$\ket G$ and a vector orthogonal to it, namely the ``bad state''~$\ket{B} = \frac1{\sqrt{N-T}} \sum_{x ~\text{s.t.}~ f(x) = 0} \ket x$.
At any stage of the program, we would like the state of $\vec q$ to be of the form
\[ \ket{\theta} = \cos\theta \ket B + \sin\theta \ket G \]
for some angle~$\theta$, which will serve as a loop variant in the following.
The plan is now to prepare the uniform state, which has angle~$\upsilon = \arcsin{\sqrt{T/N}} \in [0,\pi/2]$ as it can be written as~$\ket U = \sqrt{(N-T)/N} \ket B + \sqrt{T/N} \ket G$, and then rotate it repeatedly by some angle~$\delta>0$ towards the good state~$\ket G$, which is at angle~$\frac\pi2$.
We can formalize this by the following refinements:
\begin{align*}
S_\text{good}(p)
&\totref \qpenv{\qphole{pI}{\proj{\theta_0}} ; \qphole{\proj{\theta_0}}{\proj{\frac\pi2}}} \ruletag{H.seq} \\
&\totref \qpenv{
    \underbrace{\qphole{pI}{\proj{\theta_0}}}_{S_\text{init}} ;
    \qprepeat{r}{
      \underbrace{\qphole{\proj{\theta_j}}{\proj{\theta_{j+1}}}}_{S_\text{rotate}(\delta)}
    }
  }
  \ruletag{H.repeat}
\end{align*}
To apply \ruleref{H.repeat}, we choose $\theta_r = \frac\pi2$ and $\theta_{j + 1} = \theta_j + \delta$ for some arbitary rotation angle~$\delta$ and number of rotations~$r$ that we will determine later.
Hence~$\theta_0 = \frac\pi2 - r \delta$.

We first construct the $S_\text{init}$ program.
Following the plan, we prepare the uniform superposition~$\ket U = \ket\upsilon$, which we already know how to do from above:
\begin{align*}
  S_\text{init}
  &\totref \qpenv{ \qphole{pI}{R} ; \qphole{R}{\proj{\theta_0}} }\ruletag{H.seq} \\
  &\totref \qpenv{ \qpinit{\q} ; \qphole{R}{\proj{\theta_0}} } \ruletag{H.init} \\
  &\totref \qpenv{ \qpinit{\q} ; \qpunitary{\q}{H^{\ot n}} }\ruletag{H.unit}
\end{align*}
Since the postcondition is different from the above, we also need to choose~$R$ differently, but we can follow the same reasoning.
In order to apply \ruleref{H.init}, we need that~$p \leq \braket{0 | R | 0}$, and to apply~\ruleref{H.unit} we can choose $R = H^{\ot n} \proj{\theta_0} H^{\ot n}$.
Together, we find that the maximum success probability for which the above refinements are valid is given by
\begin{align*}
p
:= \braket{0 | R | 0}
= \abs{\braket{0 | H^{\ot n} | \theta_0}}^2
= \abs{\braket{\upsilon | \theta_0}}^2
= \cos^2\parens*{ \frac\pi2 - r\delta - \upsilon }.
\end{align*}
To maximize this probability, we should further choose~$r$ such that the right-hand side is maximized.
We will pick~$r$ such that the angle in the cosine is closest to~$0$:
\begin{align}\label{eq:r}
r := \bracks*{\frac{\frac\pi2 - \upsilon}{\delta}},
\end{align}
where $\bracks{\cdot}$ rounds to the nearest integer.
Clearly, $p \ge \cos^2(\delta/2)$.

\paragraph*{Step~3}
We still need to construct the program~$S_\text{rotate}(\delta)$ for some rotation angle~$\delta$.
To this end, we first observe that it suffices to construct a program that satisfies the stronger specification:
\begin{align*}
  S'_\text{rotate}(\delta) = \qphole{\proj{\theta}}{\proj{\theta+\delta}}.
\end{align*}
Indeed, programs satisfying this specification rotate \emph{all} states~$\ket\theta$ in the two-dimensional subspace by~$\delta$, as opposed just the states~$\ket{\theta_j}$ for~$j\in\{0,1,\dots,N-1\}$.

How can we obtain such a rotation?
Observe that the quantum phase oracle~$P_f$ in \cref{eq:phase oracle} is a \emph{reflection} about the vector $\ket B$, as it maps
$P_f \ket\theta = \ket{-\theta}$.
Now, we know that two reflections make a rotation.
For our second reflection we simply pick some known state (independent of the instance of the search problem) to reflect about.
A natural choice is $\ket U$, since the corresponding reflection~$2\proj{U} - I$ can be efficiently implemented using~$O(n)$ gates.
Thus we introduce these two reflections in sequence and determine the rotation angle from the conditions of the refinements:
\begin{align*}
  S'_\text{rotate}(\delta)
  &\totref \qpenv{
    \qphole{\proj{\theta}}{\proj{-\theta}}
    ;
    \qphole{\proj{-\theta}}{\proj{\theta+\delta}}
  } \ruletag{H.seq} \\
  &\totref \qpenv{
    \qpunitary{\vec q}{P_f}
    ;
    \qphole{\proj{-\theta}}{\proj{\theta+\delta}}
  } \ruletag{H.unit} \\
  &\totref \qpenv{
    \qpunitary{\vec q}{P_f}
    ;
    \qpunitary{\vec q}{(2\proj{U} - I)}
  } \ruletag{H.unit}
\end{align*}
The first application of \ruleref{H.unit} is correct by our choice of intermediate condition, but for the second one we need that
\begin{align*}
  \parens*{ 2\proj{U} - I } \proj{-\theta} \parens*{ 2\proj{U} - I }
 \imp \proj{\theta+\delta}.
\end{align*}
Since~$\ket U$ is at angle~$\upsilon$, reflecting about it sends~$\ket{-\theta} = \ket{\upsilon-(\theta+\upsilon)}$ to~$\ket{\upsilon+(\theta+\upsilon)}=\ket{\theta+2\upsilon}$.
Thus the above refinements are valid if we choose~$\delta = 2\upsilon$ as the rotation angle, where we recall that~$\upsilon = \arcsin{\sqrt{T/N}}$.
If we plug this back into \cref{eq:r} we find that the number of rotations is
\begin{align*}
  r
= \bracks*{\frac{\frac\pi2 - \upsilon}{2\upsilon}}
= \bracks*{\frac\pi{4\arcsin{\sqrt{T/N}}} - \frac12}
= O\parens*{\sqrt{\frac N T}}.
\end{align*}
Moreover, the success probability can be lower bounded as
\begin{align*}
  p
\geq \cos^2(\delta/2)
= 1 - \sin^2(\upsilon)
= 1 - T/N.
\end{align*}
Thus we obtain the following quantum program:
\[
  S_\text{search}\parens*{1 - \frac T N}
  \totref^*
  \qpenv{
    \begin{aligned}
      &\qpinit{\vec q} ;~ 
      \qpunitary{\vec q}{H^{\ot n}} ;~ \\
      &\qprepeat{
          \bracks*{
            \tfrac
              {\pi}
              {4\arcsin{\sqrt{T/N}}}
            - \tfrac12
          }
        }{\\
        &\quad \qpunitary{\vec q}{P_f};~ 
        \qpunitary{\vec q}{(2\proj{U} - I)} \\
      &}
    \end{aligned}
  }
\]
As we have constructed it by refining the initial specification for the search problem, it satisfies the specification by construction.
It succeeds with probability~$p \geq 1 - T/N$ and uses~$O(\sqrt{N/T})$ queries to the quantum oracle.
This is in fact Grover's algorithm~\cite{grover1996}.

\subsection{Boosting Success Probabilities}
\label{subsec:examples:boosting}
In this section, we derive two refinement rules that formalize useful and widely used patterns (see, e.g., \cite{lim2005repeatuntilsuccess,paetznick2014repeatuntilsuccess}).
To motivate it, recall that in the preceding example, we constructed two quantum programs that succeed with some probability.
We modeled this by a specification of the form~$\qpenv{\qphole{\eps I}{Q}}$ for some $\eps>0$ (we now write~$\eps$ rather than~$p$ because the discussion that follows is most relevant when~$\eps$ is a small probability).
Indeed, a Hoare triple $\hoare{\eps I}{S}{Q}$ is totally correct if the program~$S$ terminates and the postcondition~$Q$ holds with probability at least~$\eps$.
We can amplify or ``boost'' the success probability of such a program~$S$ arbitrarily by simply repeating it until the postcondition holds, provided (i)~the program~$S$ terminates almost surely (so that we keep repeating) and (ii)~the postcondition is given by a projection (so that measuring it does not impact its expectation).
To incorporate the termination requirement we can consider the multiple specification~$\qpenv{\qphole{\eps I,I}{Q,I}}$.

We first give a rule that reduces the construction of a program that succeeds with some probability~$p\in(0,1)$ to the construction of a program that succeeds with some smaller probability~$\eps\in(0,p)$:
\begin{enumerate}[align=left]
\ruleitem{H.boostRep}
  $\qpenv{\qphole{pI, I}{Q_\q, I}}
  \totref^*
  \qpenv{
  \begin{aligned}
      &\qprepeat{\ceil{\log_{1-\eps} (1 - p)}}{ \\
      &\quad \qpif{Q^\perp}{\q}{
        \qphole{\eps Q^\perp_\q, Q^\perp_\q}{Q_\q, I}
      }
      \\&}
  \end{aligned}
  }
  $, \\
  for any projection~$Q$ and any~$\eps\in(0,p)$,
  where $Q^\perp = I - Q$
\end{enumerate}
Second, we give a rule to reduce the construction of programs that succeed with probability one to ones that succeed with some finite probability~$\eps\in(0,1)$, by repeating the program until it succeeds: 
\begin{enumerate}[align=left]
\ruleitem{H.boostWhile}
  $\qpenv{\qphole{I}{Q_\q}}
  \totref^*
  \qpenv{
      \qpwhile{Q^\perp}{\q}{
      \qpenv{\qphole{\eps Q^\perp_\q, Q^\perp_\q}{Q_\q, I}}
      }
  }
  $, \\
  for any projection~$Q$ and any~$\eps \in (0,1)$,
  where $Q^\perp = I - Q$
\end{enumerate}

\begin{restatable}[Boosting success probability]{theorem}{thmboostingsuccessprobability}\label{thm:boosting}
The relations \ruleref{H.boostRep}, \ruleref{H.boostWhile} hold.
\end{restatable}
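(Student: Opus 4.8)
The plan is to prove each derived rule by exhibiting an explicit finite chain of basic refinement steps from \cref{def:totref} (together with the derived rules \ruleref{H.if} and \ruleref{H.sw}, which transfer to total correctness since they are compositions of rules shared by both systems), and verifying the side condition of each step. All these side conditions are implications $\imp$ in the L\"owner order, and since $Q$ is a projection, every predicate that appears has the form $aQ + bQ^\perp$ with $Q^\perp = I - Q$ and $a,b\in[0,1]$. For the binary measurement $B = Q^\perp$ we have $\cB_{0,\q}(\,\cdot\,) = Q\,\cdot\,Q$ and $\cB_{1,\q}(\,\cdot\,) = Q^\perp\,\cdot\,Q^\perp$, so each side condition decouples into two scalar inequalities on the $Q$- and $Q^\perp$-eigenspaces. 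The only real content is to choose the loop invariants and the \ruleref{HT.split} weights so that these scalar inequalities hold.

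For \ruleref{H.boostWhile}, I would apply \ruleref{HT.while} with guard $B = Q^\perp$, postcondition $Q$, and the weakly increasing sequence $R_n = (1-(1-\eps)^n)Q^\perp$, so that $R_0 = 0$ and the limit $R = \lim_n R_n = Q^\perp$ satisfies the required $I \imp \cB_{0,\q}(Q) + \cB_{1,\q}(R) = Q + Q^\perp$ with equality. This rule produces the body hole $\qphole{R_{n+1}}{\cB_{0,\q}(Q)+\cB_{1,\q}(R_n)} = \qphole{(1-(1-\eps)^{n+1})Q^\perp}{Q + (1-(1-\eps)^n)Q^\perp}$. I would then refine this body hole to the target $\qphole{\eps Q^\perp, Q^\perp}{Q, I}$ via \ruleref{HT.split}, splitting it into the two specifications $\qphole{\eps Q^\perp}{Q}$ and $\qphole{Q^\perp}{I}$ with nonnegative weights $\alpha = (1-\eps)^n$ and $\beta = 1 - (1-\eps)^n$. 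The three resulting scalar conditions, $\alpha + \beta \le 1$, $\beta \le 1 - (1-\eps)^n$, and $\alpha\eps + \beta \ge 1 - (1-\eps)^{n+1}$, all hold (the last two with equality).

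For \ruleref{H.boostRep}, I would start from $\qphole{pI, I}{Q_\q, I}$ and apply \ruleref{H.repeat} with $N = \ceil{\log_{1-\eps}(1-p)}$, using the invariants $R_j^A = Q + b_j Q^\perp$ with $b_j := 1 - (1-\eps)^{N-j}$ for the success specification and $R_j^B = I$ for the termination specification. The endpoint conditions $pI \imp R_0^A$ (which forces $(1-\eps)^N \le 1-p$, pinning down $N$), $R_N^A = Q$, and $I \imp R_0^B = R_N^B = I$ all hold. This yields a body hole $\qphole{R_j^A, I}{R_{j+1}^A, I}$, which I refine by \ruleref{H.if} (guard $Q^\perp$) into $\qpif{Q^\perp}{\q}{\,\cdot\,}$ with an inner hole; choosing the then-branch preconditions $b_j Q^\perp$ and $Q^\perp$ for the two directions makes the \ruleref{H.if} conditions $R_j^A \imp Q + \cB_{1,\q}(b_j Q^\perp)$ and $I \imp Q + \cB_{1,\q}(Q^\perp)$ hold with equality. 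The inner hole then carries the specifications $\qphole{b_j Q^\perp}{R_{j+1}^A}$ and $\qphole{Q^\perp}{I}$, and a final \ruleref{HT.split} of the first one (into $\qphole{\eps Q^\perp}{Q}$ and $\qphole{Q^\perp}{I}$, with weights $(1-\eps)^{N-j-1}$ and $1-(1-\eps)^{N-j-1}$, verified exactly as in the previous paragraph) reduces the inner hole's specification to the target $\qphole{\eps Q^\perp, Q^\perp}{Q, I}$.

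The main obstacle is not any single inequality but the bookkeeping of multiple specifications (\cref{subsec:multispec}) through the chain. Two points require care. First, the structural rules (\ruleref{H.repeat}, \ruleref{H.if}, \ruleref{HT.while}) must be applied uniformly to all specifications carried by a hole, generating one side condition per specification, whereas \ruleref{HT.split} and \ruleref{H.sw} act on an individual specification, with \ruleref{HT.split} introducing a fresh index. Second, the final \ruleref{HT.split} uses weights that depend on the loop index ($n$ for \ruleref{H.boostWhile}, $j$ for \ruleref{H.boostRep}), yet the two specifications it produces, $\qphole{\eps Q^\perp}{Q}$ and $\qphole{Q^\perp}{I}$, do not; one must argue that the resulting index-parameterized family collapses, together with the pre-existing termination specification, to exactly $\qphole{\eps Q^\perp, Q^\perp}{Q, I}$. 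This is justified by the convention that a hole's specification is the \emph{set} of Hoare triples it must satisfy, so that duplicated and reindexed triples coincide. Once the invariants $1 - (1-\eps)^n$ and the value of $N$ are fixed, the remaining arithmetic is routine.
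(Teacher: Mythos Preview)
Your proposal is correct and takes essentially the same approach as the paper: the same invariants $R_j = Q + (1-(1-\eps)^{N-j})Q^\perp$ for \ruleref{H.boostRep} and $R_n = (1-(1-\eps)^n)Q^\perp$ for \ruleref{H.boostWhile}, the same sequence of rules (\ruleref{H.repeat}/\ruleref{HT.while}, then \ruleref{H.if}, then \ruleref{HT.split}), and the same split weights $(1-\eps)^n$ and $1-(1-\eps)^n$. Your explicit discussion of how the multiple specifications propagate and collapse is, if anything, more careful than the paper's, which leaves some of this bookkeeping implicit.
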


The proof can be found in \cref{app:examples:boosting}.
We emphasize that both rules are totally correct.

\subsection{Quantum Fourier Transform}
\label{subsec:examples:qft}
The Quantum Fourier Transform (QFT) is widely used in many algorithms, such as Shor's factoring algorithm~\cite{shor1994} and quantum phase estimation~\cite{kitaev1995qpe}.
For $n$ qubits, it computes the following unitary:
\[
  \QFT_n = \frac{1}{\sqrt{2^n}} \sum_{x, y \in \{0, \ldots, 2^n - 1\}} \omega_n^{xy}  \ketbra{x}{y}
\]
where $\omega_n = \exp\parens*{\frac{2 \pi i}{2^n}}$.
Here, the $n$-bit numbers $x, y$ are identified with big-endian bitstrings,
that is, $\ket{x} = \ket{x_1} \ot \ldots \ot \ket{x_n}$ where $x = \sum_{j = 1}^n x_j 2^{n - j}$.
The key observation is that $\omega_k^2 = \omega_{k - 1}$, which naturally motivates a recursive approach.
Therefore, we will attempt to construct a program implementing $\QFT_n$ using a program for $\QFT_{n-1}$, and so.
In this section only, we will use the shorthand notation $\ket{\Psi}$ to represent the predicate~$\proj{\Psi}$, which is a projection for any unit vector~$\ket{\Psi}$.

\paragraph{Specification}
As in the teleportation example~(\cref{subsec:examples:teleportation}), we can fully specify the $\QFT$ by considering its action on half of a maximally entangled input state.
To allow us to recurse on the number of qubits, we will define a specification for $\QFT_k$ for each $k$ from $1$ to $n$.
To this end let $S_k$ denote the specification for applying $\QFT_k$ on the first $k$ qubits of an $n$-qubit quantum variable~$\q$:
\[
  S_k =
   \qpenv{\qphole
    {\ket{\Phi^+_n}_{\vecr, \q}}
    {(\QFT_k)_{q_1, \ldots, q_k} \ket{\Phi^+_n}_{\vecr, \q}}},
\]
where $\ket{\Phi^+_n}_{\vecr, \q} = \bigotimes_{j = 1}^n \ket{\phi^+}_{r_j, q_j} = \frac1{\sqrt{2^n}} \sum_{x = 0}^{2^n-1} \ket{x}_{\vecr}\ket{x}_{\q}$ (with $\ket{\phi^+}$ as in \cref{eq:bell-state}) is a maximally entangled state between~$\q$ and an additional $n$-qubit variable~$\vecr$ that will not be used in the program.
Any program obtained by refining~$S_k$ implements the QFT on the first~$k$ qubits of~$\q$ while acting trivially on the rest.
In particular, $S_n$ specifies the $n$-qubit QFT.

\paragraph{Construction}
We will show that a quantum program for~$S_n$ can be constructed recursively, starting with~$S_1$ and subsequently constructing a program for~$S_k$ from one for~$S_{k-1}$ for any $k=2,\dots,n$.

\paragraph{Base Case ($k = 1$)}
As $\omega_1 = -1$, we can see that $\QFT_1 = \frac1{\sqrt2} \sum_{x, y \in \{0, 1\}} (-1)^{xy}\ketbra{x}{y} = H$ straight from the definition,
i.e., the $1$-qubit QFT is the Hadamard gate.
Therefore, we can refine:
\[
  S_1 \totref \qpenv{\qpunitary{q_1}{H}} \ruletag{H.unit}
\]
We can apply \ruleref{H.unit} because $H_{q_1}^\dagger \parens*{(\QFT_1)_{q_1} \proj{\Phi^+_n}_{\vecr, \q} (\QFT_1)_{q_1}^\dagger} H_{q_1} = \proj{\Phi^+_n}_{\vecr, \q}$ holds.

\paragraph{General Case ($k > 1$)}
We refine $S_k$ by using \ruleref{H.seq} into a sequence of two holes, with intermediate condition being the postcondition of $S_{k-1}$.
Then the first hole in the sequence matches $S_{k-1}$:
\begin{align*}
  S_k
  &\totref
    \qpenv{
      \qpabsprog{S_{k - 1}} ;
      \qphole{(\QFT_{k-1})_{q_1, \ldots, q_{k-1}} \ket{\Phi^+_n}_{\vecr, \q}}{(\QFT_k)_{q_1,\ldots,q_k} \ket{\Phi^+_n}_{\vecr, \q}}}
    \ruletag{H.seq}
\end{align*}
The specification $S_{k-1}$ can be refined recursively,
\newcommand{\ketpsirest}{\ket{\Psi^\text{rest}}}%
but we still need to refine the right-hand side hole.
Let us denote the states defining the pre- and postcondition by $\ket{\Psi_{k-1}} := (\QFT_{k-1})_{q_1, \ldots, q_{k-1}} \ket{\Phi^+_n}_{\vecr, \q}$ and $\ket{\Psi_k} := (\QFT_k)_{q_1,\ldots,q_k} \ket{\Phi^+_n}_{\vecr, \q}$, respectively.
Because the program should only act on the first $k$ qubits, we expand these accordingly:
\begin{align*}
  \ket{\Psi_{k-1}} &= \parens*{
        \frac1{2^{k-1} \sqrt2}
        \sum_{\tilx,\tily=0}^{2^{k-1}-1}
        \sum_{x_k \in\{0, 1\}}
        \omega_{k-1}^{\tilx \tily}
        \ket{\tilx}_{r_1,\ldots,r_{k-1}}
        \ket{x_k}_{r_k}
        \ket{\tily}_{q_1,\ldots,q_{k-1}}
        \ket{x_k}_{q_k}
        } \ot \ketpsirest, \\
  \ket{\Psi_{k}} &= \parens*{
        \frac1{2^k}
        \sum_{x,y=0}^{2^k-1}
        \omega_{k}^{xy}
        \ket{x}_{r_1,\ldots,r_k}
        \ket{y}_{q_1,\ldots,q_k}
        } \ot \ketpsirest,
\end{align*}
where $\ketpsirest$ 
is the (unchanged) state on the last $n - k$ qubits of $\vecr, \q$.
To compare the state~$\ket{\Psi_{k}}$ with~$\ket{\Psi_{k-1}}$, let us write $x = 2\tilx + x_k$.
Then we can also simplify
\begin{align*}
\omega_k^{xy}
  = \omega_k^{2\tilx y} \omega_k^{x_k y}
  = \omega_{k-1}^{\tilx y} \omega_k^{y x_k}.
\end{align*}
Now as $\omega_{k-1}^{2^{k-1}} = 1$, the most significant bit of $y$ does not affect the first part of the above term.
Therefore, it is natural to write the index~$y$ as $y = y_1 2^{k-1} + \tily$, so that we can further simplify
\[
  \omega_{k-1}^{\tilx y} \omega_k^{y x_k}
  = \omega_{k-1}^{\tilx \tily} (-1)^{y_1 x_k} \omega_k^{\tily x_k}.
\]
Altogether we find that the postcondition is described by the state
\begin{align*}
  \ket{\Psi_{k}} = \parens*{
        \frac1{2^k}
        \sum_{\tilx,\tily=0}^{2^{k-1}-1}
        \sum_{x_k, y_1 \in \{0, 1\}}
        \omega_{k-1}^{\tilx \tily} (-1)^{y_1 x_k} \omega_k^{\tily x_k}
        \ket{\tilx}_{r_1,\ldots,r_{k-1}}
        \ket{x_k}_{r_k}
        \ket{y_1}_{q_1}
        \ket{\tily}_{q_2,\ldots,q_k}
        } \ot \ketpsirest.
\end{align*}
The state $\ket{\Psi_{k}}$ looks quite similar to $\ket{\Psi_{k-1}}$ but there are some key differences.
For one, the index~$\tily$ refers qubits~$q_2,\dots,q_k$ rather than~$q_1,\dots,q_{k-1}$ and hence $q_1$ plays a distinguished role rather than~$q_k$.
Thus a natural first step is to move the $k$-th qubit to the front by applying a suitable sequence of swaps.
That is, we use \ruleref{H.seq} and \ruleref{H.unit} to refine
\[
  \qpenv{\qphole{\ket{\Psi_{k-1}}}{\Psi_{k}}}
  \totref^*
  \qpenv{
    \qpunitary{q_k, q_{k-1}}{\SWAP};
    \ldots;
    \qpunitary{q_2, q_1}{\SWAP};
    \qphole{\ket{\Psi'}}{\ket{\Psi_{k}}}
  },
\]
which is allowed if we pick the intermediate condition given by the state
\begin{align*}
  \ket{\Psi'} &= \parens*{
        \frac1{2^{k-1} \sqrt2}
        \sum_{\tilx,\tily=0}^{2^{k-1}-1}
        \sum_{x_k \in\{0, 1\}}
        \omega_{k-1}^{\tilx \tily}
        \ket{\tilx}_{r_1,\ldots,r_{k-1}}
        \ket{x_k}_{r_k}
        \ket{x_k}_{q_1}
        \ket{\tily}_{q_2,\ldots,q_k}
        } \ot \ketpsirest.
\end{align*}
Now the qubits are in the right place but we see that in $\ket{\Psi_k}$ there is an additional relative phase~$\omega_k^{\tily x_k}$.
Because
$\omega_k^{\tily x_k} = \parens{\omega_k^{y_k} \omega_{k-1}^{y_{k-1}} \ldots \omega_2^{y_2}}^{x_k}$
we are led to refining the second hole by a series of phase gates, one on each qubit $q_2, \ldots, q_k$, and each controlled by the value~$x_k$ that is stored in qubit~$q_1$:
\[
  \qpenv{\qphole{\ket{\Psi'}}{\ket{\Psi_{k}}}}
  \totref^*
  \qpenv{
    \qpunitary{q_1, q_{2}}{\CRz_2};
    \ldots;
    \qpunitary{q_1, q_k}{\CRz_k};
    \qphole{\ket{\Psi''}}{\ket{\Psi_{k}}}
  }
\]
where $\Rz_k = \begin{psmallmatrix} 1 & 0 \\ 0 & \omega_k \end{psmallmatrix}$ and $\CRz_k$ denotes the corresponding controlled gate.
The above refinement is valid if pick the intermediate condition given by the state
\begin{align*}
  \ket{\Psi''} &= \parens*{
        \frac1{2^{k-1} \sqrt2}
        \sum_{\tilx,\tily=0}^{2^{k-1}-1}
        \sum_{x_k \in\{0, 1\}}
        \omega_{k-1}^{\tilx \tily}
        \omega_{k}^{\tily x_k}
        \ket{\tilx}_{r_1,\ldots,r_{k-1}}
        \ket{x_k}_{r_k}
        \ket{x_k}_{q_1}
        \ket{\tily}_{q_2,\ldots,q_k}
        } \ot \ketpsirest
\end{align*}
This is almost identical to~$\ket{\Psi_k}$, except that~$q_1$ is in a basis state rather than a suitable superposition.
As we have $H \ket{x_k} = \frac1{\sqrt2} \sum_{y_1} (-1)^{x_k y_1} \ket{y_1}$ by definition, this is easily fixed by a Hadamard gate:
\[
  \qpenv{\qphole{\ket{\Psi''}}{\ket{\Psi_{k}}}} \totref \qpenv{\qpunitary{q_1}{H}}
  \ruletag{H.unit}.
\]
This concludes the construction of the quantum Fourier transform.

\section{Conclusion and Outlook}\label{sec:conclusion}

In this work, we proposed Quantum Correctness by Construction (QbC), an approach for constructing quantum programs that are guaranteed to be correct by construction.
To this end, we extended a quantum while language with a construct called \emph{holes}, which represent yet-to-be-constructed subprograms that carry a precondition and a postcondition.
We presented refinement rules that iteratively refine such quantum programs and proved that these rules are sound and complete:
every program is guaranteed to satisfy the specification it was constructed from, and every correct program can always be constructed from the specification.
Finally, we demonstrated the QbC approach by constructing quantum programs for some idiomatic problems, starting from their natural specification.
We found that in these examples, QbC naturally suggested how to derive program details and highlighted key design choices that had to be made along the way.
We take these findings to suggest that QbC could play a meaningful role in supporting the design of quantum algorithms, their taxonomization, and the construction and verification of larger quantum software.
We now describe some promising directions for future research to further pave the way in this direction and conclude with a perspective on the role of automation in algorithm development.

\paragraph{Future Directions: Theory}
A natural and interesting direction would be to extend the QbC methodology to other settings and non-functional properties, such as by building on the expected-runtime calculus introduced by~\citet{Liu2022qweakest} to construct programs that are \emph{efficient by construction}.
Another direction would be to extend the QbC approach to other quantum programming languages.
While the \emph{quantum while language} used in our paper is well-understood to provide a clean theoretical model, it is often cumbersome to express complex quantum programs in it.
It would therefore be desirable to extend QbC to a more expressive language, which might include both classical and quantum variables~\cite{feng2021quantum}, oracles and subroutines, quantum data structures, and so forth.
It would also be highly interesting to identify further refinement rules that encode high-level reasoning and design patterns that are commonly used in quantum algorithms, and extend the language to natively support these operations.
For example, quantum amplitude amplification~\cite{brassard2002quantum}, which generalizes Grover's algorithm to offer a quantum speedup for boosting the success probability of a subroutine that improves over naive repetition (cf.\ \cref{subsec:examples:qsearch,subsec:examples:boosting}) and is widely used.
It would also be interesting to devise hybrid approaches that combine both post-hoc (Hoare or weakest-precondition logic) and by-construction (QbC) reasoning, which can be useful particularly when constructing larger and more complex programs~\cite{CbCandPhV2016}.

\paragraph{Future Directions: Implementation and Mechanization}
Our framework and results are agnostic to the choice of assertion language, in the interest of generality.
But committing to a concrete assertion language is an important choice for implementations.
From our examples and refinement rules, we find that it is convenient to consider projections scaled by scalar values and finite linear combinations thereof, expressed in Dirac notation.
For instance, a web-based prototype proposed recently uses a simple assertion language based on Dirac notation for a fixed number of qubits~\cite{seng2024quantum}.
The conditions of refinement are checked using a decision procedure on finite-sized complex matrices to verify the L\"owner order.
To further mechanize QbC and in particular to handle general programs when the number of qubits is not fixed, we believe a natural and ambitious future work would be to integrate QbC with a proof assistant such as Coq or Lean to handle proof obligations of side conditions.
Notable prior work includes CoqQ~\cite{coqq2023} and QWIRE~\cite{Rand2018qwirepractice}, which formalize reasoning about quantum programs in CoQ, and the recent decision procedure for Dirac notation~\cite{xu2025Dirac}.

\paragraph{Outlook: Algorithm Development}
Our view is that algorithm design is a creative process that often requires insights that are difficult to obtain purely by automation.
We believe that a by-construction approach can help algorithm designers focus on this creative aspect, one key insight at a time, to develop algorithms that are ensured to be formally correct, without being bogged down by small details, as the framework guides the refinement and gives a principled way to generate side-conditions that need to be checked per refinement step.
Several well-known quantum algorithms have been post-hoc verified using the quantum Hoare logic proof system.
It is also possible to construct these same programs using QbC, at roughly the same complexity.
Compared with post-hoc verification, this can provide additional benefits: one does not have to decide on all program details a priori, but can do so during the refinement process.
This can lead to different choices, resulting in different programs and trade-offs.
We showcased this in the search example where we derived two programs from the same specification, by making a different choice at a key step.
Similar examples are known in the classical CbC literature.
This may also open up the possibility of discovering alternate and cleaner implementations to existing algorithms.
However, we emphasize that we do not see by-construction and post-hoc approaches as mutually exclusive, but rather as complementing each other with each having its role.
The most natural approach may well be a hybrid approach, as we propose to explore in future work above.

\section{Data-Availability Statement} 
This paper proposes the 
theoretical foundations for a correctness-by-construction approach for quantum programs.
We do not provide an artifact, but note that \cite{seng2024quantum} gives a web-based prototype.

\begin{acks}
We thank Gilles Barthe and Bruce Watson for fruitful discussions on the subject of this work, and the anonymous referees for valuable feedback on earlier versions of this manuscript.
All authors acknowledge support by the BMBF (QuBRA, 13N16135 \& 13N16303; QuSol, 13N17173 \& 13N17170).
IS also acknowledges support by the BMWK (ProvideQ, 01MQ22006F).
MW also acknowledges support by the European Union (ERC, SYMOPTIC, 101040907), by the Deutsche Forschungsgemeinschaft (DFG, German Research Foundation) under Germany's Excellence Strategy - EXC\ 2092\ CASA - 390781972, and by the Dutch Research Council (NWO grant OCENW.KLEIN.267).
\end{acks}

\bibliographystyle{ACM-Reference-Format}
\bibliography{qbc}

\makeatletter
\par\bigskip\noindent{\small\normalfont\@received\par}
\def\@received{}
\makeatother

\clearpage
\appendix
\section*{Appendix}

The supplementary material below contains the proofs for our results for both partial and total correctness,
and a pedagogical example on constructing correct quantum programs using QbC.

\section{Soundness of Refinement}
\label{app:soundness}
In this appendix we present the proofs of \cref{thm:soundness-partial} and \cref{thm:soundness-total}.
Before giving the proof, we first analyze the notion of how a program with multiple holes can be filled.
Consider the following intermediate program obtained by refinining to a sequence:
\[
  \qpenv{\qphole{P}{Q}} \totref \qpenv{\qphole{P}{R} ; \qphole{R}{Q}} \ruletag{H.seq}
\]
The only way to refine this program is by picking a hole and refining it individually.
The same is true more generally for any composite program (sequence, \textbf{repeat}, \textbf{case}, \textbf{while}): if one refines such a program, then the resulting program has the same outer structure; only its subprograms get refined.
We prove this in the following lemma.

\begin{lemma}[Refining programs containing holes]
\label{lem:chain_split}
For the relation~$\refine$ being either $\parref$ or $\totref$, we have the following for every $k \in \N$:
\begin{enumerate}
  \item
    If $\qpenv{\qpabsprog{S'_1};\qpabsprog{S'_2}} \refine^k S$,
    then there exist programs $S_1$ and $S_2$, and integers $k_1, k_2 \in \N$
    such that $S = \qpenv{\qpabsprog{S_1};\qpabsprog{S_2}}$
    and $S'_1 \refine^{k_1} S_1$ and $S'_2 \refine^{k_2} S_2$ and $k_1 + k_2 = k$.
  \item
    If $\qpenv{\qprepeat{N}{\qpabsprog{S_\text{body}'}}} \refine^k S$,
    then there exists a program $S_\text{body}$ such that
    $S_\text{body}' \refine^k S_\text{body}$
    and
    $S~=~\qpenv{\qprepeat{N}{\qpabsprog{S_\text{body}}}}$.
  \item
    If $\qpenv{\qpselect{\vec M}{\vec q}{\{\omega \colon \qpabsprog{S'_\omega}\}_{\omega\in\Omega}}}
    \parref^k S$ \\
    then there exist programs $S_\omega$ and integers $k_\omega \in \N$ for all $\omega\in\Omega$
    such that \\
    $S = \qpenv{\qpselect{\vec M}{\vec q}{{\{\omega \colon \qpabsprog{S_\omega}\}_{\omega\in\Omega}}}}$,
    and $S'_\omega \refine^{k_\omega} S_\omega$ for each $\omega\in\Omega$, and $\sum_\omega k_\omega = k$.
  \item
    If $\qpenv{\qpwhile{B}{\vec q}{\qpabsprog{S_\text{body}'}}} \refine^k S$
    then there exists a program $S_\text{body}$
    such that
    $S = \qpenv{\qpwhile{B}{\vec q}{\qpabsprog{S_\text{body}}}}$
    and
    $S_\text{body}' \refine^k S_\text{body}$.
\end{enumerate}
\end{lemma}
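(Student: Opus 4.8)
The plan is to prove all four claims simultaneously by induction on the number of refinement steps~$k$, treating $\refine$ uniformly as either $\parref$ or $\totref$. The base case $k=0$ is immediate: since $\refine^0$ is the identity on programs, the ``refined'' program equals the original composite program, so we may take each subprogram unchanged and all step-counts equal to zero (for \textbf{case}, every $k_\omega = 0$, whose sum is~$0$).

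For the inductive step I would factor a chain of $k+1$ steps by splitting off its \emph{first} step, writing $\qpenv{\qpabsprog{S'_1};\qpabsprog{S'_2}} \refine \hat S \refine^k S$ in the sequence case, and analogously for the other three composite forms. The crux is a case analysis over which rule can justify this first step. The key observation is purely syntactic: every single-hole rule (those labelled (H.$*$), (HP.$*$), and (HT.$*$)) has a left-hand side of the form $\qpenv{\qphole{P}{Q}}$, i.e.\ a bare hole, and therefore \emph{none} of them can apply to a composite program. Hence the only applicable rules are the composite rules (C.$*$), each of which preserves the outer program structure and refines exactly one immediate subprogram by a single step.

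Concretely, for a sequence the first step must be \ruleref{C.seqL} or \ruleref{C.seqR}, so $\hat S$ is again a sequence $\qpenv{\qpabsprog{\hat S_1};\qpabsprog{\hat S_2}}$ with either $S'_1 \refine \hat S_1$ and $\hat S_2 = S'_2$, or $\hat S_1 = S'_1$ and $S'_2 \refine \hat S_2$. Applying the induction hypothesis to $\hat S \refine^k S$ yields $S = \qpenv{\qpabsprog{S_1};\qpabsprog{S_2}}$ together with counts $\hat k_1 + \hat k_2 = k$ and chains $\hat S_i \refine^{\hat k_i} S_i$; prepending the first step to the appropriate component gives $S'_1 \refine^{k_1} S_1$ and $S'_2 \refine^{k_2} S_2$ with $k_1 + k_2 = k+1$. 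The \textbf{repeat} and \textbf{while} cases are simpler, since there is a single body subprogram: the first step is forced to be \ruleref{C.repeat} resp.\ \ruleref{C.while}, and all $k+1$ steps accumulate on the body. The \textbf{case} statement is handled exactly like the sequence, except that the first step (via \ruleref{C.case}) may refine any one of the branch programs~$S_\omega$, so the counts $k_\omega$ are distributed across branches and summed through the induction hypothesis to give $\sum_\omega k_\omega = k+1$.

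I expect the only genuine content to lie in the single-step case analysis---verifying that no hole rule can fire on a composite program and that each composite rule touches exactly one subprogram; everything else is routine bookkeeping of step-counts through the induction. The one point to record carefully is that a chain of $k+1$ refinement steps decomposes as a single first step followed by a chain of $k$ steps, which is precisely what licenses the inductive appeal to the $k$-step tail.
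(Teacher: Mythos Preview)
Your proposal is correct and follows essentially the same approach as the paper's proof: induction on the length~$k$, using the syntactic observation that only the composite rules (C.$*$) can apply to a composite program (since every hole rule has a bare hole on its left-hand side), and that each such rule preserves the outer structure while refining a single subprogram. The paper presents this somewhat more tersely---establishing the one-step case first and then remarking that the general case follows by induction---whereas you spell out the base case $k=0$ and the step-count bookkeeping explicitly, but the underlying argument is the same.
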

\begin{proof}
(1) We first prove the claim for $k = 1$.
If $\qpenv{\qpabsprog{S'_1};\qpabsprog{S'_2}} \refine S$, then we could only have applied one of \ruleref{C.seqL} or \ruleref{C.seqR} and therefore $S$ must be some sequence statement of the form $\qpenv{\qpabsprog{S_1};\qpabsprog{S_2}}$.
If we use \ruleref{C.seqL} then we have $S'_1 \refine S_1$ and $S'_2 = S_2$, and similarly for \ruleref{C.seqR}.
Since, in both cases, $S$ is again a sequence statement, we can apply the preceding observation inductively to prove the claim for arbitrary~$k$.

Similarly, for~(2) we can only refine by using \ruleref{C.repeat},
for~(3) only using \ruleref{C.case},
and for~(4) only using \ruleref{C.while}.
All these rules preserve the root program structure and only refine the body of the program, hence proving the lemma.
\end{proof}

We can now proceed to prove our soundness theorem for refinement for partial correctness:

\thmsoundnesspartial*

\begin{proof}[Proof of \cref{thm:soundness-partial}]
We prove this by induction over the length of the refinement chain.
That is, we prove by induction for every $k \ge 1$, if $\qpenv{\qphole{P}{Q}} \parref^k S$ and $S$ has no holes, then $\parcorr \hoare{P}{S}{Q}$.

\mymedskip\emph{Base case~$(k = 1)$:}
We have $\qpenv{\qphole{P}{Q}}~\parref~S$.
Since~$S$ has no holes, the refinement can only be due to one of the rules \ruleref{H.skip}, \ruleref{H.init} or \ruleref{H.unit}.
We analyze these three cases separately:

\begin{proofcase}
\ruleref{H.skip}
$S = \qpenv{\qpskip}$ and $P \imp Q$:
The semantics of the former is~$\Bracks{S}(\rho) = \Bracks{\qpenv{\qpskip}}(\rho) = \rho$ and the latter means that~$\tr(P \rho) \le \tr(Q \rho)$ for all states~$\rho$.
Therefore, we have for all states~$\rho$ that
\[
\satisfies{\rho}{P} = \tr(P\rho) \le \tr(Q\rho) = \tr(Q~\Bracks{S}(\rho)) = \satisfies{\Bracks{S}(\rho)}{Q}.
\]
This proves $\totcorr \hoare{P}{\qpenv{\qpskip}}{Q}$ and therefore also $\parcorr \hoare{P}{\qpenv{\qpskip}}{Q}$.
\end{proofcase}

\begin{proofcase}
\ruleref{H.init}
$S = \qpenv{\qpinit{\vec q}}$
and
$P \imp \sum_{\vec x\in\Sigma_{\vec q}} \ketbra{\vec x}{\vec 0}_{\vec q} Q \ketbra{\vec 0}{\vec x}_{\vec q}$:
From the latter, we have for all states $\rho$:
\begin{align*}
    \satisfies{\rho}{P}
    = \tr(P\rho)
    &\le \tr\left(\parens*{\sum_{\vec x\in\Sigma_{\vec q}} \ketbra{\vec x}{\vec 0}_{\vec q} Q \ketbra{\vec 0}{\vec x}_{\vec q}} \rho\right) \\
    &= \tr\left({\sum_{\vec x\in\Sigma_{\vec q}} Q \ketbra{\vec 0}{\vec x}_{\vec q}} \rho \ketbra{\vec x}{\vec 0}_{\vec q} \right) \\
    &= \tr\left(Q ~ \Bracks{\qpenv{\qpinit{\vec q}}}(\rho) \right)
    = \satisfies{\Bracks{S}(\rho)}{Q}.
\end{align*}
This proves $\totcorr\hoare{P}{\qpenv{\qpinit{q}}}{Q}$ and therefore also $\parcorr\hoare{P}{\qpenv{\qpinit{q}}}{Q}$.
\end{proofcase}

\begin{proofcase}
\ruleref{H.unit}
$S = \qpenv{\qpunitary{\vec q}{U}}$
and
$P \imp U^\dagger Q U$:
The latter condition means that $\tr(P\rho) \le \tr(U^\dagger Q U \rho)$ for all states $\rho$, and therefore
\begin{align*}
    \satisfies{\rho}{P}
    = \tr(P\rho)
    &\le \tr(\parens{U^\dagger Q U} \rho) \\
    &= \tr(Q(U\rho U^\dagger)) \\
    &= \tr(Q~\Bracks{\qpenv{\qpunitary{\vec q}{U}}}(\rho))
    = \satisfies{\Bracks{S}(\rho)}{Q}.
\end{align*}
This proves $\totcorr \hoare{P}{\qpenv{\qpunitary{\vec q}{U}}}{Q}$ and therefore also $\parcorr \hoare{P}{\qpenv{\qpunitary{\vec q}{U}}}{Q}$.
\end{proofcase}

\noindent We have thus established the base case, namely that if~$\qpenv{\qphole{P}{Q}} \parref^1 S$ then~$\parcorr \hoare{P}{S}{Q}$.

\mymedskip\emph{Induction Step $(k > 1)$:}
Using the induction hypothesis, we assume that all refinements of length at most $k - 1$ are partially correct.
We now prove that any program that refines in $k$ steps must also be partially correct.
Now, $\qpenv{\qphole{P}{Q}} \parref^k S$ means that there exists a program~$S'$ such that $\qpenv{\qphole{P}{Q}} \parref S' \parref^{k-1} S$.
Since~$S'$ must have holes, the refinement~$\qpenv{\qphole{P}{Q}} \parref S'$ can only be due to one of the following rules:

\begin{proofcase}
\ruleref{H.seq}
$S' = \qpenv{\qphole{P}{R};\qphole{R}{Q}}$:
As $S' \parref^{k-1} S$, \cref{lem:chain_split} states that there are programs~$S_1, S_2$ and integers~$k_1, k_2$
such that $S = \qpenv{\qpabsprog{S_1} ; \qpabsprog{S_2}}$
and $\qpenv{\qphole{P}{R}}~\parref^{k_1}~S_1$
and $\qpenv{\qphole{R}{Q}}~\parref^{k_2}~S_2$
(where~$k_1 + k_2 = k - 1$).
From the induction hypothesis we have that $\parcorr \hoare{P}{S_1}{R}$ and $\parcorr \hoare{R}{S_2}{Q}$.
Therefore, for any partial state $\rho$,
\begin{align*}
  \satisfies{\rho}{P} - \tr(\rho)
  &=\tr(P\rho) - \tr(\rho) \\
  &\le \tr(R~\Bracks{S_1}(\rho)) - \tr(\Bracks{S_1}(\rho)) \\
  &\le \tr(Q~\Bracks{S_2}(\Bracks{S_1}(\rho))) - \tr(\Bracks{S_2}(\Bracks{S_1}(\rho))) \\
  &= \tr(Q~\Bracks{\qpenv{\qpabsprog{S_1};\qpabsprog{S_2}}}(\rho))
      - \tr(\Bracks{\qpenv{\qpabsprog{S_1};\qpabsprog{S_2}}}(\rho)) \\
  &= \satisfies{\Bracks{S}(\rho)}{Q} - \tr(\Bracks{S}(\rho)).
\end{align*}
Therefore, $\parcorr \hoare{P}{S}{Q}$.
\end{proofcase}

\begin{proofcase}
\ruleref{HP.split}
$S' = \qpenv{\qphole{P_\gamma}{Q_\gamma}}$ and $P \imp \sum_\gamma p_\gamma P_\gamma$ and~$\sum_\gamma p_\gamma Q_\gamma \imp Q$ for a probability distribution~$p_\gamma$:
Recall that the two implications mean that~$P \preceq \sum_\gamma p_\gamma P_\gamma$ and~$\sum_\gamma p_\gamma Q_\gamma \preceq Q$.
From the induction hypothesis we have~$\parcorr \hoare{P_\gamma}{S}{Q_\gamma}$ for all~$\gamma$.
Therefore, for any partial state $\rho$,
\begin{align*}
\satisfies{\rho}{P} - \tr(\rho)
&=\tr(P\rho) - \tr(\rho) \\
&\le 
\sum_\gamma p_\gamma \parens*{ \tr(P_\gamma \rho) - \tr(\rho) } \\
&\leq \sum_\gamma p_\gamma \parens*{ \tr(Q_\gamma \Bracks{S}(\rho)) - \tr(\Bracks{S}(\rho)) } \\
&\leq \tr(Q \Bracks{S}(\rho)) - \parens*{\sum_\gamma p_\gamma}\tr(\Bracks{S}(\rho)) \\
&= \satisfies{\Bracks{S}(\rho)}{Q} - \tr(\Bracks{S}(\rho)).
\end{align*}
Therefore, $\parcorr \hoare{P}{S}{Q}$.
\end{proofcase}

\begin{proofcase}
\ruleref{H.repeat}
$S' = \qpenv{\qprepeat{N}{\qphole{R_j}{R_{j+1}}}}$, with $j~\in~\irange{0}{N-1}$ a formal parameter, $P~\imp~R_0$, and~$R_N~\imp~Q$:
As $S' \parref^{k-1} S$, \cref{lem:chain_split} states that $S = \qpenv{\qprepeat{N}{\qpabsprog{C}}}$ for some program~$C$ satisfying~$\qpenv{\qphole{R_j}{R_{j+1}}}~\parref^{k-1}~C$.
By the induction hypothesis, $\parcorr \hoare{R_j}{C}{R_{j+1}}$.
It follows that for all~$j \in \irange{0}{N-1}$ and any partial state $\sigma_j$, we have
\begin{align*}
  \tr(R_j \sigma_j) - \tr\sigma_j \le \tr(R_{j+1} \Bracks{C}(\sigma_j)) - \tr(\Bracks{C}(\sigma_j)).
\end{align*}
Choosing $\sigma_j = \Bracks{C}^j(\rho)$, for some arbitrary partial state $\rho$, the above inequality becomes
\begin{align*}
  \tr(R_j~\Bracks{C}^j(\rho)) - \tr(\Bracks{C}^j(\rho)) \le \tr(R_{j+1} \Bracks{C}^{j+1}(\rho)) - \tr(\Bracks{C}^{j+1}(\rho))
\end{align*}
and chaining the above inequalities for~$j \in \irange{0}{N-1}$ gives us
\begin{align*}
  \satisfies{\rho}{R_0} - \tr(\rho)
&= \tr(R_0~\rho) - \tr(\rho) \\
&\leq \tr(R_N~\Bracks{C}^N(\rho)) - \tr(\Bracks{C}^N(\rho)) \\
&= \satisfies{\Bracks{S}(\rho)}{R_N} - \tr(\Bracks{S}(\rho)),
\end{align*}
where we used that~$\Bracks{S} = \Bracks{C}^N$.
This shows that $\parcorr \hoare{R_0}{S}{R_N}$.
Since~$P~\imp~R_0$ and~$R_N~\imp~Q$, $\parcorr \hoare{P}{S}{Q}$ follows just like in the preceding case.
\end{proofcase}

\begin{proofcase}
\ruleref{H.case}
$S' = \qpenv{\qpselect{\vec M}{\vec q}{\{\omega \colon \qphole{P_\omega}{Q}\}_{\omega\in\Omega}}}$
and $P \imp \sum_{\omega\in\Omega} \cM_\omega(P_\omega)$:
According to \cref{lem:chain_split}, there exist programs~$S_\omega$ and numbers~$k_\omega$ for every~$\omega \in \Omega$,
such that~$S = \qpenv{\qpselect{\vec M}{\vec q}{\{\omega \colon \qpabsprog{S_\omega}\}_{\omega\in\Omega}}}$
and~$\qpenv{\qphole{P_\omega}{Q}} \parref^{k_\omega} S_\omega$
and~$\sum_\omega k_\omega = k - 1$.
From the induction hypothesis we have that $\parcorr \hoare{P_\omega}{S_\omega}{Q}$ for every $\omega\in\Omega$.
Therefore, for any partial state $\sigma$ and any $\omega\in\Omega$,
\[
\tr(P_\omega \sigma) - \tr(\sigma)
\le \tr(Q~\Bracks{S_\omega}(\sigma)) - \tr(\Bracks{S_\omega}(\sigma))
\]
So for every partial state $\rho$, we have
\begin{align*}
  \satisfies{\rho}{P} - \tr(\rho)
  &=\tr(P\rho) - \tr(\rho) \\
  &\le \tr{\parens*{\parens*{\sum_{\omega\in\Omega}\cM_\omega(P_\omega)} \rho}} - \tr(\rho) \\
  &= \sum_{\omega\in\Omega} \tr{\parens*{P_\omega \cM_\omega(\rho)}} - \tr(\cM_\omega(\rho)) \\
  &\le \sum_{\omega\in\Omega} \tr{\parens*{Q~\Bracks{S_\omega}(\cM_\omega(\rho))) - \tr(\Bracks{S_\omega}(\cM_\omega(\rho))}} \\
  &= \tr(Q~\Bracks{S}(\rho)) - \tr(\Bracks{S}(\rho)) \\
  &= \satisfies{\Bracks{S}(\rho)}{Q} - \tr(\Bracks{S}(\rho)).
\end{align*}
Therefore, $\parcorr \hoare{P}S{Q}$.
\end{proofcase}

\begin{proofcase}
\ruleref{HP.while}
$S' = \qpenv{\qpwhile{B}{\vec q}{ \qphole{R}{\cB_{0,\q}(Q) + \cB_{1,\q}(R)} }}$ and \\ $P~\imp~\cB_{0,\q}(Q)~+~\cB_{1,\q}(R)$:
As $S' \parref^{k-1} S$, by \cref{lem:chain_split} we must have 
\[ S = \qpenv{\qpwhile{B}{\vec q}{\qpabsprog{C}}} \] for some program~$C$ such that $\qpenv{\qphole{R}{\cB_{0,\q}(Q) + \cB_{1,\q}(R)}} \parref^{k-1} C$.
From the induction hypothesis we have that~$\parcorr~\hoare{R}{C}{\cB_{0,\q}(Q) + \cB_{1,\q}(R)}$, meaning that for every partial state~$\sigma$ we have
\begin{equation*}
  \tr(R\sigma) - \tr(\sigma)
  \le \tr{\parens*{\parens[\big]{\cB_{0,\q}(Q) + \cB_{1,\q}(R)} ~ \Bracks{C}(\sigma)}} - \tr(\Bracks{C}(\sigma)),
\end{equation*}
or equivalently, after some algebraic manipulations,
\begin{align}\label{eq:partial-loop-body-correct}
  \tr\braces*{(I - Q)~\cB_{0,\q}(\Bracks{C}(\sigma))}
  \leq
  \tr\braces{(I - R) (\sigma - \cB_{1,\q}(\Bracks{C}(\sigma)))}.
\end{align}
Recalling the semantics of \textbf{while} applied to an arbitrary (partial) state $\rho$, we have
\[
  \Bracks{S}(\rho)
  = \sum_{k = 0}^\infty \rho_k,
  \quad \text{where}~ \rho_k = \parens*{ \cB_{0,\q} \circ \parens*{ \Bracks{C} \circ \cB_{1,\q} }^k }(\rho).
\]
For each $k \ge 0$, define $\sigma_k = (\cB_{1,\q} \circ (\Bracks{C} \circ \cB_{1,\q})^k)(\rho)$,
and therefore $\rho_{k+1} = \cB_{0,\q}(\Bracks{C}(\sigma_k))$.
Applying \cref{eq:partial-loop-body-correct} to~$\sigma = \sigma_k$, we get
\begin{align*}
  \tr\braces*{(I - Q)~\rho_{k+1}} \leq \tr\braces{(I - R) (\sigma_k - \sigma_{k+1})}.
\end{align*}
Summing the above over~$k\in\{0,\dots,n-1\}$, for some arbitrary~$n$, and adding~$\tr\braces*{(I - Q)~\rho_0}$, we get
\begin{align*}
\tr\braces*{(I - Q)~\sum_{k = 0}^n \rho_k}
&\leq \tr\braces*{(I - Q)~\rho_0} + \tr\braces{(I - R) (\sigma_0 - \sigma_n)} \\
&= \tr\braces*{(I - Q)~\cB_{0,\q}(\rho)} + \tr\braces{(I - R) (\cB_{1,\q}(\rho) - \sigma_n)} \\
&= \tr(\rho) - \tr\braces*{\parens*{ \cB_{0,\q}(Q) + \cB_{1,\q}(R) } \rho} - \tr\braces{(I - R) \sigma_n} \\
&\leq \tr(\rho) - \tr\braces*{\parens*{ \cB_{0,\q}(Q) + \cB_{1,\q}(R) } \rho} \\
&\leq \tr(\rho) - \tr\braces*{P \rho}
= \satisfies{\rho}{I-P},
\end{align*}
using $P~\imp~\cB_{0,\q}(Q)~+~\cB_{1,\q}(R)$ in the last inequality.
Taking the limit~$n\to\infty$, we obtain that
\begin{align*}
  \satisfies{\Bracks{S}(\rho)}{I-Q} \leq \satisfies{\rho}{I-P}
\end{align*}
which means that $\parcorr \hoare{P}S{Q}$.
\end{proofcase}

\mymedskip

This concludes the proof of the induction step and hence the proof of the theorem.
\end{proof}

We now prove soundness of refinement for total correctness.
Its proof is very similar to the one of \cref{thm:soundness-partial}.

\thmsoundnesstotal*

\begin{proof}
We prove by induction for $k \ge 1$ that if $\qpenv{\qphole{P}{Q}} \totref^k S$ and~$S$ is a concrete program (i.e., has no holes) then~$\totcorr \hoare{P}{S}{Q}$.

\mymedskip\emph{Base case~$(k = 1)$:}
We have $\qpenv{\qphole{P}{Q}}~\totref~S$.
Since~$S$ has no holes, the refinement can only be due to one of the rules \ruleref{H.skip}, \ruleref{H.init} or \ruleref{H.unit}.
In the proof of \cref{thm:soundness-partial} we already saw that these rules are totally correct.
This concludes the base case.

\mymedskip\emph{Induction Step $(k > 1)$:}
Using the induction hypothesis, we assume that all refinements of length at most $k - 1$ are totally correct.
We now prove that any program that refines in $k$ steps must also be totally correct.
Now, $\qpenv{\qphole{P}{Q}} \totref^k S$ means that there exists a program~$S'$ such that $\qpenv{\qphole{P}{Q}} \totref S' \totref^{k-1} S$.
Since~$S'$ must have holes, the refinement~$\qpenv{\qphole{P}{Q}} \totref S'$ can only be due to one of the following rules:

\begin{proofcase}
\ruleref{H.seq}
$S' = \qpenv{\qphole{P}{R};\qphole{R}{Q}}$:
As $S' \totref^{k-1} S$, \cref{lem:chain_split} states that there are programs~$S_1, S_2$ and integers~$k_1, k_2$
such that $S = \qpenv{\qpabsprog{S_1} ; \qpabsprog{S_2}}$
and $\qpenv{\qphole{P}{R}}~\totref^{k_1}~S_1$
and $\qpenv{\qphole{R}{Q}}~\totref^{k_2}~S_2$
(where~$k_1 + k_2 = k - 1$).
From the induction hypothesis we have that $\totcorr \hoare{P}{S_1}{R}$ and $\totcorr \hoare{R}{S_2}{Q}$.
Therefore, for any partial state $\rho$,
\begin{align*}
  \satisfies{\rho}{P}
  =\tr(P\rho)
  &\le \tr(R~\Bracks{S_1}(\rho)) \\
  &\le \tr(Q~\Bracks{S_2}(\Bracks{S_1}(\rho))) \\
  &= \tr(Q~\Bracks{\qpenv{\qpabsprog{S_1};\qpabsprog{S_2}}}(\rho)) \\
  &= \satisfies{\Bracks{S}(\rho)}{Q}.
\end{align*}
Therefore, $\totcorr \hoare{P}{S}{Q}$.
\end{proofcase}

\begin{proofcase}
\ruleref{HT.split}
$S' = \qpenv{\qphole{P_\gamma}{Q_\gamma}}$ and $P \imp \sum_\gamma p_\gamma P_\gamma$ and~$\sum_\gamma p_\gamma Q_\gamma \imp Q$:
Recall the latter mean that~$P \preceq \sum_\gamma p_\gamma P_\gamma$ and~$\sum_\gamma p_\gamma Q_\gamma \preceq Q$.
From the induction hypothesis we have~$\totcorr \hoare{P_\gamma}{S}{Q_\gamma}$ for all~$\gamma$.
Therefore, for any partial state $\rho$,
\begin{align*}
\satisfies{\rho}{P}
&=\tr(P\rho) \\
&\leq \sum_\gamma p_\gamma \tr(P_\gamma \rho) \\
&\leq \sum_\gamma p_\gamma \tr(Q_\gamma~\Bracks{S}(\rho)) \\
&\leq \tr(Q \Bracks{S}(\rho)) = \satisfies{\Bracks{S}(\rho)}{Q}.
\end{align*}
Therefore, $\totcorr \hoare{P}{S}{Q}$.
\end{proofcase}

\begin{proofcase}
\ruleref{H.repeat}
$S' = \qpenv{\qprepeat{N}{\qphole{R_j}{R_{j+1}}}}$, with $j~\in~\irange{0}{N-1}$ a formal parameter, $P~\imp~R_0$, and~$R_N~\imp~Q$:
As $S' \totref^{k-1} S$, \cref{lem:chain_split} states that $S = \qpenv{\qprepeat{N}{\qpabsprog{C}}}$ for some program~$C$ satisfying~$\qpenv{\qphole{R_j}{R_{j+1}}}~\totref^{k-1}~C$.
By the induction hypothesis, $\totcorr \hoare{R_j}{C}{R_{j+1}}$.
It follows that for all~$j \in \irange{0}{N-1}$ and any partial state $\sigma_j$, we have
\begin{align*}
  \tr(R_j \sigma_j) \le \tr(R_{j+1}~\Bracks{C}(\sigma_j)).
\end{align*}
Choosing $\sigma_j = \Bracks{C}^j(\rho)$, for some arbitrary partial state $\rho$, the above inequality becomes
\begin{align*}
  \tr(R_j~\Bracks{C}^j(\rho)) \le \tr(R_{j+1}~\Bracks{C}^{j+1}(\rho)),
\end{align*}
and chaining the above inequalities for~$j \in \irange{0}{N-1}$ gives us
\begin{align*}
  \satisfies{\rho}{R_0}
  = \tr(R_0~\rho)
  \leq \tr(R_N~\Bracks{C}^N(\rho))
  = \satisfies{\Bracks{S}(\rho)}{R_N},
\end{align*}
where we used that~$\Bracks{S} = \Bracks{C}^N$.
This shows that $\totcorr \hoare{R_0}{S}{R_N}$.
Since~$P~\imp~R_0$ and~$R_N~\imp~Q$, it follows that~$\totcorr \hoare{P}{S}{Q}$ just like in the preceding case.
\end{proofcase}

\begin{proofcase}
\ruleref{H.case}
$S' = \qpenv{\qpselect{\vec M}{\vec q}{\{\omega \colon \qphole{P_\omega}{Q}\}_{\omega\in\Omega}}}$
and $P \imp \sum_{\omega\in\Omega} \cM_\omega(P_\omega)$:
According to \cref{lem:chain_split}, there exist programs~$S_\omega$ and numbers~$k_\omega$ for every~$\omega \in \Omega$,
such that~$S = \qpenv{\qpselect{\vec M}{\vec q}{\{\omega \colon \qpabsprog{S_\omega}\}_{\omega\in\Omega}}}$
and~$\qpenv{\qphole{P_\omega}{Q}} \totref^{k_\omega} S_\omega$
and~$\sum_\omega k_\omega = k - 1$.
From the induction hypothesis we have that $\totcorr \hoare{P_\omega}{S_\omega}{Q}$ for every $\omega\in\Omega$.
Therefore, for any partial state $\sigma$ and any $\omega\in\Omega$,
\[
\tr(P_\omega \sigma)
\le \tr(Q~\Bracks{S_\omega}(\sigma))
\]
So for every partial state $\rho$ we have
\begin{align*}
  \satisfies{\rho}{P}
  =\tr(P\rho)
  &\le \tr{\parens*{\parens*{\sum_{\omega\in\Omega}\cM_\omega(P_\omega)} \rho}}  \\
  &= \sum_{\omega\in\Omega} \tr{\parens*{P_\omega \cM_\omega(\rho)}} \\
  &\le \sum_{\omega\in\Omega} \tr{\parens*{Q~\Bracks{S_\omega}(\cM_\omega(\rho)))}} \\
  &= \tr(Q~\Bracks{S}(\rho))
  = \satisfies{\Bracks{S}(\rho)}{Q}.
\end{align*}
Therefore, $\totcorr \hoare{P}S{Q}$.
\end{proofcase}

\begin{proofcase}
\ruleref{HT.while}
  $S' = \qpenv{\qpwhile{B}{\q}{\qphole{R_{n+1}}{\cB_{0,\q}(Q) + \cB_{1,\q}(R_n)}}}$, with~$n\in\N$ a formal parameter, $B$~a binary measurement, and~$\{R_n\}_{n\in\N}$ sequence of predicates such that~~$R_0 = 0$, $R_n \imp R_{n+1}$ for all~$n\in\N$, and the limit $R := \lim_{n\to\infty} R_n$ satisfies $P \imp \cB_{0,\q}(Q) + \cB_{1,\q}(R)$:
  By \cref{lem:chain_split}, because~$S' \totref^{k-1} S$, we must have~$S = \qpenv{\qpwhile{B}{\q}{\qpabsprog{C}}}$ for some program~$C$ satisfying~$\qphole{R_{n+1}}{\cB_{0,\q}(Q) + \cB_{1,\q}(R_n)} \totref^{k-1} C$.
  By the induction hypothesis we have~$\totcorr \hoare{R_{n+1}}{C}{\cB_{0,\q}(Q) + \cB_{1,\q}(R_n)}$ for all~$n\in\N$.
  This means that for every partial state~$\sigma$ and every~$n\in\N$, it holds that
  \begin{align*}
    \tr\parens*{ R_{n+1} \sigma }
  &\leq \tr \parens*{ \parens*{ \cB_{0,\q}(Q) + \cB_{1,\q}(R_n) }~\Bracks{C}(\sigma) } \\
  &= \tr \parens*{ Q~\parens*{ \cB_{0,\q} \circ \Bracks{C} }(\sigma) } + \tr \parens*{ R_n~\parens*{ \cB_{1,\q} \circ \Bracks{C} }(\sigma) }
  \end{align*}
  By repeatedly applying this inequality for~$R_n,R_{n-1},\ldots,R_1$ and using that $R_0 = 0$, we get
  \begin{align*}
    \tr\parens*{ R_{n+1} \sigma }
  &\leq \sum_{k=0}^n \tr \parens*{ Q~\parens*{ \cB_{0,\q} \circ \Bracks{C} \circ \parens*{ \cB_{1,\q} \circ \Bracks{C} }^k }(\sigma) }
    + \tr \parens*{ R_0~\parens*{ (\cB_{1,\q} \circ \Bracks{C})^n(\sigma) } } \\
  &= \tr \parens*{ Q~\sum_{k=0}^n \parens*{ \cB_{0,\q} \circ \parens*{ \Bracks{C} \circ \cB_{1,\q} }^k }\parens*{ \Bracks{C}(\sigma) } }.
  \end{align*}
  Therefore at the limit~$n\to\infty$ we obtain, using the semantics of the \textbf{while} loop,
  \begin{align*}
    \tr\parens*{ R \sigma }
  &= \lim_{n\to\infty} \tr\parens*{ R_{n+1} \sigma }
  \\ &\leq \tr \parens*{ Q~\sum_{k=0}^\infty \parens*{ \cB_{0,\q} \circ \parens*{ \Bracks{C} \circ \cB_{1,\q} }^k }\parens*{ \Bracks{C}(\sigma) } }
  \\ &= \tr \parens*{ Q~\Bracks{S}\parens*{ \Bracks{C}(\sigma) } }.
  \end{align*}
  Using the assumption that $P \imp \cB_{0,\q}(Q) + \cB_{1,\q}(R)$ and the recurrence relation in \cref{eq:while loop recurrence}, it follows that for every partial state~$\rho$ we have
  \begin{align*}
    \satisfies{\rho}{P}
  = \tr\parens*{ P \rho }
  &\leq \tr \parens*{ Q~\cB_{0,\q}(\rho) } + \tr \parens*{ R~\cB_{1,\q}(\rho) } \\
  &\leq \tr \parens*{ Q~\cB_{0,\q}(\rho) } + \tr \parens*{ Q~\Bracks{S}\parens*{ \Bracks{C} \parens*{ \cB_{1,\q}(\rho) } } } \\
  &= \tr \parens*{ Q~\parens*{ \cB_{0,\vec q} + \Bracks{S} \circ \Bracks{C} \circ \cB_{1,\vec q} }(\rho) } \\
  &= \tr \parens*{ Q~\Bracks{S}(\rho) }
  = \satisfies{\Bracks{S}(\rho)}{Q}.
  \end{align*}
  Therefore, $\totcorr \hoare{P}{S}{Q}$.
\end{proofcase}

\mymedskip

This concludes the proof of the induction step and hence the proof of the theorem.
\end{proof}

\section{Completeness of Refinement}\label{app:completeness}
In this appendix we present the proofs of \cref{thm:completeness-partial} and \cref{thm:completeness-total}.

\thmcompletenesspartial*

\begin{proof}
  We prove this by induction on the structure of the concrete program~$S$.
  Throughout we use that~$\parcorr \hoare{P}{S}{Q}$ is equivalent to~$I - P \succeq \Bracks{S}^\dagger(I - Q)$.

  \emph{Base cases:}
  There are three types of programs without subprograms to consider:

  \begin{proofcase}
    $S = \qpenv{\qpskip}$:
    By assumption, we have~$\parcorr \hoare{P}{\qpenv{\qpskip}}{Q}$, which is equivalent to~$P \imp Q$.
    Thus we can use the refinement rule \ruleref{H.skip} to obtain $\qpenv{\qphole{P}{Q}} \parref S$, proving this case.
  \end{proofcase}

  \begin{proofcase}
    $S = \qpenv{\qpinit{\q}}$:
    By assumption, we have~$\parcorr \hoare{P}{\qpenv{\qpinit{\q}}}{Q}$ which is equivalent to
    \begin{align*}
      I - P
    \succeq \sum_{\vec x \in \Sigma_{\vec q}} \ketbra{\vec x}{\vec 0}_{\vec q} (I - Q) \ketbra{\vec 0}{\vec x}_{\vec q}
    = I - \sum_{\vec x \in \Sigma_{\vec q}} \ketbra{\vec x}{\vec 0}_{\vec q} Q \ketbra{\vec 0}{\vec x}_{\vec q}
    \end{align*}
    and hence $P \imp \sum_{\vec x \in \Sigma_{\vec q}} \ketbra{\vec x}{\vec 0}_{\vec q} Q \ketbra{\vec 0}{\vec x}_{\vec q}$.
    Thus we can use the refinement rule \ruleref{H.init} to obtain $\qpenv{\qphole{P}{Q}} \parref S$, proving this case.
  \end{proofcase}

  \begin{proofcase}
    $S = \qpenv{\qpunitary{\q}{U}}$:
    By assumption, we have~$\parcorr \hoare{P}{\qpenv{\qpunitary{\q}{U}}}{Q}$, which is equivalent to
    \begin{align*}
      I - P \succeq U_\q^\dagger (I - Q) U_\q = I - U_\q^\dagger Q U_\q
    \end{align*}
    and hence~$P \imp U_\q^\dagger Q U_\q$.
    Thus can use the refinement rule \ruleref{H.unit} to obtain $\qpenv{\qphole{P}{Q}} \parref S$, proving this case.
  \end{proofcase}

  \emph{Induction step:}
  We now consider a concrete program~$S$ with subprograms.
  We assume that~$\parcorr \hoare{P}{S}{Q}$ and we must prove that $\qpenv{\qphole{P}{Q}} \parref^* S$.
  By the induction hypothesis, we know that for every syntactic subprogram~$S'$ of~$S$ (which is necessarily concrete as well), and for arbitrary predicates~$P'$ and~$Q'$, it holds that~$\parcorr \hoare{P'}{S'}{Q'}$ implies~$\qpenv{\qphole{P'}{Q'}} \parref^* S'$.
  There are four cases to consider for~$S$:

  \begin{proofcase}
    $S = \qpenv{\qpabsprog{S_1} ; \qpabsprog{S_2}}$:
    By assumption, we have $\parcorr \hoare{P}{S_1 ; S_2}{Q}$, which means $I - P \succeq \Bracks{S_1}^\dagger (\Bracks{S_2}^\dagger (I - Q))$.
    If we choose $R = I - \Bracks{S_2}^\dagger(I - Q)$, then it follows that~$I - P \succeq \Bracks{S_1}^\dagger (I - R)$, as well as, of course, $I - R = \Bracks{S_2}^\dagger (I - Q)$, which mean that~$\parcorr \hoare{P}{S_1}{R}$ and~$\parcorr \hoare{R}{S_2}{Q}$, respectively.
    Since~$S$ is a concrete program, so are~$S_1$ and~$S_2$.
    Hence we can see from the induction hypothesis that
    $\qpenv{\qphole{P}{R}} \parref^* S_1$
    and
    $\qpenv{\qphole{R}{Q}} \parref^* S_2$.
    We can thus construct~$S$ by first using \ruleref{H.seq} with the intermediate condition~$R$ to obtain $\qpenv{\qphole{P}{Q}} \parref \qpenv{\qphole{P}{R} ; \qphole{R}{Q}}$ and then using \ruleref{C.seqL} and \ruleref{C.seqR}, proving this case.
  \end{proofcase}

  \begin{proofcase}
    $S = \qpenv{\qprepeat{N}{\qpabsprog{S'}}}$:
    By assumption, we have $\parcorr \hoare{P}{\qpenv{\qprepeat{N}{\qpabsprog{S'}}}}{Q}$, which means $I - P \succeq (\Bracks{S'}^\dagger)^N (I - Q)$.
    Let us choose~$R_j = I - (\Bracks{S'}^\dagger)^{N - j}(I - Q)$ for $j \in \{0, \ldots, N\}$.
    On the one hand, this ensures that~$I - R_j = \Bracks{S'}^\dagger(I - R_{j+1})$ for every~$j \in \{0,\dots,N-1\}$, that is, $\parcorr \hoare{R_j}{S'}{R_{j+1}}$.
    Since~$S$ is a concrete program, so is~$S'$, hence we can see from the induction hypothesis that~$\qpenv{\qphole{R_j}{R_{j+1}}} \parref^* S'$.
    On the other hand, it also holds that~$P \imp R_0$ and $R_N = Q$.
    We can thus construct~$S$ by first using \ruleref{H.repeat} with the family~$\{R_j\}$ to obtain
    $\qpenv{\qphole{P}{Q}} \parref \qpenv{\qprepeat{N}{\qphole{R_j}{R_{j+1}}}}$
    and then using \ruleref{C.repeat}, proving this case.
  \end{proofcase}

  \begin{proofcase}
    $S = \qpenv{
      \qpselect
        {\{\omega_1 \colon M_{\omega_1}, \omega_2 \colon M_{\omega_2}, \ldots\}}
        {\q}
        {\omega_1 \colon \qpabsprog{S_{\omega_1}}, \ \omega_2 \colon \qpabsprog{S_{\omega_2}}, \ \ldots}
    }$, where \\ $\{M_\omega\}_{\omega\in\Omega}$ is a measurement on~$\mathcal H_q$ with outcomes in~$\Omega = \{\omega_1,\omega_2,\dots\}$:
    By assumption, we have $\parcorr~\hoare{P}{S}{Q}$, which means
    \begin{align}\label{eq:case compl interm}
      I - P
      \succeq \Bracks{S}^\dagger (I - Q)
      = \sum_{\omega\in\Omega} \cM_\omega \parens*{\Bracks{S_\omega}^\dagger(I - Q)}
      = I - \sum_{\omega\in\Omega} \cM_\omega \parens*{I - \Bracks{S_\omega}^\dagger(I - Q)}.
    \end{align}
    Let us choose~$P_\omega = I - \Bracks{S_\omega}^\dagger(I - Q)$ for~$\omega\in\Omega$.
    On the one hand, this ensures that~$\parcorr \hoare{P_\omega}{S_\omega}{Q}$, hence we can see from the induction hypothesis that~$\qpenv{\qphole{P_\omega}{Q}} \parref^* S_\omega$ for every~$\omega\in\Omega$.
    On the other hand, by \cref{eq:case compl interm} it also holds that~$P \imp \sum_{\omega\in\Omega} \cM_\omega(P_\omega)$.
    We can thus construct~$S$ by first using \ruleref{H.case} to obtain
    $\qpenv{\qphole{P}{Q}} \parref \qpenv{\qpselect {\vec M} {\q} { \{\omega \colon \qphole{P_\omega}{Q}\}_{\omega\in\Omega}}}$
    and then using \ruleref{C.case} for each~$\omega\in\Omega$, proving this case.
  \end{proofcase}

  \begin{proofcase}
    $S = \qpenv{\qpwhile{B}{\q}{\qpabsprog{C}}}$:
    By assumption, we have~$\parcorr \hoare{P}{S}{Q}$,
    which means that~$I - P \succeq \Bracks{S}^\dagger(I-Q)$, or
    \begin{align}\label{eq:while compl 1}
      P \imp I - \Bracks{S}^\dagger (I - Q).
    \end{align}
    Let us choose~$R = I - \Bracks{C}^\dagger(\Bracks{S}^\dagger(I - Q))$.
    Then we have from the recurrence in \cref{eq:while loop recurrence} that
    \begin{align}\label{eq:while compl 2}
      \Bracks{S}^\dagger(I - Q)
    = \cB_{0,\q}(I - Q) + \cB_{1,\q}(\Bracks{C}^\dagger(\Bracks{S}^\dagger(I - Q)))
    = I - (\cB_{0,\q}(Q) + \cB_{1,\q}(R)).
    \end{align}
    On the one hand, it follows that
    \begin{align*}
      I - R
    = \Bracks{C}^\dagger(\Bracks{S}^\dagger(I - Q))
    = \Bracks{C}^\dagger( I - (\cB_{0,\q}(Q) + \cB_{1,\q}(R)) )
    \end{align*}
    Thus, $\parcorr \hoare{R}{C}{\cB_{0,\q}(Q) + \cB_{1,\q}(R)}$, so we find~$\qpenv{\qphole{R}{\cB_{0,\q}(Q) + \cB_{1,\q}(R)}} \parref^* C$ from the induction hypothesis.
    On the other hand, we have from \cref{eq:while compl 1,eq:while compl 2} that~$P \imp \cB_{0,\q}(Q) + \cB_{1,\q}(R)$.
    We can thus construct~$S$ by first using the refinement rule \ruleref{HP.while} with the above $R$, to obtain $\qpenv{\qphole{P}{Q}} \parref \qpenv{\qpwhile{\q}{B}{\qphole{R}{\cB_{0,\q}(Q) + \cB_{1,\q}(R)}}}$, and then \ruleref{C.while}, proving this case.
  \end{proofcase}

\mymedskip

This concludes the structural induction and hence the proof of the theorem.
\end{proof}

We now present the proof of completeness of refinement for total correctness.
The proof is very similar to the one of \cref{thm:completeness-partial}.

\thmcompletenesstotal*

\begin{proof}
  We prove this by induction on the structure of the concrete program~$S$, similar to the proof of \cref{thm:completeness-partial}.
  Throughout we use that~$\totcorr \hoare{P}{S}{Q}$ is equivalent to~$P \preceq \Bracks{S}^\dagger(Q)$.

  \emph{Base cases:}
  There are three types of programs without subprograms to consider:

  \begin{proofcase}
    $S = \qpenv{\qpskip}$:
    By assumption, we have~$\totcorr \hoare{P}{\qpenv{\qpskip}}{Q}$, which is equivalent to~$P \imp Q$.
    Thus we can use the refinement rule \ruleref{H.skip} to obtain $\qpenv{\qphole{P}{Q}} \totref S$, proving this case.
  \end{proofcase}

  \begin{proofcase}
    $S = \qpenv{\qpinit{\q}}$:
    By assumption, we have~$\totcorr \hoare{P}{\qpenv{\qpinit{\q}}}{Q}$ which is equivalent to
    \begin{align*}
      P
      \preceq \sum_{\vec x \in \Sigma_{\vec q}} \ketbra{\vec x}{\vec 0}_{\vec q} Q \ketbra{\vec 0}{\vec x}_{\vec q}.
    \end{align*}
    Thus we can use the refinement rule \ruleref{H.init} to obtain $\qpenv{\qphole{P}{Q}} \totref S$, proving this case.
  \end{proofcase}

  \begin{proofcase}
    $S = \qpenv{\qpunitary{\q}{U}}$:
    By assumption, we have~$\totcorr \hoare{P}{\qpenv{\qpunitary{\q}{U}}}{Q}$, which is equivalent to
    \begin{align*}
      P \preceq U_\q^\dagger Q U_\q.
    \end{align*}
    Thus can use the refinement rule \ruleref{H.unit} to obtain $\qpenv{\qphole{P}{Q}} \totref S$, proving this case.
  \end{proofcase}

  \emph{Induction step:}
  We now consider a concrete program~$S$ with subprograms.
  We assume that~$\totcorr \hoare{P}{S}{Q}$ and we must prove that $\qpenv{\qphole{P}{Q}} \totref^* S$.
  By the induction hypothesis, we know that for every syntactic subprogram~$S'$ of~$S$ (which is necessarily concrete as well), and for arbitrary predicates~$P'$ and~$Q'$, it holds that~$\totcorr \hoare{P'}{S'}{Q'}$ implies~$\qpenv{\qphole{P'}{Q'}} \totref^* S'$.
  There are four cases to consider for~$S$:

  \begin{proofcase}
    $S = \qpenv{\qpabsprog{S_1} ; \qpabsprog{S_2}}$:
    By assumption, we have $\totcorr \hoare{P}{S_1 ; S_2}{Q}$, which means $P \preceq \Bracks{S_1}^\dagger (\Bracks{S_2}^\dagger (Q))$.
    If we choose $R = \Bracks{S_2}^\dagger(Q)$, then it follows that~$P \preceq \Bracks{S_1}^\dagger (R)$, as well as, of course, $R \preceq \Bracks{S_2}^\dagger (Q)$, which mean that~$\totcorr \hoare{P}{S_1}{R}$ and~$\totcorr \hoare{R}{S_2}{Q}$, respectively.
    Since~$S$ is a concrete program, so are~$S_1$ and~$S_2$.
    Hence we can see from the induction hypothesis that
    $\qpenv{\qphole{P}{R}} \totref^* S_1$
    and
    $\qpenv{\qphole{R}{Q}} \totref^* S_2$.
    We can thus construct~$S$ by first using \ruleref{H.seq} with the intermediate condition~$R$ to obtain $\qpenv{\qphole{P}{Q}} \totref \qpenv{\qphole{P}{R} ; \qphole{R}{Q}}$ and then using \ruleref{C.seqL} and \ruleref{C.seqR}, proving this case.
  \end{proofcase}

  \begin{proofcase}
    $S = \qpenv{\qprepeat{N}{\qpabsprog{S'}}}$:
    By assumption, we have $\totcorr \hoare{P}{\qpenv{\qprepeat{N}{\qpabsprog{S'}}}}{Q}$, which means $P \preceq (\Bracks{S'}^\dagger)^N (Q)$.
    Let us choose~$R_j = (\Bracks{S'}^\dagger)^{N - j}(Q)$ for $j \in \{0, \ldots, N\}$.
    On the one hand, this ensures that~$R_j = \Bracks{S'}^\dagger(R_{j+1})$ for every~$j \in \{0,\dots,N-1\}$, that is, $\totcorr \hoare{R_j}{S'}{R_{j+1}}$.
    Since~$S$ is a concrete program, so is~$S'$, hence we can see from the induction hypothesis that~$\qpenv{\qphole{R_j}{R_{j+1}}} \totref^* S'$.
    On the other hand, it also holds that~$P \imp R_0$ and $R_N = Q$.
    We can thus construct~$S$ by first using \ruleref{H.repeat} with the family~$\{R_j\}$ to obtain
    $\qpenv{\qphole{P}{Q}} \totref \qpenv{\qprepeat{N}{\qphole{R_j}{R_{j+1}}}}$
    and then using \ruleref{C.repeat}, proving this case.
  \end{proofcase}

  \begin{proofcase}
    $S = \qpenv{
      \qpselect
        {\{\omega_1 \colon M_{\omega_1}, \omega_2 \colon M_{\omega_2}, \ldots\}}
        {\q}
        {\omega_1 \colon \qpabsprog{S_{\omega_1}}, \ \omega_2 \colon \qpabsprog{S_{\omega_2}}, \ \ldots}
    }$, where \\ $\{M_\omega\}_{\omega\in\Omega}$ is a measurement on~$\mathcal H_q$ with outcomes in~$\Omega = \{\omega_1,\omega_2,\dots\}$:
    By assumption, we have $\totcorr~\hoare{P}{S}{Q}$, which means
    \begin{align}\label{eq:case compl interm tot}
      P
      \preceq \Bracks{S}^\dagger (Q)
      = \sum_{\omega\in\Omega} \cM_\omega \parens*{\Bracks{S_\omega}^\dagger(Q)}
    \end{align}
    Let us choose~$P_\omega = \Bracks{S_\omega}^\dagger(Q)$ for~$\omega\in\Omega$.
    On the one hand, this ensures that~$\totcorr \hoare{P_\omega}{S_\omega}{Q}$, hence we can see from the induction hypothesis that~$\qpenv{\qphole{P_\omega}{Q}} \totref^* S_\omega$ for every~$\omega\in\Omega$.
    On the other hand, by \cref{eq:case compl interm tot} it also holds that~$P \imp \sum_{\omega\in\Omega} \cM_\omega(P_\omega)$.
    We can thus construct~$S$ by first using \ruleref{H.case} to obtain
    $\qpenv{\qphole{P}{Q}} \totref \qpenv{\qpselect {\vec M} {\q} { \{\omega \colon \qphole{P_\omega}{Q}\}_{\omega\in\Omega}}}$
    and then using \ruleref{C.case} for each~$\omega\in\Omega$, proving this case.
  \end{proofcase}

  \begin{proofcase}
    $S = \qpenv{\qpwhile{B}{\q}{\qpabsprog{C}}}$:
    By assumption, we have~$\totcorr \hoare{P}{S}{Q}$,
    which means that~$P \imp \Bracks{S}^\dagger(Q)$.
    Recall that the semantics of the loop is given by
    \begin{align}\label{eq:def S_n}
      \Bracks{S} = \lim_{n\to\infty} \cS_n,
    \quad\text{where}\quad
      \cS_n = \sum_{k=0}^n \cB_{0,\q} \circ (\Bracks{C} \circ \cB_{1,\q})^k.
    \end{align}
    Let us choose~$R_0 := 0$ and~$R_{n+1} := \Bracks{C}^\dagger(\cS_n^\dagger(Q))$ for all~$n\in\N$.
    On the one hand, using the relation~$\cS_n = \cB_{0,\q} + \cS_{n-1} \circ \Bracks{C} \circ \cB_{1,\q}$, we get
    \begin{align*}
      R_{n+1}
    &= \Bracks{C}^\dagger(\cS_n^\dagger(Q))
    \\ &= \Bracks{C}^\dagger\parens*{ \cB_{0,\q}(Q) + \cB_{1,\q}(\Bracks{C}^\dagger(\cS_{n-1}^\dagger(Q))) }
    \\ &= \Bracks{C}^\dagger\parens*{ \cB_{0,\q}(Q) + \cB_{1,\q}(R_n) },
    \end{align*}
    and this also holds for~$n=0$ if we set~$\cS_{-1}=0$.
    Thus, $\totcorr \hoare{R_{n+1}}{C}{\cB_{0,\q}(Q) + \cB_{1,\q}(R_n)}$, so we find~$\qpenv{\qphole{R_{n+1}}{\cB_{0,\q}(Q) + \cB_{1,\q}(R_n)}} \totref^* C$ from the induction hypothesis.
    On the other hand, it is clear from its definition and \cref{eq:def S_n} that the sequence~$\{R_n\}_{n\in\N}$ is weakly increasing (that is, $R_n \imp R_{n+1}$ for all~$n\in\N$).
    Moreover, using the recurrence in \cref{eq:while loop recurrence} (which is of course simply the limit of the relation used above), the limit~$R = \lim_{n\to\infty} R_n = \Bracks{C}^\dagger(\Bracks{S}^\dagger(Q))$ of the sequence satisfies
    \begin{align*}
      P
    \preceq \Bracks{S}^\dagger(Q)
    = \parens*{ \cB_{0,\vec q} + \cB_{1,\vec q} \circ \Bracks{C}^\dagger \circ \Bracks{S}^\dagger }(Q)
    = \cB_{0,\q}(Q) + \cB_{1,\q}(R).
    \end{align*}
    Thus, $P \imp \cB_{0,\q}(Q) + \cB_{1,\q}(R)$.
    We can thus construct~$S$ by first using the refinement rule \ruleref{HT.while} with the above family~$\{R_n\}$,
    to obtain
    \[
    \qpenv{\qphole{P}{Q}} \totref \qpenv{\qpwhile{B}{\q}{\qphole{R_{n + 1}}{\cB_{0,\q}(Q) + \cB_{1,\q}(R_n)}}},
    \]
    and then \ruleref{C.while}, proving this case.
  \end{proofcase}

\mymedskip

This concludes the structural induction and hence the proof of the theorem.
\end{proof}

\section{Example: Boosting Success Probability}
\label{app:examples:boosting}

In this appendix we prove \cref{thm:boosting}.


\thmboostingsuccessprobability*

\begin{proof}

\begin{proofcase}\ruleref{H.boostRep}
We show that it holds by successively refining the left-hand side into the right-hand side.
To this end we first refine as follows,
\begin{align*}
  \qpenv{\qphole{pI, I}{Q_\q, I}}
\totref \qpenv{ \qprepeat{N}{ \underbrace{\qphole{R_j, I}{R_{j+1}, I} }_{S_\text{body}} } }, \ruletag{H.repeat}
\end{align*}
where we choose $N = \ceil{\log_{1-\eps} (1 - p)}$ and
\[
  R_j
= (1 - \eps)^{N-j} Q_\q + \parens[\big]{ 1 - (1 - \eps)^{N-j} } I
\]
for $j\in\{0,\dots,N\}$.
The application of \ruleref{H.repeat} is valid since $R_N = Q_\q$, while $pI \imp R_0$ holds because~$p \leq 1 - (1 - \eps)^{N}$ by our choice of~$N$.
Next, since we only want to apply the subroutine if the postcondition does not already hold, we refine the loop body by measuring the postcondition:
\begin{align*}
  S_\text{body}
\totref \qpenv{ \qpif{I - Q}{\vec q}{ \underbrace{ \qphole{P_j, I - Q_\q}{R_{j+1}, I} }_{S_\text{then}} } } \ruletag{H.if}
\end{align*}
which is valid if we take
\begin{align*}
  P_j
= (I - Q_\q) R_j (I - Q_\q)
= \parens[\big]{ 1 - (1 - \eps)^{N-j} } (I - Q_\q).
\end{align*}
We can obtain the desired program by splitting up this specification in the following way:
\begin{align*}
  S_\text{then}
\totref\ &\qpenv{\qphole{\eps(I - Q_\q), I - Q_\q, I - Q_\q}{Q_\q, I, I}} \ruletag{HP.split} \\
=\ &\qpenv{\qphole{\eps(I - Q_\q), I - Q_\q}{Q_\q, I}}
\end{align*}
Indeed, this refinement can be applied since
\begin{align*}
  P_j
&= \parens[\big]{ 1 - (1 - \eps)^{N-j} } (I - Q_\q) \\
&= \parens[\big]{ 1 - (1 - \eps)^{N-j-1} (1 - \eps) } (I - Q_\q) \\
&= (1 - \eps)^{N-j-1} \eps (I - Q_\q) + \parens[\big]{ 1 - (1 - \eps)^{N-j-1} } (I - Q_\q),
\intertext{while}
  R_{j+1}
&= (1 - \eps)^{N-j-1} Q_\q + \parens[\big]{ 1 - (1 - \eps)^{N-j-1} } I.
\end{align*}
so \ruleref{HP.split} can be applied with probabilities~$(1 - \eps)^{N-j-1}$ and~$1-(1 - \eps)^{N-j-1}$.
\end{proofcase}

\begin{proofcase}\ruleref{H.boostWhile}
We can apply the rule \ruleref{HT.while} with the binary measurement~$B=I-Q$ and the sequence~$\{R_n\}_{n\in\N}$ defined by
\[
  R_n = \parens*{1 - \parens*{ 1 - \eps }^n} \parens*{I - Q_\q}
\]
for all~$n\in\N$, 
to obtain
\begin{align*}
  \qpenv{\qphole{I}{Q_\q}}
  \totref
  \qpenv{
  \begin{aligned}
      &\qpwhile{I - Q}{\q}{\\
      &\quad \underbrace{\qphole{R_{n+1}}{Q_\q + R_n}}_{S_\text{body}} \\
      &}
  \end{aligned}
  }.
  \ruletag{HT.while}
\end{align*}
To see that the application of \ruleref{HT.while} is valid, we observe that~$R_0 = 0$, $R_n \imp R_{n+1}$ for all~$n\in\N$, and the limit~$R = \lim_{n\to\infty} R_n = I - Q_\q$ satisfies~$I \imp \cB_{0,\q}(Q_\q) + \cB_{1,\q}(R) = Q_\q + (I - Q_\q) = I$, where we used that~$Q_\q$ and~$I - Q_\q$ are projections.
To see that it yields the desired specification on the loop body, note that~$\cB_{0,\q}(Q_\q) + \cB_{1,\q}(R_n) = Q_\q + R_n$.

Now observe that we can rewrite the pre- and postcondition of the loop body in the following way:
\begin{align*}
  R_{n+1}
= \parens*{1 - \parens*{ 1 - \eps }^{n+1}} \parens*{I - Q_\q}
&= \parens*{ 1 - \eps }^n \eps \parens*{I - Q_\q}
 + \parens*{1 - \parens*{ 1 - \eps }^n } \parens*{I - Q_\q},
\\
  Q_\q + R_n
= Q_\q + \parens*{1 - \parens*{ 1 - \eps }^n} \parens*{I - Q_\q}
&= \parens*{ 1 - \eps }^n Q_\q + \parens*{1 - \parens*{ 1 - \eps }^n} I.
\end{align*}
Thus we see that we can refine the loop body using
\ruleref{HT.split}, with weights $\{ \parens*{ 1 - \eps }^n, \parens*{1 - \parens*{ 1 - \eps }^n} \}$, to obtain the desired result:
\begin{align*}
  S_\text{body} \totref \qpenv{\qphole{\eps(I - Q_\q), I - Q_\q}{Q_\q, I}}
  \ruletag{HT.split}
\end{align*}
This concludes the proof of the theorem.

\end{proofcase}

\end{proof}

\section{Example: Quantum Coin Toss until Zero}
\label{app:coin-toss-till-zero}
In this section we continue the discussion of our running example from \cref{sec:prelims}.
To keep the discussion self-contained, we first restate the program as well as its semantics and specification, as given in \cref{new:ex:coin-toss:program,new:ex:coin-toss:semantics,new:ex:coin-toss:spec}.
Then we show how QbC can be used to construct programs that meet this specification.

\paragraph{Program}
Consider the following algorithm:
Initialize a qubit in the $\ket0$ state.
Repeatedly apply the Hadamard gate and measure in the standard basis until the outcome~``0'' is seen.
This can be realized by the following program in the quantum while language~(\cref{def:qwhile-syntax}):
\begin{equation}\label{ex:coin-toss:program}
  \ctprog =
  \qpenv{
  \begin{aligned}
    &\qpinitS{q} ;  \\
    &\qpunitary{q}{H} ; \\
    &\qpwhileB{q}{\\
      &\quad\qpunitary{q}{H}\\&
    }
  \end{aligned}
  }
\end{equation}

\paragraph{Semantics}
We can use \cref{def:semantics} to compute the semantics for the above program~(\cref{ex:coin-toss:program}).
For any state $\rho \in \dH$,
\begin{align*}
  \Bracks{{\ctprog}}(\rho)
  &=\Bracks{\qpenv{
      \qpinitS{q} ;
      \qpunitary{q}{H} ;
      \qpwhileB{q}{\qpunitary{q}{H}}
  }}(\rho) \\
  &= \Bracks{\qpenv{\qpwhileB{q}{\qpunitary{q}{H}}}} \parens[\big]{\Bracks{\qpenv{\qpunitary{q}{H}}} (\Bracks{\qpenv{\qpinitS{q}}}(\rho))} \\
  &= \Bracks{\qpenv{\qpwhileB{q}{\qpunitary{q}{H}}}} \parens[\big]{\Bracks{\qpenv{\qpunitary{q}{H}}} (\proj0)} \\
  &= \Bracks{\qpenv{\qpwhileB{q}{\qpunitary{q}{H}}}} \parens*{ \proj{+} }.
\end{align*}
The semantics of the loop is for a general state~$\sigma \in \dHle$ given by
\begin{align*}
  \Bracks{\qpenv{\qpwhileB{q}{\qpunitary{q}{H}}}}(\sigma)
  &= \sum_{k = 0}^{\infty} \parens*{\cB_0 \circ (\Bracks{\qpenv{\qpunitary{q}{H}}} \circ \cB_1)^k}(\sigma),
\end{align*}
where $\cB_j(\rho) = \proj{j}\!\rho\!\proj{j}$.
Now, for any state~$\sigma \in \dHle$, we have
\[ (\Bracks{\qpenv{\qpunitary{q}{H}}} \circ \cB_1)(\sigma) = \braket{1|\sigma|1} \proj-, \]
and therefore for any~$k \ge 1$,
$(\Bracks{\qpenv{\qpunitary{q}{H}}} \circ \cB_1)^k(\proj+) = \frac1{2^k} \proj-$.
Altogether, we find that
\begin{align*}
  \Bracks{\qpenv{\qpwhileB{q}{\qpunitary{q}{H}}}} \parens*{ \proj{+} }
  &= \cB_0(\proj+) + \sum_{k = 1}^\infty \cB_0\parens*{\frac1{2^k}\proj-} \\
  &= \frac12\proj0 + \sum_{k = 1}^\infty \frac1{2^{k+1}}\proj0 \\
  &= \proj0.
\end{align*}
Therefore the semantics of the coin toss until zero program~\eqref{ex:coin-toss:program} is given by
\begin{equation} \label{ex:coin-toss:semantics}
  \Bracks{\ctprog(\rho)} = \proj0
\end{equation}
for any initial state $\rho \in \dH$.
We see that no matter what state we start in, the program always terminates in the pure state~$\ket0$.

\paragraph{Hoare Logic Specification}
We now discuss a natural quantum Hoare triple~(\cref{subsec:quantum-hoare-logic}) and its correctness for our example.
One way to specify the behavior of the program~$\ctprog$ is by the Hoare triple
\begin{equation}\label{ex:coin-toss:spec}
  \hoare{I}{\ctprog}{\proj0}.
\end{equation}
As discussed, this states that the program terminates in the final state~$\ket0$.
We can verify explicitly that this Hoare triple program is totally correct.
Indeed, we saw in \cref{ex:coin-toss:semantics} that $\Bracks{\ctprog}(\rho) = \proj0$ for every state~$\rho\in\dH$, and hence
\begin{align*}
  \satisfies{\Bracks{\ctprog}(\rho)}{\proj0}
= \satisfies{\proj0}{\proj0}
= 1
= \tr\rho
= \satisfies{\rho}{I}
\end{align*}
for every state~$\rho\in\dH$.
This confirms the triple is totally correct.

\paragraph{Applying Refinement Rules}
We now illustrate how to apply refinement rules~(\cref{def:totref}).
Consider the following abstract program, which consists of a single hole, with pre- and postcondition as in the Hoare triple in \cref{ex:coin-toss:spec} for the quantum coin toss until zero program:
\begin{equation}\label{eq:orig toss until zero}
  \Spec_\text{toss-until-zero} = \qpenv{\qphole{I}{\proj0}}
\end{equation}
One possible refinement is:
\begin{align*}
  \qpenv{\qphole{I}{\proj0}} \totref \qpenv{\qpinitS{q}} \ruletag{H.init}
\end{align*}
The above refinement rule can be applied as the condition~$I \imp \sum_{x \in \{0,1\}} \ketbra{x}{0} \proj0 \ketbra{0}{x}$ is satisfied, because the RHS equals $I$.

However, this is not the only program one can construct satisfying it.
Instead, we can also use the sequence rule \ruleref{H.seq}:
\begin{align*}
  \qpenv{\qphole{I}{\proj0}}
  \totref
  \qpenv{\qphole{I}{I} ; \qphole{I}{\proj0}}
  \ruletag{H.seq}
\end{align*}
Now the program at hand is a sequence of two holes.
We can now refine the second hole as follows:
\[
  \qpenv{\qphole{I}{I} ; \qphole{I}{\proj0}}
  \totref
  \qpenv{\qphole{I}{I} ; \qpinitS{q}}
  \ruletag{H.init}
\]
We can achieve this by using \ruleref{C.seqR}, as we already know that $\qpenv{\qphole{I}{\proj0}} \totref \qpenv{\qpinitS{q}}$.
One can now continue and fill in the remaining hole arbitrarily.

\paragraph{Structured Specification}
On its own, \cref{eq:orig toss until zero} is not a very interesting specification, as much simpler programs than our original program~\eqref{ex:coin-toss:program} also satisfy it, e.g.,\ $\qpenv{\qpinit{q}}$.
However, we can refine the initial specification to arrive at a more structured one that naturally leads to our original program:
\begin{align*}
  \Spec_\text{toss-until-zero}
  &= \qpenv{\qphole{I}{\proj0}} \\
  &\totref
  \qpenv{
    {\qphole{I}{\proj+}} ;
    {\qphole{\proj+}{\proj0}}
  }
  \ruletag{H.seq} \\
  &\totref
  \qpenv{
    \begin{aligned}
    &\qphole{I}{\proj+} ; \\
    &\qpwhileB{q}{ \\
      &\quad\qphole
        {I - \frac1{2^n} \proj-}
        {I - \frac1{2^n} \proj1}
    \\&}
    \end{aligned}
  }
  \ruletag{HT.while} \\
  &\totref
  \qpenv{
    \begin{aligned}
    &\qphole{I}{\proj+} ; \\
    &\qpwhileB{q}{ \\
      &\quad\qphole
        {\proj+, \proj-}
        {\proj0, \proj1}
    \\&}
    \end{aligned}
  }
  \ruletag{HT.split}
\end{align*}
In the second step, we applied the rule \ruleref{HT.while} with the weakly increasing sequence defined by~$R_0 = 0$ and~$R_{n+1} = I - \frac1{2^n} \proj-$
, with limit~$R = \lim_{n\to\infty} R_n = I$.
This is allowed, since~$\proj+ \imp \cB_0(\proj0) + \cB_1(I) = I$, and it gives rise to desired loop body since
$\cB_0(\proj0) + \cB_1(R_n) = I - \frac1{2^n} \proj1$ for all~$n\in\N$;
$\cB$ denotes the standard basis measurement of qubit~$q$. 
In the third step, we applied \ruleref{HT.split} with weights $\{1, 1 - \frac1{2^n}\}$.
Thus we have arrived at the following more structured specification, which refines \cref{eq:orig toss until zero}:
\begin{align*}
  \Spec'_\text{toss-until-zero} = \qpenv{
    \begin{aligned}
    &\underbrace{\qphole{I}{\proj+}}_{S_\text{init}} ; \\
    &\qpwhileB{q}{ \\
      &\quad \underbrace{\qphole{\proj+, \proj-}{\proj0, \proj1}}_{S_\text{body}}
    \\&}
    \end{aligned}
  }
\end{align*}

\paragraph{Construction}
We now construct a program from this specification by refining the two remaining holes in $\Spec'_\text{toss-until-zero}$:
\begin{align*}
S_\text{init} =
&\totref \qpenv{\qphole{I}{\proj0} ; \qphole{\proj0}{\proj+}} \ruletag{H.seq} \\
&\totref \qpenv{\qpinitS{q}        ; \qphole{\proj0}{\proj+}} \ruletag{H.init} \\
&\totref \qpenv{\qpinitS{q}        ; \qpunitary{q}{H}       } \ruletag{H.unit}
\end{align*}
and
\begin{align*}
S_\text{body} =
\totref \qpenv{\qpunitary{q}{H}} \ruletag{H.unit}
\end{align*}
It is easy to see that the conditions required for these refinements are satisfied.
Putting it all together, we have re-constructed the program~\eqref{ex:coin-toss:program} from a specification:
\begin{gather*}
  \Spec_\text{toss-until-zero} \totref^*
  \Spec'_\text{toss-until-zero}
  \totref^*
  \qpenv{
  \begin{aligned}
    &\qpinitS{q} ;  \\
    &\qpunitary{q}{H} ; \\
    &\qpwhileB{q}{\\
      &\quad\qpunitary{q}{H}\\&
    }
  \end{aligned}
  }
  = \ctprog
\end{gather*}

\end{document}